\newtheorem{theorem}{Theorem}[section]
\newtheorem{proposition}[theorem]{Proposition}
\newtheorem{corollary}[theorem]{Corollary}
\newtheorem{lemma}[theorem]{Lemma}
\theoremstyle{definition}
\newtheorem{definition}[theorem]{Definition}
\newtheorem{remark}[theorem]{Remark}
\newcommand{\dx}{\mathrm{d}x}
\newcommand{\Imm}{\mathrm{Im\,}}
\newcommand{\supp}{\mathrm{supp\,}}
\begin{document}
%-----------------------------------------
\title{All self-adjoint extensions of the magnetic Laplacian in nonsmooth domains and gauge transformations}
 \author{C\'esar R. de Oliveira, Wagner Monteiro \\ \small{Departamento de Matem\'atica, UFSCar, S\~ao Carlos, SP, 13560-970 Brazil}} 
\maketitle

\

%-----------------------------------------
\abstract{We  use boundary triples to find a parametrization of all self-adjoint extensions of the magnetic Schr\"odinger operator,  in a quasi-convex domain~$\Omega$ with  compact boundary, and   magnetic potentials with components in $\textrm{W}^{1}_{\infty}(\overline{\Omega})$. This gives also a new characterization of all self-adjoint extensions of the Laplacian in nonregular domains.  Then we discuss gauge transformations for such self-adjoint extensions and generalize a characterization of the gauge equivalence of the Dirichlet magnetic operator for the Dirichlet Laplacian; the relation to the Aharonov-Bohm effect, including irregular solenoids, is also discussed.  In particular, in case of (bounded) quasi-convex domains it is shown that if  some extension  is unitarily equivalent (through the multiplication by a smooth unit function) to a realization with zero magnetic potential, then  the same occurs for all self-adjoint realizations. 
}

\

\

%-----------------------------------------
\tableofcontents

\

\

\section{Introduction}\label{capIntro}
Let the 1-form $A=\sum_{j=1}^n A_j\mathrm{d}x_j$ be a {\em magnetic potential}  in a subset of~$\mathbb R^n$~\cite{frenkelBook}.   The Schr\"odinger expression 
\[
H^{A}=(-i\nabla+{A})^{2} = \sum_{j=1}^n \left(-i\frac{\partial}{\partial x_j}+A_j\right )^2
\] (in appropriate units)  is the starting point of the description of the behavior of a quantum nonrelativistic particle in~$\Omega\subset\mathbb R^n$ under the influence of the {\em magnetic field} $B=\mathrm{d}A$; in the two and three dimensional cases, $A$ may be identified with a vector field and  $B=\mathrm {rot}\,  A$ (with just one component, i.e., a function, in two dimensions). Together with the Laplacian $H^0=-\Delta$, which corresponds to a zero magnetic potential, its self-adjoint realizations are among the most prominent operators in Mathematics and Physics.  The self-adjointness is a requirement for the operator to describe an energy observable in Quantum Mechanics.

Usually it is not a trivial step to classify all self-adjoint extensions of a given symmetric differential operator, especially in domains with boundary irregularities; even without mentioning that both deficiency indices are infinite in this situation. One of the main purpose of this work is to describe all self-adjoint extensions of~$H^A$ in a quasi-convex~\cite{ADO} open set $\Omega\subset \mathbb R^{n}$, $n\ge2$, with compact boundary~$\partial\Omega$,~${A}$ a vector field with components in $\textrm{W}^{1}_{\infty}(\overline{\Omega})$, and initial operator domain~$\mathrm{C}^{\infty}_{0}(\Omega)$ (see Theorem~\ref{extaaU}); by now it is enough to mention that the class of quasi-convex domains contains all convex domains and all domains of class~$\mathrm C^{1,r}$, for~$r > 1/2$. Another purpose is to apply this parametrization to study some properties of all such self-adjoint extensions under gauge transformations (Theorems~\ref{gauge} and~\ref{twii}). The first goal is achieved by following  ideas of~\cite{ADO}, where the authors deal with the Laplacian, but here such ideas are supplemented with the construction of  boundary triples. By restricting ourselves to suitable bounded magnetic potentials, we were able to keep the hypothesis of quasi-convexity, and additionally of considering unbounded~$\Omega$ with compact boundaries.

Particularly attractive is the case in which~$\Omega$ is not simply connected and the magnetic field is zero in~$\Omega$, but $A\ne0$. This opens the possibility for the Aharonov-Bohm effect~\cite{aharonovbohm, Tonomura1989,EFAB}, and  our results give the first description of all of its self-adjoint extensions as well, and for irregular solenoids (see~\cite{adamiTeta,dabrowskiStovicek} for the case $\Omega=\mathbb R^2\setminus\{0\}$). In~\cite{EFAB} there is an interesting characterization of the absence of this effect for the Dirichlet extension, that is, when the  Dirichlet extension is  unitarily equivalent to the case with~$A=0$ (i.e., the Dirichlet Laplacian);  by using gauge transformations, we have then a version of this result in all self-adjoint extensions, but for bounded and connected  (although irregular)  regions. By Theorem~\ref{twii}, one concludes that if one extension of~$H^A$ is gauge unitarily equivalent  to an extension of the Laplacian, then  the same occurs for all self-adjoint realizations, and this is independent of the spectral type of each realization. This seems to be the first proof of such physically expected phenomenon, and here for (bounded regions) and irregular quasi-convex domains.

The characterization of all self-adjoint extensions in the magnetic case, in quasi-convex domains, has some differences with respect to the Laplacian case originally discussed in~\cite{ADO}; besides the presence of the magnetic potential~$A$ in many estimates, the main differences are related to the Neumann trace map, which influences an integration by parts formula, and the introduction of the space~$N^{3/2}_{A}(\partial\Omega)$  (Definition~\ref{defN32A}) used in the extension of the modified Neumann trace to the domain of the maximal operator. It is interesting to note that, differently from~\cite{ADO}, we use boundary triples so that our parametrization of all self-adjoint extensions is in terms of unitary operators on the space~$N^{\frac12}(\partial\Omega)$, and the set of such unitary operators is independent of the magnetic potential~$A$ (see Theorem~\ref{extaaU}). We think this is a more transparent construction, particularly we use (bounded) unitary operators on  boundary spaces; moreover, we have a natural bijection among such extensions for different magnetic potentials (see Remark~\ref{remarkBijecU}), and this discussion is not restricted to bounded~$\Omega$ (although its boundary is supposed to be compact). Complementary to~\cite{ADO,EDOL}, for the case~$A=0$ we have got a new parametrization of all self-adjoint extensions of the Laplacian in (possibly unbounded) quasi-convex domains.

 In~\cite{ACON}, there is, in particular, a characterization of all self-adjoint extensions of  minimal symmetric elliptic differential operators of even-order in~$\mathrm{L}^2(\Omega)$,  for smooth~$\Omega$ (see also~\cite{posilicano} for a general parametrization that reduces to results in~\cite{ACON} ).  For an interesting discussion about the motivations for considering quasi-convex domains, the differences from the approach of~\cite{ACON} and its relations to the Laplacian, see the Introduction of~\cite{ADO}, which is our primary reference for the discussion of extensions.  Here we just mention some points. For smooth~$\Omega$, the domains of the  Dirichlet and Neumann Laplacians are subspaces of~$H^2(\Omega)$, whereas for Lipschitz~$\Omega$ the domain of the Dirichlet Laplacian is a subspace of~$H^{\frac32}(\Omega)$, and this can not be improved in general (one needs an additional effort to get similar results for  some quasi-convex domains).  For Lipschitz domains the range of the combined Dirichlet and Neumann traces, defined on~$H^2(\Omega)$, is not a Cartesian product of boundary Sobolev spaces (see Theorem~\ref{gamma2}).  By using the concept of {\it almost boundary triples}, in~\cite{EDOL} the authors have found a parametrization of the family of all self-adjoint extensions of the minimal Laplacian in  Lipschitz domains. Although more general than the quasi-convex domain case, the cost for the larger generality is a more abstract construction, and so more  difficult to work with in applications; for instance, as already mentioned, the domain of the Dirichlet and Neumann Laplace operators  are not contained in~$H^{2}(\Omega)$, and this regularity  is fundamental in some explicit calculations in the quasi-convex  case. 
  
  The concept of quasi-convexity  is a balance that has permitted a characterization of all self-adjoint extensions of the Laplacian, including the magnetic one, which is not too abstract; hence we  restrict ourselves to the quasi-convex case in this work.

  In Chapter~\ref{cap2} we  recall some basics facts about Sobolev spaces in Lipschitz domains and  Dirichlet trace on Sobolev spaces; many notations are introduced. In Chapter~\ref{cap3} we will introduce the maximal and minimal magnetic operators and show how they are related to each other, and then state some integration by parts formulas related to~$H^{A}$, as well as some density results that will be important in the rest of this work. In Chapter~\ref{cap4} we will extend the magnetic Dirichlet and  Neumann trace operators to the domain of the maximal operator; the range of these operators are the important spaces $N^{1/2}(\partial\Omega)$ and $N^{3/2}_{A}(\partial\Omega)$, respectively. We also briefly review the concept of quasi-convex domains and present one regularity result about the domain of the Dirichlet extension; this is a slight generalization of a regularity result obtained in~\cite{ADO}. In Chapter~\ref{cap7} we briefly review the concept of boundary triples and use it to obtain a parametrization of the family of all self-adjoint extensions of~$H^A$ in a quasi-convex domain; then we use this parametrization to set some results about gauge equivalence of self-adjoint extensions corresponding to the two operators $H^{A},H^{B}$, where~$A$ and~$B$ are two  gauge equivalent magnetic potentials and have components in $\textrm{W}^{1}_{\infty}(\overline{\Omega})$. Applications to the Aharonov-Bohm setting also appear in this section.

\

\noindent {\bf Acknowledgments:} CRdO thanks the partial support by CNPq (Brazilian agency, under contract 303503/2018-1), and WM was supported by CAPES (Brazilian agency).

\

\section{Basics of Sobolev spaces on Lipschitz domains}\label{cap2}
 In this section we  recall some basics facts and notations about Sobolev spaces necessary for this work, including the notions of Dirichlet and magnetic Neumann trace operators. For more details, definitions and  proofs see, for example, \cite{STS}.
 
\subsection{Sobolev spaces and Lipschitz domains}
  An open set~$\Omega$ of $\mathbb R^{n}$, with $n\geq 2$, is said to be a Lipschitz domain if there exists an open cover $\{O_{i}\}_{0\leq i\leq k}$, of its boundary $\partial\Omega$, such that for $i\in \{1,...,k\}$, $O_{i}\cap\Omega$ is equal to the part of $O_{i}$  below the graph of a Lipschitz function $\varphi_{i}:\mathbb R^{n-1}\rightarrow \mathbb{R}$ (considered, possibly, in a coordinate system obtained by a rigid motion). In a similar way we can define a domain of class $\mathrm{C}^{1,r}$, the only difference is that the functions~$\varphi_{i}$  are supposed  of class~$\mathrm{C}^{1,r}$.

Given an open set~$\Omega$ of $\mathbb R^{n}$ and  $s\in \mathbb{R}$, we denote by~$H^{s}(\Omega)$ the corresponding Sobolev space~\cite{STS}. For the same open set we can introduce the  space
\begin{equation*}
H^{s}_{0}(\Omega) =\left\{ u\in H^{s}(\mathbb R^{n})|\; \supp(u) \subseteq \overline{\Omega} \right\}, \quad s\in \mathbb{R},
\end{equation*}
which is equipped with the norm induced by $H^{s}(\mathbb R^{n})$. We also introduce the following spaces,
\begin{equation*} 
\dot{H}^{s}(\Omega)= \overline{\mathrm{C}^{\infty}_{0}(\Omega)}\quad \mathrm{in} \quad H^{s}(\Omega)
\end{equation*}
and
\begin{equation*} 
\tilde{H}^{s}_{0}(\Omega)= \left\{u \in H^{s}(\Omega)|\;u=U_{|\Omega}\quad\mathrm{with}\quad U \in H^{s}_{0}(\Omega)\right\}\,.
\end{equation*}
It is possible to prove that, for any open set $ \Omega$, we have\footnote{In this work the notation $X^{*}$ will be always used to denote the adjoint space of~$X$, that is , the space of continuous antilinear functionals of X.}
\begin{equation}\label{isob}
(H^{s}_{0}(\Omega))^{*}=H^{-s}(\Omega)\quad \mathrm{with}\quad s \in \mathbb{R}
\end{equation}
and
\begin{equation}\label{isob2}
(H^{s}(\Omega))^{*}=H^{-s}_{0}(\Omega)\quad \mathrm{with}\quad s \in \mathbb{R}\,.
\end{equation}
 More precisely,  we are identifying an element $u\in H^{-s}(\Omega)$ with the anti-linear functional $f_{u}$ in $H^{s}_{0}(\Omega)$ given by 
\[
 f_{u}(v)=\int_{R^{n}}\overline{v}Ud^{n}x, 
\]
where $u=U|_{\Omega}$ , $U\in H^{-s}(R^{n})$ and $v \in H^{s}_{0}(\Omega)$. A similar identification is made in~\eqref{isob2}.
 
It is also possible to prove that, if $\Omega$ is a Lipschitz domain with compact boundary, we have
\begin{equation*}\label{e1}
\dot{H}^{s}(\Omega)=\tilde{H}^{s}_{0}(\Omega)\quad \mathrm{for}\quad s>-\frac{1}{2},\quad s\in \mathbb{R}\setminus\left\{\frac{1}{2}+N\right\},\quad N\in\mathbb N,
\end{equation*}
and
\begin{equation*}\label{e2}
(H^{s}(\Omega))^{*}=H^{-s}(\Omega)\quad \mathrm{with}\quad -\frac{1}{2}<s<\frac{1}{2}\,.
\end{equation*}

For a Lipschitz domain with compact boundary it is possible to define the following Sobolev spaces over the boundary  $\partial\Omega$,
\begin{equation*}
H^{s}(\partial\Omega)\quad\mathrm{with}\quad -1\leq s\leq 1;
\end{equation*}
again see~\cite{STS} for the definitions, and the following holds,
\begin{equation*}
(H^{s}(\partial\Omega))^{*}=H^{-s}(\partial\Omega)\quad \mathrm{with}\quad -1\leq s\leq 1.
\end{equation*}

 Recall that for a Lipschitz domain~$\Omega$ it is possible to define, in almost all points of the boundary~$\partial\Omega$, a unit normal vector field $\nu=(\nu_{1},...,\nu_{n})$ pointing outward to~$\Omega$. Let, then,~$\Omega$ be a Lipschitz domain with compact boundary; for each function~$g$ of class~$\mathrm{C}^{1}$ in a neighborhood of~$\partial\Omega$ we consider 
\begin{equation*}
\frac{\partial g}{\partial\tau_{j,k}}=\nu_{i}(\partial_{k}g)|_{\partial\Omega}-\nu_{k}(\partial_{i}g)|_{\partial\Omega}.
\end{equation*}
Given a function $f\in \mathrm{L}^{1}(\partial\Omega)$, consider the functional $\frac{\partial f}{\partial\tau_{j,k}}$ given by
\begin{equation*}
\frac{\partial f}{\partial\tau_{j,k}}:g\in \mathrm{C}^{1}(\partial\Omega)\rightarrow \int_{\partial\Omega}\mathrm{d}^{n-1}\omega\;  f\frac{\partial g}{\partial\tau_{k,j}}\,,
\end{equation*}
where $\mathrm{d}^{n-1}\omega$ is the surface measure on~$\partial\Omega$ (it is well defined by Rademacher Theorem). It is possible to show that for $s\in[0,1]$ the operator $\frac{\partial }{\partial\tau_{j,k}}$ maps $H^{s}(\partial\Omega)$ into $H^{s-1}(\partial\Omega)$ continuously. Furthermore, in~\cite{GRB} one finds the proofs of the following lemmas:
\begin{lemma}
Let~$\Omega$ be a Lipschitz domain with compact boundary. Then for $s\in[0,1]$ we have
\begin{equation*}
H^{s}(\partial\Omega)=\left\{f\in \mathrm{L}^{2}(\partial\Omega,\mathrm{d}^{n-1}\omega)\Big|\;\frac{\partial f}{\partial\tau_{j,k}}\in H^{s-1}(\partial\Omega), 1\leq j,k\leq n\right\}
\end{equation*}
and
\begin{equation*}
\parallel f\parallel_{H^{s}(\partial\Omega)} \approx \parallel f\parallel_{\mathrm{L}^{2}(\partial\Omega,\mathrm{d}^{n-1}\omega)}+\sum_{j,k=1}^{n}\parallel\frac{\partial f}{\partial\tau_{j,k}}\parallel_{ H^{s-1}(\partial\Omega)}.
\end{equation*}
\end{lemma}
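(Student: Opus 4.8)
The plan is to split the proof according to the value of $s$, treating the endpoints $s=0,1$ directly and the fractional range by interpolation. Note first that one inclusion is free: if $f\in H^{s}(\partial\Omega)$ then $f\in \mathrm{L}^{2}(\partial\Omega)$ and, by the continuity of $\partial/\partial\tau_{j,k}\colon H^{s}(\partial\Omega)\to H^{s-1}(\partial\Omega)$ recalled above, $\partial f/\partial\tau_{j,k}\in H^{s-1}(\partial\Omega)$ with $\|\partial f/\partial\tau_{j,k}\|_{H^{s-1}(\partial\Omega)}\lesssim\|f\|_{H^{s}(\partial\Omega)}$; hence the right-hand side contains $H^{s}(\partial\Omega)$ and dominates its norm. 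So the whole content is the reverse inclusion together with the reverse estimate. Write $\mathcal D^{s}$ for the space on the right-hand side, normed by $\|f\|_{\mathrm{L}^{2}(\partial\Omega)}+\sum_{j,k}\|\partial f/\partial\tau_{j,k}\|_{H^{s-1}(\partial\Omega)}$.

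The two endpoints are handled without interpolation. For $s=0$ there is nothing to prove: both sides equal $\mathrm{L}^{2}(\partial\Omega)$, since for $f\in \mathrm{L}^{2}(\partial\Omega)$ one already has $\partial f/\partial\tau_{j,k}\in H^{-1}(\partial\Omega)$ by the $s=0$ case of the stated continuity. For $s=1$ the plan is first to establish the pointwise identity
\begin{equation*}
\sum_{j,k=1}^{n}\Big|\frac{\partial f}{\partial\tau_{j,k}}\Big|^{2}=2\,|\nabla_{\partial\Omega}f|^{2}\qquad\text{a.e. on }\partial\Omega ,
\end{equation*}
which follows from $|\nu|=1$ and the tangency relation $\sum_{j}\nu_{j}(\nabla_{\partial\Omega}f)_{j}=0$ (the cross terms cancel), and then to combine it with the standard chart description $\|f\|_{H^{1}(\partial\Omega)}\approx\|f\|_{\mathrm{L}^{2}(\partial\Omega)}+\|\nabla_{\partial\Omega}f\|_{\mathrm{L}^{2}(\partial\Omega)}$. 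For the reverse inclusion at $s=1$ one localizes by a smooth partition of unity subordinate to $\{O_{i}\}$ and pulls back along the bi-Lipschitz graph maps $\kappa_{i}(x')=(x',\varphi_{i}(x'))$; the key local computation is
\begin{equation*}
\partial_{m}\big(f\circ\kappa_{i}\big)=\sqrt{1+|\nabla'\varphi_{i}|^{2}}\;\Big(\big(\partial f/\partial\tau_{m,n}\big)\circ\kappa_{i}\Big),\qquad 1\le m\le n-1 ,
\end{equation*}
and since $\sqrt{1+|\nabla'\varphi_{i}|^{2}}\in\mathrm{L}^{\infty}$ is bounded and bounded away from $0$, multiplication by it is harmless on $\mathrm{L}^{2}$; the cutoff commutators produced by the Leibniz rule are also in $\mathrm{L}^{2}$. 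Hence $\partial f/\partial\tau_{m,n}\in \mathrm{L}^{2}(\partial\Omega)$ for all $m$ forces $f\circ\kappa_{i}\in H^{1}(\mathbb R^{n-1})$, i.e.\ $f\in H^{1}(\partial\Omega)$, with the correct norm control.

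For the fractional range $s\in(0,1)$ I would argue by interpolation off the two endpoints, using that $\{H^{s}(\partial\Omega)\}_{|s|\le 1}$ is an interpolation scale for a compact Lipschitz boundary, so that $H^{s}(\partial\Omega)=[\mathrm{L}^{2}(\partial\Omega),H^{1}(\partial\Omega)]_{s}$ and $H^{s-1}(\partial\Omega)=[H^{-1}(\partial\Omega),\mathrm{L}^{2}(\partial\Omega)]_{s}$. Viewing $f\mapsto\big(f,(\partial f/\partial\tau_{j,k})_{j<k}\big)$ as an isometric embedding of $\mathcal D^{s}$ onto a closed subspace of $\mathrm{L}^{2}(\partial\Omega)\oplus\bigoplus_{j<k}H^{s-1}(\partial\Omega)$, it suffices to show that this subspace is a retract of the ambient space \emph{uniformly in $s$}; the natural candidate for the retraction is
\begin{equation*}
\big(g_{0},(g_{j,k})_{j<k}\big)\longmapsto (1-\Delta_{\partial\Omega})^{-1}\Big(g_{0}+\sum_{j<k}\Big(\tfrac{\partial}{\partial\tau_{j,k}}\Big)^{\!*}g_{j,k}\Big),
\end{equation*}
where $1-\Delta_{\partial\Omega}=1+\tfrac12\sum_{j,k}(\partial/\partial\tau_{j,k})^{*}(\partial/\partial\tau_{j,k})$ is the shifted Laplace--Beltrami operator (well defined on a Lipschitz boundary via the form $\langle f,g\rangle_{\mathrm{L}^{2}}+\langle\nabla_{\partial\Omega}f,\nabla_{\partial\Omega}g\rangle_{\mathrm{L}^{2}}$ and the pointwise identity above). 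Once the uniform boundedness of this map into the correct rung of the scale is checked, the retract property yields $[\mathcal D^{0},\mathcal D^{1}]_{s}=\mathcal D^{s}$, and together with $\mathcal D^{0}=H^{0}(\partial\Omega)$, $\mathcal D^{1}=H^{1}(\partial\Omega)$ from the previous step one gets $\mathcal D^{s}=[H^{0}(\partial\Omega),H^{1}(\partial\Omega)]_{s}=H^{s}(\partial\Omega)$ with equivalent norms. Alternatively, the reverse inclusion $\mathcal D^{s}\subseteq H^{s}(\partial\Omega)$ can be obtained directly by a Littlewood--Paley or mollification argument on $\partial\Omega$, in which $\partial/\partial\tau_{j,k}$ acts, modulo lower-order commutators, like multiplication by the dyadic frequency.

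The main obstacle is precisely the fractional case. A naive chart-by-chart reduction fails there: in the displayed local identity one is forced to multiply by the weight $\sqrt{1+|\nabla'\varphi_{i}|^{2}}$, which is only in $\mathrm{L}^{\infty}$ and hence is \emph{not} a pointwise multiplier of $H^{s-1}(\mathbb R^{n-1})$ when $s\ne1$; moreover the elliptic smoothing of $(1-\Delta_{\partial\Omega})^{-1}$ does not improve regularity beyond $H^{1}$ on a merely Lipschitz boundary, so there is no "regularity up the scale" shortcut. Thus the fractional identity genuinely needs the global interpolation detour, and the delicate point is making the two scales line up — verifying the retract/compatibility property above, equivalently the uniform-in-$s$ mapping properties of the reconstruction operator — which is where the real work lies.
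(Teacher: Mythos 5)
The paper does not prove this lemma at all: it is quoted verbatim from the reference \cite{GRB} (Gesztesy--Mitrea), so there is no in-paper argument to compare yours against; your proposal has to stand on its own. The easy inclusion, the endpoint $s=0$, and the endpoint $s=1$ (the pointwise identity $\sum_{j,k}|\partial f/\partial\tau_{j,k}|^{2}=2|\nabla_{\mathrm{tan}}f|^{2}$ plus the chart computation $\partial_{m}(f\circ\kappa_{i})=\mp\sqrt{1+|\nabla'\varphi_{i}|^{2}}\,(\partial f/\partial\tau_{m,n})\circ\kappa_{i}$) are correct and complete modulo routine checks.

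The genuine gap is the fractional range $s\in(0,1)$, which is the entire content of the lemma beyond the endpoints, and your proposed retraction does not close it. For $g_{j,k}\in H^{s-1}(\partial\Omega)$ with $s<1$, the term $(\partial/\partial\tau_{j,k})^{*}g_{j,k}$ lives (at best) in $H^{s-2}(\partial\Omega)$, which for $s<1$ falls outside the range $-1\le t\le 1$ on which boundary Sobolev spaces are even defined on a Lipschitz surface; and to land back in $H^{s}(\partial\Omega)$ the resolvent $(1-\Delta_{\partial\Omega})^{-1}$ would have to gain two derivatives, whereas on a merely Lipschitz boundary it is bounded only as $H^{-1}(\partial\Omega)\to H^{1}(\partial\Omega)$ (the form-domain gain) -- a limitation you yourself point out two sentences earlier. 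So the map you write down is not a bounded retraction onto the image of $f\mapsto(f,(\partial f/\partial\tau_{j,k}))$ at any intermediate rung, and without a bounded common retraction the identity $[\mathcal D^{0},\mathcal D^{1}]_{s}=\mathcal D^{s}$ does not follow: complex interpolation of a closed subspace is \emph{not} the intersection of the subspace with the interpolation space in general, which is exactly why the retract hypothesis is needed. Your fallback (Littlewood--Paley with ``lower-order commutators'') is no better off, since the coefficients $\nu_{j}$ are only $\mathrm{L}^{\infty}$ and such commutators are not lower order on a Lipschitz boundary. Your closing paragraph concedes that this compatibility check is ``where the real work lies''; since that work is not done, the proof is incomplete precisely at its essential step, and the honest route is to do what the paper does and invoke \cite{GRB}, where the fractional case is established by a genuinely different (duality and extension/restriction) argument.
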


With those concepts in mind, we can introduce,
\begin{equation*}
\nabla_{\mathrm{\mathrm{tan}}}:H^{1}(\partial\Omega)\rightarrow \mathrm{L}^{2}_{\mathrm{tan}}(\partial\Omega,\mathrm{d}^{n-1}\omega),
\end{equation*}
\begin{equation*}
\nabla_{\mathrm{tan}}=\left(\sum_{k=1}^{n}\nu_{k}\frac{\partial }{\partial\tau_{k,1}},...,\sum_{k=1}^{n}\nu_{k}\frac{\partial }{\partial\tau_{k,n}}\right)
\end{equation*}
where,
\begin{eqnarray*}
\mathrm{L}^{2}_{\mathrm{tan}}(\partial\Omega,\mathrm{d}^{n-1}\omega)&=&\big\{ f=(f_{1},...,f_{n})|\;f_{i}\in \mathrm{L}^{2}(\partial\Omega,\mathrm{d}^{n-1}\omega), \;i=1,...,n,\quad\\
  & & \nu\cdot f=0\quad \omega{\text{-}} \mathrm{a.s.\; in}\;\partial\Omega\big\}. 
\end{eqnarray*}
We will also make use of the following notation, the duality pairing between $H^{s}(\Omega)$ and $(H^{s}(\Omega))^{*}$,
\begin{equation*}
\langle\cdot,\cdot\rangle_{s}:H^{s}(\Omega)\times H^{s}(\Omega)^{*} \rightarrow \mathbb C.
\end{equation*}

Let  $\textrm{W}^{q}_{\infty}(\overline{\Omega})$ denote  the usual Sobolev space with $q\in\mathbb N$, that is, the set of the restrictions $u=G_{|\Omega}$ of $G\in \textrm{W}^{q}_{\infty}(\mathbb{R}^{n})$  equipped with the norm 
\[
\|u\|_{\textrm{W}^{q}_{\infty}(\overline{\Omega})}=\inf_{G_{|\Omega}=u}\|G\|_{\textrm{W}^{q}_{\infty}(\mathbb{R}^{n})}.
\] It is known that  $u\in\textrm{W}^{1}_{\infty}(\mathbb{R}^{n})$ if, and only if, there is~$K>0$, that depends on~$u$, such that $u$ is a bounded locally $K$-Lipschitz function, in particular it follows that, if $u\in \textrm{W}^{1}_{\infty}(\overline{\Omega})$, then~$u$ is the restriction of a bounded locally $K$-Lipschitz function  defined on~$\mathbb{R}^{n}$.

\begin{remark}\label{rem3}
Note that if   $f\in\textrm{W}^{q}_{\infty}(\overline{\Omega})$, then the operator of multiplication by~$f$, $ M_{f}$, maps $H^{q}(\Omega)$ into itself, moreover, $M_{f}|_{H^{q}(\Omega)}:H^{q}(\Omega)\rightarrow H^{q}(\Omega)$ is bounded. This  result can be seen as particular case of Theorem~3.2 of~\cite{SRAC}.
\end{remark}

\subsection{Dirichlet and  magnetic Neumann  trace operators}

Let~$\Omega$ be  a Lipschitz domain with compact boundary, then the  Dirichlet trace operator
\begin{equation*}
\gamma_{\mathrm{D}}^{0}:\mathrm{C}(\overline{\Omega})\rightarrow\partial\Omega,\quad \gamma_{\mathrm{D}}^{0}u=u|_{\partial\Omega}
\end{equation*}
has the following extensions
\begin{eqnarray}\label{trd}
\gamma_{\mathrm{D}}&:&H^{s}(\Omega)\rightarrow H^{s-\frac{1}{2}}(\partial\Omega),\quad \frac{1}{2}<s<\frac{3}{2}\,,
\\\gamma_{\mathrm{D}}&:&H^{\frac{3}{2}}(\Omega)\rightarrow H^{1-\epsilon}(\partial\Omega),\quad \forall \epsilon \in(0,1)\,,
\end{eqnarray}
furthermore $\gamma_{\mathrm{D}}:H^{s}(\Omega)\rightarrow H^{s-\frac{1}{2}}(\partial\Omega),\; \frac{1}{2}<s<\frac{3}{2}$, has a bounded right inverse, in particular  it is onto.
In most of this work we will be concerned about the following operators related to the bounded vector field~${A}$ with components in $\textrm{W}^{1}_{\infty}(\overline{\Omega})$,
\begin{equation} \label{opeHA}
\nabla_{A}:= \nabla+i{A}\quad \mathrm{and} \quad   H^{A}=(-i\nabla+{A})^{2}.
\end{equation}

\begin{remark}\label{rem1}
For $u\in \mathrm{L}^{2}(\Omega)$ we can define $H^{A}u$ as a distribution acting on $\textrm{C}^{\infty}_{0}(\Omega)$ by $\langle H^{A}u,\varphi\rangle=(\overline{u},H^{A}\varphi)_{\mathrm{L}^{2}(\Omega)}$ , for $\varphi\in\textrm{C}^{\infty}_{0}(\Omega)$. To see that this definition makes sense, note that the following holds in the sense of the distributions  
\[
H^{A}u=-\triangle u-2i{A}\cdot\nabla u+(|{A}|^{2}-i\,\mathrm{div}{A})u\,,
\] where the term  ${A}\cdot\nabla u$ is well defined as a distribution; indeed, since  $A_{i}\in \textrm{W}^{1}_{\infty}(\overline{\Omega}),$ $i=1,...,n$ and $\nabla u\in (H^{-1}(\Omega))^{n}$, it follows that ${A}\cdot\nabla u \in H^{-1}(\Omega)$, so it is a distribution. Note that if $u\in \mathrm{L}^{2}(\Omega)$ and $H^{A}u\in \mathrm{L}^{2}(\Omega)$, for $\varphi\in \textrm{C}^{\infty}_{0}(\Omega)$ we have 
\[
(H^{A}u,\varphi)_{\mathrm{L}^{2}(\Omega)}=\overline{\langle H^{A}u,\overline{\varphi}\rangle}=(u,H^{A}\varphi)_{\mathrm{L}^{2}(\Omega)}\,.
\] 
\end{remark}

We  introduce the {\it magnetic Neumann trace operator}
\begin{equation}\label{TN}
\gamma^{A}_{N}:=\nu\cdot\gamma_{\mathrm{D}}\nabla_{A}:H^{s+1}(\Omega)\rightarrow \mathrm{L}^{2}(\partial\Omega),\quad \frac{1}{2}<s<\frac{3}{2}\,;
\end{equation}
of course, if ${A}=0$, then this is the usual Neumann trace operator and simply denoted by~$\gamma_{\mathrm{N}}$.

\section{Minimal and maximal operators; integration by  parts formula} \label{cap3}

In this section we describe the  initial, maximal and minimal operators associated with the formal operator~$H^{A}=(-i\nabla+{A})^{2}$; then we  set some integration by parts identities related to these operators in Lipschitz domains.

\subsection{Initial, minimal and maximal operators}
Let~$\Omega$ be an open set of $\mathbb R^{n}$; we define the initial operator $H^{A}_{0}$ by
\begin{equation*}
\mathrm{Dom}\,(H^{A}_{0})=\mathrm{C}^{\infty}_{0}(\Omega)\quad\mathrm{and}\quad H^{A}_{0}u:=H^{A}u,
\end{equation*}
which is symmetric and densely defined, and so it is closable  and its closure is denoted by $H^{A}_{\mathrm{min}}$, the so-called {\it minimal (magnetic) operator}. The {\it maximal (magnetic) operator}   $H^{A}_{\mathrm{max}}$ is given by 
\begin{equation*}
\mathrm{Dom}\,H^{A}_{\mathrm{max}}=\left\{ u\in \mathrm{L}^{2}(\Omega)|\; H^{A}u\in \mathrm{L}^{2}(\Omega)\right\},\quad H^{A}_{\mathrm{max}}:=H^{A}u.
\end{equation*}

The proof of Lemma~\ref{lemmaBerndtLaplac}, for the Laplace operator, is due to Prof.~J.~Behrndt (private communication); however, the magnetic potentials introduce additional difficulties.

\begin{lemma}\label{lemmaBerndtLaplac}
Let~$\Omega$ be a Lipschitz domain with compact boundary. Then 
\begin{equation*}
(\overline{H^{A}_{0}})^*=(H^{A}_{\mathrm{min}})^{*}=H^{A}_{\mathrm{max}}.
\end{equation*} 
If $\Omega$ is also bounded, then 
\begin{equation*}
\mathrm{Dom}\,H^{A}_{\mathrm{min}}= H^{2}_{0}(\Omega)\,.
\end{equation*} 
\end{lemma}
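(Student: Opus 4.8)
The plan is to deduce the statement from its known $A=0$ counterpart together with a gauge-type reduction. First I would establish the identity $(\overline{H^A_0})^* = H^A_{\mathrm{max}}$. Since $\overline{H^A_0} = H^A_{\mathrm{min}}$ by definition, and the adjoint of an operator equals the adjoint of its closure, it suffices to compute $(H^A_0)^*$. By Remark~\ref{rem1}, for $u \in \mathrm{L}^2(\Omega)$ one has $(H^A u, \varphi)_{\mathrm{L}^2(\Omega)} = (u, H^A\varphi)_{\mathrm{L}^2(\Omega)}$ for all $\varphi \in \mathrm{C}^\infty_0(\Omega)$ whenever $H^A u \in \mathrm{L}^2(\Omega)$ (interpreting $H^A u$ as a distribution and using that $H^A$ is formally self-adjoint, which follows since $A$ is real-valued); hence $\mathrm{Dom}\,(H^A_0)^* \supseteq \mathrm{Dom}\,H^A_{\mathrm{max}}$ and the two operators agree there. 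Conversely, if $u \in \mathrm{Dom}\,(H^A_0)^*$, then $(u, H^A\varphi)_{\mathrm{L}^2} = ((H^A_0)^*u, \varphi)_{\mathrm{L}^2}$ for all $\varphi \in \mathrm{C}^\infty_0(\Omega)$, which says exactly that $H^A u = (H^A_0)^*u \in \mathrm{L}^2(\Omega)$ in the distributional sense, i.e. $u \in \mathrm{Dom}\,H^A_{\mathrm{max}}$. This gives $(H^A_{\mathrm{min}})^* = (\overline{H^A_0})^* = (H^A_0)^* = H^A_{\mathrm{max}}$, and this part uses only that $\Omega$ is open (Lipschitz with compact boundary is not yet needed).

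For the second assertion, assume $\Omega$ bounded and Lipschitz with compact boundary. The inclusion $H^2_0(\Omega) \subseteq \mathrm{Dom}\,H^A_{\mathrm{min}}$ is the routine direction: for $u \in H^2_0(\Omega)$ there is a sequence $\varphi_k \in \mathrm{C}^\infty_0(\Omega)$ with $\varphi_k \to u$ in $H^2(\Omega)$, and since $A$ has components in $\textrm{W}^1_\infty(\overline{\Omega})$, the expression $H^A = -\Delta - 2iA\cdot\nabla + (|A|^2 - i\,\mathrm{div}\,A)$ maps $H^2(\Omega)$ boundedly into $\mathrm{L}^2(\Omega)$ (by Remark~\ref{rem3} the multiplication operators are bounded on $\mathrm{L}^2$, and $\mathrm{div}\,A \in \mathrm{L}^\infty$), so $H^A\varphi_k \to H^A u$ in $\mathrm{L}^2(\Omega)$, witnessing $u \in \mathrm{Dom}\,H^A_{\mathrm{min}}$.

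The reverse inclusion $\mathrm{Dom}\,H^A_{\mathrm{min}} \subseteq H^2_0(\Omega)$ is the main obstacle, and here I would reduce to the known Laplacian case (i.e. $\mathrm{Dom}\,H^0_{\mathrm{min}} = H^2_0(\Omega)$ for bounded Lipschitz $\Omega$, the result Prof.~Behrndt's argument supplies). Let $u \in \mathrm{Dom}\,H^A_{\mathrm{min}}$, so $\varphi_k \to u$ in $\mathrm{L}^2$ with $H^A\varphi_k$ Cauchy in $\mathrm{L}^2$. Writing $H^A\varphi_k = -\Delta\varphi_k + L\varphi_k$ where $L := -2iA\cdot\nabla + (|A|^2 - i\,\mathrm{div}\,A)$ is a first-order operator with $\mathrm{L}^\infty$ coefficients, one needs the a priori elliptic estimate $\|\varphi\|_{H^2(\Omega)} \le C\big(\|H^0\varphi\|_{\mathrm{L}^2} + \|\varphi\|_{\mathrm{L}^2}\big)$ valid on $\mathrm{C}^\infty_0(\Omega)$; combined with the interpolation bound $\|L\varphi\|_{\mathrm{L}^2} \le \varepsilon\|\varphi\|_{H^2} + C_\varepsilon\|\varphi\|_{\mathrm{L}^2}$ (Ehrling's lemma, since $L$ is first order), absorbing the $\varepsilon$-term yields $\|\varphi_k\|_{H^2(\Omega)} \le C'\big(\|H^A\varphi_k\|_{\mathrm{L}^2} + \|\varphi_k\|_{\mathrm{L}^2}\big)$, so $(\varphi_k)$ is Cauchy in $H^2(\Omega)$. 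Therefore $\varphi_k \to u$ in $H^2(\Omega)$ with $\varphi_k \in \mathrm{C}^\infty_0(\Omega)$, i.e. $u \in \dot H^2(\Omega) = H^2_0(\Omega)$; the last identification of $\dot H^2(\Omega)$ with $H^2_0(\Omega)$ (equivalently $\tilde H^2_0(\Omega)$) for bounded Lipschitz $\Omega$ is the density fact recorded in the preliminaries. The delicate point to be careful about is the global $H^2$ a priori estimate on a merely Lipschitz domain for functions in $\mathrm{C}^\infty_0(\Omega)$; since these functions vanish near $\partial\Omega$ it is in fact the interior estimate $\|\varphi\|_{H^2(\mathbb{R}^n)} = \|\Delta\varphi\|_{\mathrm{L}^2(\mathbb{R}^n)}$ (an exact identity via Fourier transform after extending by zero), so no boundary regularity is actually used — one only needs it together with the known Laplacian conclusion to identify the limit space, and I would present it that way to keep the argument clean.
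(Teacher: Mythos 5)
Your proposal is correct and follows essentially the same route as the paper: the first identity is the same distributional computation of $(H^A_0)^*$ via Remark~\ref{rem1}, and the second assertion again comes down to the equivalence of the graph norm $\|\cdot\|_A$ with the $H^2$-norm on (the closure of) $\mathrm{C}^\infty_0(\Omega)$, obtained from $\sum_{|\alpha|=2}\|\partial^\alpha\varphi\|^2_{\mathrm{L}^2}=\|\Delta\varphi\|^2_{\mathrm{L}^2}$ plus absorption of the first-order term $L$. The only cosmetic difference is that you invoke Ehrling's lemma where the paper uses the elementary interpolation $\|\nabla u\|^2_{\mathrm{L}^2}=(-\Delta u,u)_{\mathrm{L}^2}$, and your framing as a ``reduction to the $A=0$ case'' is not really a reduction since you re-derive the a priori estimate directly.
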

\begin{proof}
We  first show that $(\overline{H^{A}_{0}})^*=H^{A}_{\mathrm{max}}$ or, equivalently, $(H^{A}_{0})^{*}=H^{A}_{\mathrm{max}}$. Pick $f\in \mathrm{Dom}\,(H^{A}_{0})^{*}$, then $f\in \mathrm{L}^{2}(\Omega)$ and  there exists $g\in \mathrm{L}^{2}(\Omega)$ such that,
\begin{equation*}
(g,u)_{\mathrm{L}^{2}(\Omega)}=(f,H^{A}u)_{\mathrm{L}^{2}(\Omega)},\quad \forall u\in \mathrm{C}^{\infty}_{0}(\Omega)\,.
\end{equation*}
 So,  by Remark~\ref{rem1}, we have $H^{A}f=g\in \mathrm{L}^{2}(\Omega)$ in the sense of distributions; hence $f\in\mathrm{Dom}\,H^{A}_{\mathrm{max}}$ and  $H^{A}f=g=(H^{A}_{0})^{*}f$, and so $(H^{A}_{0})^{*}\subset H^{A}_{\mathrm{max}}$. On the other hand, for $f\in\mathrm{Dom}\,H^{A}_{\mathrm{max}}$, we have
\begin{equation*}
(H^{A}f,u)_{\mathrm{L}^{2}(\Omega)}=(f,H^{A}u)_{\mathrm{L}^{2}(\Omega)},\quad \forall u\in \mathrm{C}^{\infty}_{0}(\Omega),
\end{equation*}
and it follows immediately that  $ H^{A}_{\mathrm{max}}\subset(H^{A}_{0})^{*}$; hence $(\overline{H^{A}_{0}})^*=H^{A}_{\mathrm{max}}$.

Now assume that~$\Omega$ is bounded. Note that the norm 
\begin{equation*}
\|\cdot\|_{A}=\big(\|\cdot\|_{\mathrm{L}^{2}(\Omega)}^{2}+\|H^{A}(\cdot)\|_{\mathrm{L}^{2}(\Omega)}^{2}\big)^{1/2}
\end{equation*}
is equivalent to the norm of $H^{2}(\Omega)$ over $H^{2}_{0}(\Omega)$.
 Indeed, since ${A}\in \textrm{W}^{1}_{\infty}(\overline{\Omega})$, it is  clear that $\|u\|_{A}\leq C\|u\|_{H^{2}(\Omega)}$ for $u\in H^{2}_{0}(\Omega)$ with an appropriate $C>0$. On the other hand, since $H^{A}=-\triangle+L$, where~$L$ is the linear operator of order~1 given by $Lu=-2i{A}\cdot\nabla u+(|{A}|^{2}-i\,\mathrm{div}{{A}})u$, it follows that, for $u\in H^{2}_{0}(\Omega)$,
\begin{eqnarray}\label{w1}
\|\triangle(u)\|_{\mathrm{L}^{2}(\Omega)}&\leq &\|H^{A}(u)\|_{\mathrm{L}^{2}(\Omega)}+\|Lu\|_{\mathrm{L}^{2}(\Omega)}\nonumber\\
 &=&\|H^{A}(u)\|_{\mathrm{L}^{2}(\Omega)}+C_{0}\|u\|_{\mathrm{L}^{2}(\Omega)}+C_{1}\|\nabla u\|_{\mathrm{L}^{2}(\Omega)}.
\end{eqnarray}
Now, for $u\in H^{2}_{0}(\Omega)$, we have
\begin{eqnarray}\label{w2}
\|\nabla u\|_{\mathrm{L}^{2}(\Omega)}&=&[(-\triangle u,u)_{\mathrm{L}^{2}(\Omega)}]^{1/2}\nonumber\\
&\leq &\|\triangle u\|_{\mathrm{L}^{2}(\Omega)}^{1/2} \;\| u\|_{\mathrm{L}^{2}(\Omega)}^{1/2} \nonumber\\
&\leq &\frac{\epsilon^{2}}{2}\|\triangle u\|_{\mathrm{L}^{2}(\Omega)}+\frac{1}{2\epsilon^{2}}\|u\|_{\mathrm{L}^{2}(\Omega)}\,,
\end{eqnarray}
for all $\epsilon>0$. So, by  inequalities~\eqref{w1} and~\eqref{w2}, we have
\begin{equation}\label{w3}
\|\triangle u\|_{\mathrm{L}^{2}(\Omega)}\leq \frac{1}{1-C_{1}\epsilon^{2}/2}\|H^{A}(u)\|_{\mathrm{L}^{2}(\Omega)}+\frac{C_{0}+C_{1}/(2\epsilon^{2})}{1-C_{1}\epsilon^{2}/2}\|u\|_{\mathrm{L}^{2}(\Omega)}.
\end{equation}

 By Poincar\'e's Inequality, the norm of $H^{2}(\Omega)$ is equivalent, in~$H^{2}_{0}(\Omega)$, to
\begin{equation} 
 \Big(\|\cdot\|_{\mathrm{L}^{2}(\Omega)}^{2}+\sum_{|\alpha|=2}\|\partial^{\alpha}(\cdot)\|_{\mathrm{L}^{2}(\Omega)}^{2}\Big)^{1/2}\,;
 \end{equation}
  for all $f\in H^{2}_{0}(\Omega)$, we have,
\begin{eqnarray*}
\sum_{|\alpha|=2}\|\partial^{\alpha}f\|_{\mathrm{L}^{2}(\Omega)}^{2}&=&\sum_{|\alpha|=2}(\partial^{\alpha}f,\partial^{\alpha}f)_{\mathrm{L}^{2}(\Omega)}=\sum_{i,j=1}^{2}(\partial_{i}\partial_{j}f,\partial_{i}\partial_{j}f)_{\mathrm{L}^{2}(\Omega)}\\&=&\sum_{i,j=1}^{2}(\partial_{i}^{2}f,\partial_{j}^{2}f)_{\mathrm{L}^{2}(\Omega)}=\|\triangle f\|_{\mathrm{L}^{2}(\Omega)}^{2}\,,
\end{eqnarray*}
where in the last equality we have performed a integration by parts, so, we get
\begin{equation}\label{w4}
\|u\|_{H^{2}(\Omega)}\approx(\|u\|_{\mathrm{L}^{2}(\Omega)}^{2}+\|\triangle u\|_{\mathrm{L}^{2}(\Omega)}^{2})^{1/2}
\end{equation}
 for all $u\in H^{2}_{0}(\Omega)$. Therefore, by~\eqref{w3} and~\eqref{w4} we obtain  $\|u\|_{H^{2}(\Omega)}\leq C' \|u\|_{A}$ for all $u\in H^{2}_{0}(\Omega)$, and so $\|\cdot\|_{A}\approx\|\cdot\|_{H^{2}(\Omega)}$ in $H^{2}_{0}(\Omega)$.
From these facts, we obtain
\begin{equation*}
\mathrm{Dom}\,H^{A}_{\mathrm{min}}=\mathrm{Dom}\,\overline{H^{A}_{0}}=\overline{\mathrm{Dom}\,H^{A}_{0}}^{\|\cdot\|_{A}}=\overline{\mathrm{Dom}\,H^{A}_{0}}^{\|\cdot\|_{H^{2}(\Omega)}}=\overline{\mathrm{C}^{\infty}_{0}(\Omega)}^{\|\cdot\|_{H^{2}(\Omega)}}=H^{2}_{0}(\Omega)\,.
\end{equation*}

\end{proof}

Now we discuss a first version of an integration by parts formula associated with the operator~$H^{A}$.

\begin{lemma}
 Let~$\Omega$ be a Lipschitz domain with compact boundary, $u\in H^{2}(\Omega)$ and $v\in H^{1}(\Omega)$; if
\begin{equation*}
\Phi_{A}(u,v)=(\nabla_{A}u,\nabla_{A}v)_{\mathrm{L}^{2}(\Omega)^{n}}\,,
 \end{equation*}
then
 \begin{equation}\label{intp}
\Phi_{A}(u,v)=(H^{A}u,v)_{\mathrm{L}^{2}(\Omega)}+(\gamma_{\mathrm{N}}^{A} u,\gamma_{\mathrm{D}}v)_{\mathrm{L}^{2}(\partial\Omega)}\,.
 \end{equation}
\end{lemma}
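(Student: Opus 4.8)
The plan is to reduce the magnetic integration by parts formula to the classical (non‑magnetic) one plus elementary manipulations with the bounded multiplication operators coming from the components of $A$. Write $H^{A}u = -\triangle u + Lu$ with $Lu = -2i A\cdot\nabla u + (|A|^{2} - i\,\mathrm{div}\,A)u$, as in Remark~\ref{rem1}; since the components of $A$ lie in $\textrm{W}^{1}_{\infty}(\overline{\Omega})$, for $u\in H^{2}(\Omega)$ we have $A\cdot\nabla u\in H^{1}(\Omega)$, $\mathrm{div}\,A\in \mathrm{L}^{\infty}$, and $|A|^{2}u\in H^{1}(\Omega)\subset\mathrm{L}^{2}(\Omega)$, so every term below is a genuine $\mathrm{L}^{2}$ function and the pairings make sense. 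Expanding the sesquilinear form, $\Phi_{A}(u,v) = (\nabla u + iAu,\ \nabla v + iAv)_{\mathrm{L}^{2}(\Omega)^{n}} = (\nabla u,\nabla v) + i(\nabla u, Av) - i(Au,\nabla v) + (Au,Av)$, where I am suppressing the subscript. The last term is $(|A|^{2}u,v)_{\mathrm{L}^{2}(\Omega)}$.

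Next I would invoke the classical first Green identity for Lipschitz domains with compact boundary, namely $(\nabla u,\nabla v)_{\mathrm{L}^{2}(\Omega)^{n}} = (-\triangle u, v)_{\mathrm{L}^{2}(\Omega)} + (\gamma_{\mathrm{N}} u,\gamma_{\mathrm{D}} v)_{\mathrm{L}^{2}(\partial\Omega)}$, valid for $u\in H^{2}(\Omega)$, $v\in H^{1}(\Omega)$ (this is the $A=0$ case, standard and available from the references of Section~\ref{cap2}, e.g.\ \cite{STS}); here $\gamma_{\mathrm{N}} = \nu\cdot\gamma_{\mathrm{D}}\nabla$. Then I must account for the cross terms. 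The identity $i(\nabla u,Av) - i(Au,\nabla v)$ should combine, after one more integration by parts in the middle term $i(\nabla u,Av) = i\sum_j(\partial_j u,\, A_j v)$, to produce exactly the magnetic correction to both the bulk term and the boundary term. Concretely, integrating by parts (legitimate because $A_j v\in H^{1}(\Omega)$ when $v\in H^{1}(\Omega)$ and $A_j\in\textrm{W}^{1}_{\infty}$, using the $H^{1}$–$H^{1}$ Gauss–Green formula) gives $i(\partial_j u, A_j v) = -i(u,\partial_j(A_j v)) + i(\nu_j u, A_j v)_{\partial\Omega} = -i(u, (\partial_j A_j)v) - i(u, A_j\partial_j v) + i(\nu_j A_j\gamma_{\mathrm{D}}u,\gamma_{\mathrm{D}}v)_{\partial\Omega}$. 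Summing over $j$ and collecting, the bulk terms reassemble into $(Lu - |A|^{2}u + \text{classical part})$ matching $(H^{A}u,v)$, and the boundary contribution $i(\nu\cdot A\,\gamma_{\mathrm{D}}u,\gamma_{\mathrm{D}}v)_{\partial\Omega}$ combines with $(\gamma_{\mathrm{N}}u,\gamma_{\mathrm{D}}v)_{\partial\Omega}$ to give $(\gamma_{\mathrm{N}}u + i\nu\cdot A\gamma_{\mathrm{D}}u,\ \gamma_{\mathrm{D}}v)_{\partial\Omega} = (\gamma^{A}_{\mathrm{N}}u,\gamma_{\mathrm{D}}v)_{\partial\Omega}$, since $\gamma^{A}_{\mathrm{N}} = \nu\cdot\gamma_{\mathrm{D}}\nabla_{A} = \nu\cdot\gamma_{\mathrm{D}}\nabla + i\,\nu\cdot A\,\gamma_{\mathrm{D}}$ by definition~\eqref{TN} and the product rule for the Dirichlet trace of the $\textrm{W}^{1}_{\infty}$ factor $A_j$ (Remark~\ref{rem3} ensures the multiplication stays in $H^{2}$, hence the trace is well behaved).

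The main obstacle I anticipate is purely one of bookkeeping rather than of principle: making sure each integration by parts is justified in the Lipschitz setting at the stated regularity, and in particular that the trace of the product $A_j u$ equals $\gamma_{\mathrm{D}}(A_j)\,\gamma_{\mathrm{D}}(u)$ — i.e.\ that the Dirichlet trace is multiplicative against the restriction of a $\textrm{W}^{1}_{\infty}(\overline{\Omega})$ function. This follows from density of $\mathrm C^{\infty}(\overline\Omega)$ in $H^{2}(\Omega)$ together with the boundedness of $M_{A_j}$ on $H^{2}(\Omega)$ from Remark~\ref{rem3} and continuity of $\gamma_{\mathrm{D}}$ on $H^{s}(\Omega)$ for $1/2<s<3/2$ from~\eqref{trd}. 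Once these justifications are in place, the proof is just the reassembly of terms described above; I would present it as: expand $\Phi_A(u,v)$, apply the classical Green identity to $(\nabla u,\nabla v)$, integrate by parts once in the single cross term $i(\nabla u, Av)$, and collect bulk and boundary contributions to recognize $H^{A}u$ and $\gamma^{A}_{\mathrm{N}}u$ respectively.
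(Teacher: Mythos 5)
Your overall strategy (expand $\Phi_A$, apply the classical Green identity to $(\nabla u,\nabla v)$, integrate by parts once in a cross term) is exactly the paper's, but the execution has a genuine error: you integrate by parts the \emph{wrong} cross term, and as a consequence the claimed ``reassembly'' at the end does not happen. With the convention (used both by you and by the paper) that the inner product is antilinear in the first slot, one has $(-2iA\cdot\nabla u,v)=2i(\nabla u,Av)$ and $(-i(\mathrm{div}A)u,v)=+i(u,(\mathrm{div}A)v)$, while $(i(\nu\cdot A)\gamma_{\mathrm D}u,\gamma_{\mathrm D}v)_{\partial\Omega}=-i\int_{\partial\Omega}(\nu\cdot A)\overline u v\,\mathrm d^{n-1}\omega$. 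So the identity you must prove for the cross terms is $-i(Au,\nabla v)=i(\nabla u,Av)+i(u,(\mathrm{div}A)v)-i\int_{\partial\Omega}(\nu\cdot A)\overline uv$, i.e.\ you must integrate by parts the term carrying $\nabla v$, namely $-i(Au,\nabla v)=-i\int A\,\overline u\cdot\nabla v$, moving the derivative \emph{onto} $u$ and $A$ (this is precisely the paper's step, via $\overline u A\cdot\nabla v=-v\,\overline{\nabla u}\cdot A-v\,\overline u\,\mathrm{div}A+\mathrm{div}(v\overline uA)$ and the Lipschitz divergence theorem). You instead integrate by parts $i(\nabla u,Av)$, which moves the derivative \emph{off} $u$; after your step the bulk cross contribution is $-2i(Au,\nabla v)-i(u,(\mathrm{div}A)v)$, which still contains $\nabla v$ and is not of the form $(Lu-|A|^2u,v)$, and your boundary term $+i\int_{\partial\Omega}(\nu\cdot A)\overline uv$ is the \emph{negative} of the correction $(\gamma^A_{\mathrm N}u-\gamma_{\mathrm N}u,\gamma_{\mathrm D}v)$ required by~\eqref{TN}. (The two sign discrepancies are consistent: your expression is still equal to the original cross terms, but to bring it to the form $(H^Au,v)+(\gamma^A_{\mathrm N}u,\gamma_{\mathrm D}v)$ you would need a \emph{second} integration by parts, which you do not perform.) The fix is simply to integrate by parts the other cross term, or equivalently apply the divergence theorem to the field $v\overline u A$.

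A secondary, less serious point: you carry out the integration by parts directly at the regularity $u\in H^2(\Omega)$, $v\in H^1(\Omega)$, which forces you to justify an $H^1$--$H^1$ Gauss--Green formula and the multiplicativity of the Dirichlet trace against $\mathrm W^1_\infty(\overline\Omega)$ factors on a Lipschitz boundary. The paper avoids this entirely by first reducing to $u,v\in\mathrm C^\infty_0(\overline\Omega)$ via the density of $\mathrm C^\infty_0(\overline\Omega)$ in the relevant graph spaces and the continuity of both sides of~\eqref{intp} in $(u,v)\in H^2(\Omega)\times H^1(\Omega)$, and then applies the divergence theorem for Lipschitz domains to smooth representatives. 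Your route is workable, but the density reduction is cleaner and you should at least state the continuity of both sides that makes the density argument (or your direct one) legitimate.
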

\begin{proof}
It is enough to consider $u,v\in \mathrm{C}^{\infty}_{0}(\overline{\Omega})$ since the general case follows by a density argument. Note that,
\begin{equation*}
H^{A}u=-\triangle u-2i{A}\cdot\nabla u+(|{A}|^{2}-i\,\mathrm{div}{A})u\,;
\end{equation*}
on the other hand,
\begin{eqnarray*}
\Phi_{A}(u,v)&=&\int_{\Omega}\mathrm{d}^{n}x \, \big(\overline{\nabla u}\cdot\nabla v+i \overline{\nabla u}\cdot{A}v-i{A}\cdot\overline{u}\nabla v+|{A}|^{2}\overline{u}v\big) \\
&=&(\nabla u,\nabla v)_{\mathrm{L}^{2}(\Omega)}+i\int_{\Omega}\mathrm{d}^{n}x \,(\overline{\nabla u}\cdot{A}v-{A}\cdot\overline{u}\nabla v)+\int_{\Omega}\mathrm{d}^{n}x \, |{A}|^{2}\overline{u}v \\
&=&-(\triangle u,v)_{\mathrm{L}^{2}(\Omega)}+(\gamma_{\mathrm{N}}u,\gamma_{\mathrm{D}}v)_{\mathrm{L}^{2}(\partial\Omega)}\\
&+&i\int_{\Omega}\mathrm{d}^{n}x \,(\overline{\nabla u}\cdot{A}v-{A}\cdot\overline{u}\nabla v)+\int_{\Omega}\mathrm{d}^{n}x \, |{A}|^{2}\overline{u}v\\
&=&-(\triangle u,v)_{\mathrm{L}^{2}(\Omega)}+(\gamma_{\mathrm{N}}u,\gamma_{\mathrm{D}}v)_{\mathrm{L}^{2}(\partial\Omega)}\\
&+&2i\int_{\Omega}\mathrm{d}^{n}x \,v\overline{\nabla u}\cdot{A}+i\int_{\Omega}\mathrm{d}^{n}x \, v\overline{u}\, \mathrm{div}{A} \\ &-&i\int_{\partial\Omega}\mathrm{d}^{n-1}\omega\; \overline{u}v{A}\cdot\nu+\int_{\Omega}\mathrm{d}^{n}x \, |{A}|^{2}\overline{u}v \\
&=&(H^{A}u,v)_{\mathrm{L}^{2}(\Omega)}+(\gamma_{\mathrm{N}}^{A} u,\gamma_{\mathrm{D}}v)_{\mathrm{L}^{2}(\partial\Omega)}\,,
\end{eqnarray*}
where in the fourth equality we have used that
\[
\overline{u}{A}\cdot\nabla v=-v\,\overline{\nabla u}\cdot{A}-v\,\overline{u}\,\mathrm{div}{A}+\mathrm{div}(v\,\overline{u}{A})
\] and the  Theorem~3.34 of~\cite{STS}, which is a version of the divergence theorem for Lipschitz Domains.
\end{proof}

 The  next  lemma will be important to obtain some generalization of this integration by parts formula.  It is a particular case of  Theorem~6.9 in~\cite{SRAC}. A result in this direction appears in~\cite{BVP}, where the author considers the case $s=1$, $H^{A}=\triangle$  (i.e., $A=0$)  and~$\Omega$ bounded; in~\cite{URD}, using an approach similar to~\cite{BVP}, the authors have generalized to the case $s<2$.

\begin{lemma}\label{denso}  
Let  $\Omega\subset \mathbb R^{n}$ be a Lipschitz domain with compact boundary.  Then, for $ 0\leq s< 2$, $\mathrm{C}^{\infty}_{0}(\overline{\Omega})$ is dense in $H^{s}(A,\Omega)=\left\{u\in H^{s}(\Omega)|\; H^{A}u \in \mathrm{L}^{2}(\Omega)\right\}$, when equipped with the norm $\|\cdot\|_{A,s}=\|\cdot\|_{H^{s}(\Omega)}+\|H^{A}(\cdot)\|_{\mathrm{L}^{2}(\Omega)}$. 
\end{lemma}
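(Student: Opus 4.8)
The plan is to recognise $H^{A}$ as an admissible operator for the abstract density statement~\cite[Theorem~6.9]{SRAC} and to apply it; the substantive work is in that theorem, and all that must be checked are its hypotheses. Write $H^{A}=-\Delta+L$ as in Remark~\ref{rem1}, with $Lu=-2i{A}\cdot\nabla u+(|{A}|^{2}-i\,\mathrm{div}\,{A})u$. Since each $A_{j}\in\textrm{W}^{1}_{\infty}(\overline{\Omega})$, the first-order coefficient $-2i{A}$ has components in $\textrm{W}^{1}_{\infty}(\overline{\Omega})$; the function $|{A}|^{2}$, a product of $\textrm{W}^{1}_{\infty}$ functions, is again in $\textrm{W}^{1}_{\infty}(\overline{\Omega})$ (cf.\ Remark~\ref{rem3}); and $\mathrm{div}\,{A}\in\mathrm{L}^{\infty}(\overline{\Omega})$, so the zeroth-order coefficient is bounded. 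Thus $H^{A}$ is second order and uniformly elliptic with principal part $-\Delta$ and lower-order coefficients of exactly the regularity required by~\cite[Theorem~6.9]{SRAC}, and that theorem yields the density of $\mathrm{C}^{\infty}_{0}(\overline{\Omega})$ in $(H^{s}(A,\Omega),\|\cdot\|_{A,s})$ for every $0\le s<2$, also for unbounded $\Omega$ with compact boundary.

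For the range $1\le s<2$ there is a shortcut that bypasses~\cite{SRAC}: if $u\in H^{s}(\Omega)$ then ${A}\cdot\nabla u\in H^{s-1}(\Omega)$ by Remark~\ref{rem3}, and $s-1\ge0$, so $Lu\in\mathrm{L}^{2}(\Omega)$; hence $H^{s}(A,\Omega)=\{u\in H^{s}(\Omega):\Delta u\in\mathrm{L}^{2}(\Omega)\}$, with $\|\cdot\|_{A,s}$ equivalent on it to $\|\cdot\|_{H^{s}(\Omega)}+\|\Delta(\cdot)\|_{\mathrm{L}^{2}(\Omega)}$, so the claim reduces to the case ${A}=0$ proved in~\cite{URD} (for $s=1$ and bounded $\Omega$, already in~\cite{BVP}). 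The mechanism behind the general statement is the classical one: one multiplies $u$ by a cutoff $\eta_{R}$ equal to $1$ near the compact set $\partial\Omega$ and on $B_{R}$ and supported in $B_{2R}$, and uses interior elliptic regularity for $-\Delta u=H^{A}u-Lu$ to get $u\in H^{2}_{\mathrm{loc}}(\Omega)$, so that the commutator terms $\nabla\eta_{R}\cdot\nabla_{A}u$ and $(\Delta\eta_{R})u$, being supported in an annulus inside $\Omega$, lie in $\mathrm{L}^{2}$ and tend to $0$; this reduces matters to compactly supported $u$. One then localizes near $\partial\Omega$ by a Lipschitz change of coordinates, translates the function a distance $\sigma$ in the transversal direction so that it extends across $\partial\Omega$, and mollifies at scale $\delta\ll\sigma$, producing an approximant in $\mathrm{C}^{\infty}_{0}(\overline{\Omega})$ converging to $u$ in $H^{s}(\Omega)$.

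The hard part — and the reason the general statement needs~\cite[Theorem~6.9]{SRAC} rather than a one-line argument — is the $\mathrm{L}^{2}$-convergence $H^{A}(\text{approximant})\to H^{A}u$ when $s<1$: then $\nabla u$ lies only in $H^{s-1}(\Omega)$, so the condition $H^{A}u\in\mathrm{L}^{2}(\Omega)$ is \emph{not} stable under multiplication by a cutoff (the term $\nabla\chi\cdot\nabla u$ need not be in $\mathrm{L}^{2}$), which rules out a naive partition of unity; and $H^{A}$ does not commute with the transversal shift, since conjugating it by the translation by $\sigma e_{n}$ replaces $A$ by $A(\cdot+\sigma e_{n})$ and hence produces errors $(A(\cdot+\sigma e_{n})-A)\cdot\nabla u$ and $(\mathrm{div}\,A(\cdot+\sigma e_{n})-\mathrm{div}\,A)\,u$. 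This is precisely where $\textrm{W}^{1}_{\infty}$-regularity of $A$ must be used: the Lipschitz bound $\|A(\cdot+\sigma e_{n})-A\|_{\infty}\le K\sigma$ controls the first error after rewriting it in divergence form; the commutator of $H^{A}$ with the mollifier is handled by a Friedrichs-type lemma that again needs only the coefficient of $\nabla u$ to be Lipschitz; and the zeroth-order error tends to $0$ in $\mathrm{L}^{2}$ by dominated convergence, since $\mathrm{div}\,A$ is bounded and $u\in\mathrm{L}^{2}$. The delicate balancing of the scales $\sigma$ and $\delta$ needed to make all these contributions — together with the mollification of $H^{A}u\in\mathrm{L}^{2}(\Omega)$ — tend to zero is exactly what~\cite[Theorem~6.9]{SRAC} carries out, and I would simply invoke it.
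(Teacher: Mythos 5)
Your proof is correct and follows essentially the same route as the paper: the paper's own proof consists precisely of noting that $A\in(\mathrm{W}^{1}_{\infty}(\overline{\Omega}))^{n}$ makes Theorem~6.9 of~\cite{SRAC} directly applicable, which is the core of your argument. Your additional remarks (the reduction to $A=0$ for $1\le s<2$ and the sketch of the translation--mollification mechanism) are accurate supplementary commentary rather than a different proof.
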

\begin{proof}
 This result is a  consequence of the fact that ${A}\in(\textrm{W}^{1}_{\infty}(\overline{\Omega}))^{n}$ and a direct application of  Theorem~6.9 of~\cite{SRAC}.
\end{proof}

 By  Lemma~\ref{denso} we can further extend the operator~$\gamma^{A}_{N}$ defined in~\eqref{TN} in the following way:
 \begin{equation*}
 {\tilde\gamma^{A}_{N}}:H^{1}(A,\Omega)\rightarrow H^{-\frac{1}{2}}(\partial\Omega)\,,
 \end{equation*}
 where, for $u\in H^{1}(A,\Omega)$, we have that $ {\tilde\gamma^{A}_{N}}u \in H^{-\frac{1}{2}}(\partial\Omega)$ is defined as
\begin{equation}\label{intn}
\langle g,{\tilde\gamma^{A}_{N}}u\rangle_{\frac{1}{2}}=\Phi_{A}(u,G)-\langle l(H^{A}u),G\rangle_{1},\quad g\in H^{\frac{1}{2}}(\partial\Omega)\,,
\end{equation}
where $G\in H^{1}(\Omega)$ is such that $\gamma_{\mathrm{D}} G=g$ and $\|G\|_{H^{1}(\Omega)}\leq c\,\|g\|_{H^{\frac{1}{2}}(\partial\Omega)}$ and $l:\mathrm{L}^{2}(\Omega)\rightarrow H^{-1}(\Omega)$ is the natural inclusion.
To see that this definition makes sense, it is enough to show that it does not depend on the particular choice of~$G$ that satisfies the above condition. By linearity, this is equivalent to: if $G\in H^{1}(\Omega)$ is such that $\gamma_{\mathrm{D}} G=0$, then $\Phi_{A}(u,G)-\langle l(H^{A}u),G\rangle_{1}=0$; however this follows from the fact that this holds true if $u\in \mathrm{C}^{\infty}_{0}(\overline{\Omega})$ since $\mathrm{C}^{\infty}_{0}(\overline{\Omega})$ is dense in  $H^{1}(A,\Omega)$.

\subsection{Magnetic Dirichlet and Neumann realizations}

In what follows we introduce the  Dirichlet~$H^{A}_{\mathrm D}$ and the  Neumann~$H^{A}_{\mathrm N}$  magnetic self-adjoint realizations, two operators that will play a fundamental role ahead. 

\begin{lemma}
Let~$\Omega$ be a Lipschitz domain with compact boundary. Then there exist  real numbers $c>0$ and $C>0$ such that, for all $u\in H^{1}(\Omega)$,
\begin{equation}\label{coer}
|\Phi_{A}(u,u)|\geq c\,\|u\|_{H^{1}(\Omega)}^{2}-C\,\|u\|_{\mathrm{L}^{2}(\Omega)}^{2}.
\end{equation}

\end{lemma}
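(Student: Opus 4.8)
The plan is to prove the Gårding-type inequality \eqref{coer} by expanding $\Phi_A(u,u)$ explicitly and controlling the cross terms coming from the magnetic potential with a Cauchy–Schwarz / Young inequality argument. First I would write, for $u\in H^1(\Omega)$,
\[
\Phi_A(u,u)=\|\nabla_A u\|_{\mathrm{L}^2(\Omega)^n}^2=\int_\Omega \big(|\nabla u|^2 + 2\,\Ree(i\overline{\nabla u}\cdot A\,u) + |A|^2|u|^2\big)\,\mathrm{d}^n x,
\]
so that $\Phi_A(u,u)$ is real and nonnegative; hence $|\Phi_A(u,u)|=\Phi_A(u,u)$ and it suffices to bound the cross term from below. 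Since the components of $A$ lie in $\textrm{W}^1_\infty(\overline\Omega)$, we have $\|A\|_\infty=:M<\infty$, and the cross term is estimated by
\[
\Big|2\int_\Omega \Ree(i\overline{\nabla u}\cdot A\,u)\,\mathrm{d}^n x\Big|\le 2M\,\|\nabla u\|_{\mathrm{L}^2(\Omega)}\,\|u\|_{\mathrm{L}^2(\Omega)}\le \tfrac12\|\nabla u\|_{\mathrm{L}^2(\Omega)}^2 + 2M^2\|u\|_{\mathrm{L}^2(\Omega)}^2,
\]
using Young's inequality $2ab\le \tfrac12 a^2 + 2b^2$ with $a=\|\nabla u\|_{\mathrm{L}^2}$, $b=M\|u\|_{\mathrm{L}^2}$.

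Combining these, I get
\[
\Phi_A(u,u)\ge \|\nabla u\|_{\mathrm{L}^2(\Omega)}^2 - \tfrac12\|\nabla u\|_{\mathrm{L}^2(\Omega)}^2 - 2M^2\|u\|_{\mathrm{L}^2(\Omega)}^2 + \int_\Omega |A|^2|u|^2\,\mathrm{d}^n x \ge \tfrac12\|\nabla u\|_{\mathrm{L}^2(\Omega)}^2 - 2M^2\|u\|_{\mathrm{L}^2(\Omega)}^2,
\]
since the last integral is nonnegative. Finally, adding and subtracting a multiple of $\|u\|_{\mathrm{L}^2(\Omega)}^2$ to reconstruct the full $H^1$-norm, namely $\|u\|_{H^1(\Omega)}^2=\|u\|_{\mathrm{L}^2(\Omega)}^2+\|\nabla u\|_{\mathrm{L}^2(\Omega)}^2$, I obtain
\[
\Phi_A(u,u)\ge \tfrac12\|u\|_{H^1(\Omega)}^2 - \big(\tfrac12 + 2M^2\big)\|u\|_{\mathrm{L}^2(\Omega)}^2,
\]
which is \eqref{coer} with $c=\tfrac12$ and $C=\tfrac12+2M^2$.

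I do not anticipate a serious obstacle here: the only subtlety is making sure that $\Phi_A(u,u)$ is genuinely real and nonnegative (so that the absolute value can be dropped), which follows directly from the definition $\Phi_A(u,u)=(\nabla_A u,\nabla_A u)_{\mathrm{L}^2(\Omega)^n}$ as the squared norm of a vector in $\mathrm{L}^2(\Omega)^n$, and in checking that $A\in (\textrm{W}^1_\infty(\overline\Omega))^n$ indeed gives a bounded $A$ — this is recorded in the remarks of Section~\ref{cap2} (a $\textrm{W}^1_\infty$ function is the restriction of a bounded Lipschitz function). One could alternatively argue by a density reduction to $u\in \mathrm{C}^\infty_0(\overline\Omega)$ and then pass to the limit, but this is unnecessary since every manipulation above is valid directly for $u\in H^1(\Omega)$. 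The constants $c$ and $C$ depend only on $n$ and $\|A\|_\infty$, not on the regularity of $\partial\Omega$ beyond Lipschitz, which is consistent with the statement.
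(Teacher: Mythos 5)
Your proof is correct and follows essentially the same route as the paper: expand $\Phi_A(u,u)=\|\nabla_A u\|^2$, bound the cross term via Young's inequality using $\|A\|_\infty<\infty$, and absorb it into the gradient term; the paper merely keeps a free parameter $\epsilon$ and retains the $\int|A|^2|u|^2$ term one step longer before bounding it by $\||A|^2\|_{L^\infty}\|u\|_{\mathrm{L}^2}^2$. Your explicit remark that $\Phi_A(u,u)\geq 0$ (so the absolute value is harmless) is a point the paper leaves implicit.
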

\begin{proof} We have 
\begin{eqnarray*}
\Phi_{A}(u,u)&=&\|\nabla u\|_{\mathrm{L}^{2}(\Omega)}^{2}+2 \int_{\Omega}\mathrm{d}^{n-1}x\, \Imm\overline{u}{A}\cdot\nabla u+\int_{\Omega}\mathrm{d}^{n-1}x\, |{A}|^{2}|u|^{2}\\
&\geq &\|\nabla u\|_{\mathrm{L}^{2}(\Omega)}^{2}+\int_{\Omega}\mathrm{d}^{n-1}x\, |{A}|^{2}|u|^{2}-2\int_{\Omega}\mathrm{d}^{n-1}x\, |\Imm\overline{u}{A}\cdot\nabla u|\\
&\geq &(1-\epsilon^{2})\|\nabla u\|_{\mathrm{L}^{2}(\Omega)}^{2}+(1-\frac{1}{\epsilon^{2}})\int_{\Omega}\mathrm{d}^{n-1}x\, |{A}|^{2}|u|^{2}\\
&\geq &(1-\epsilon^{2})\|\nabla u\|_{\mathrm{L}^{2}(\Omega)}^{2}-|(1-\frac{1}{\epsilon^{2}})||\,|{A}|^{2}||_{L^{\infty}(\Omega)}\int_{\Omega}\mathrm{d}^{n-1}x\, |u|^{2}\,;
\end{eqnarray*}
 in the second inequality we have used that
\[
|\Imm\overline{u}\,{A}\cdot\nabla u|\leq  \frac12\Big({\frac{1}{\epsilon^{2}}|u\,A|^{2}+\epsilon^{2}|\nabla u|^{2}}\Big),
\]
and from this  the  statement  the lemma follows. 
\end{proof}
Consider the operators~$H^{A}_{\mathrm D}$, $H^{A}_{\mathrm N}$ given by
\begin{equation*}
\mathrm{Dom}\,H^{A}_{\mathrm D}=\left\{u\in H^{1}(\Omega) |\; H^{A}u\in \mathrm{L}^{2}(\Omega),\;\gamma_{\mathrm{D}}u=0\;\mathrm{in}\; H^{\frac{1}{2}}(\partial\Omega)\right\},\quad H^{A}_{\mathrm D}u:=H^{A}u
\end{equation*}
and
\begin{equation*}
\mathrm{Dom}\,H^{A}_{\mathrm N}=\left\{u\in H^{1}(\Omega) |\; H^{A}u\in \mathrm{L}^{2}(\Omega),\; {\tilde\gamma^{A}_{\mathrm{N}}}u=0\;\mathrm{in}\; H^{-\frac{1}{2}}(\partial\Omega)\right\},\quad H^{A}_{\mathrm N}u:=H^{A}u\,.
\end{equation*}

\begin{proposition}\label{opdirneu}
If~$\Omega$ is a Lipschitz domain with compact boundary, then the operators~$H^{A}_{\mathrm D}$ and $H^{A}_{\mathrm N}$ are self-adjoint.
\end{proposition}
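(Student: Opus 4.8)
The plan is to realize both operators as the self-adjoint operators associated, via the first representation theorem for sesquilinear forms, with the magnetic Dirichlet and Neumann forms, and then to identify the operators so obtained with $H^A_{\mathrm D}$ and $H^A_{\mathrm N}$ as defined above. Concretely, I would set $\mathfrak{h}^A_{\mathrm D}(u,v) = \Phi_A(u,v)$ with form domain $\dot H^{1}(\Omega) = \overline{\mathrm{C}^\infty_0(\Omega)}$ in $H^1(\Omega)$ (equivalently $\tilde H^1_0(\Omega)$, or simply the kernel of $\gamma_{\mathrm D}$ on $H^1(\Omega)$), and $\mathfrak{h}^A_{\mathrm N}(u,v) = \Phi_A(u,v)$ with form domain all of $H^1(\Omega)$. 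The coercivity estimate~\eqref{coer} shows that, for $\lambda > C$, the form $\Phi_A(\cdot,\cdot) + \lambda(\cdot,\cdot)_{\mathrm{L}^2(\Omega)}$ is bounded below by $c\,\|\cdot\|^2_{H^1(\Omega)}$ on $H^1(\Omega)$, hence on either form domain; since $\Phi_A$ is obviously bounded on $H^1(\Omega)\times H^1(\Omega)$ and symmetric, both forms are densely defined, symmetric, closed (each form domain is a closed subspace of $H^1(\Omega)$, so complete in the form norm) and bounded below. By the representation theorem they define self-adjoint operators $\hat H^A_{\mathrm D}$ and $\hat H^A_{\mathrm N}$, bounded below.

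Next I would identify these form-operators with the operators defined by the boundary conditions. For the Dirichlet case: if $u \in \mathrm{Dom}\,\hat H^A_{\mathrm D}$ then $u$ lies in the form domain, so $\gamma_{\mathrm D}u = 0$, and testing $\Phi_A(u,v) = (\hat H^A_{\mathrm D}u, v)_{\mathrm{L}^2}$ against $v \in \mathrm{C}^\infty_0(\Omega)$ shows $H^A u = \hat H^A_{\mathrm D}u \in \mathrm{L}^2(\Omega)$ in the distributional sense, so $u \in \mathrm{Dom}\,H^A_{\mathrm D}$ with $H^A_{\mathrm D}u = \hat H^A_{\mathrm D}u$; thus $\hat H^A_{\mathrm D} \subseteq H^A_{\mathrm D}$. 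Since $\hat H^A_{\mathrm D}$ is self-adjoint and $H^A_{\mathrm D}$ is symmetric (a consequence of the integration-by-parts formula~\eqref{intn}, which for $u,w$ in the form domain with $H^A u, H^A w \in \mathrm{L}^2$ gives $(H^A u, w)_{\mathrm{L}^2} = \Phi_A(u,w) = (u, H^A w)_{\mathrm{L}^2}$ because the boundary term $\langle \gamma_{\mathrm D}w, \tilde\gamma^A_N u\rangle_{1/2}$ vanishes when $\gamma_{\mathrm D}w=0$), a self-adjoint operator has no proper symmetric extension, so $\hat H^A_{\mathrm D} = H^A_{\mathrm D}$ and $H^A_{\mathrm D}$ is self-adjoint. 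The Neumann case is parallel, using that the form domain is all of $H^1(\Omega)$: for $u \in \mathrm{Dom}\,\hat H^A_{\mathrm N}$ one gets $H^A u \in \mathrm{L}^2(\Omega)$ and, from~\eqref{intn}, $\langle g, \tilde\gamma^A_N u\rangle_{1/2} = \Phi_A(u,G) - \langle l(H^A u), G\rangle_1 = (\hat H^A_{\mathrm N}u, G)_{\mathrm{L}^2} - (H^A u, G)_{\mathrm{L}^2} = 0$ for every $g \in H^{1/2}(\partial\Omega)$ (choosing $G \in H^1(\Omega)$ with $\gamma_{\mathrm D}G = g$), whence $\tilde\gamma^A_N u = 0$ and $u \in \mathrm{Dom}\,H^A_{\mathrm N}$; then the same no-proper-symmetric-extension argument closes the identification.

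The main obstacle is making the integration-by-parts identity~\eqref{intn} do double duty: on the one hand it must be available for \emph{all} $u$ in the form domain with $H^A u \in \mathrm{L}^2(\Omega)$ — which is exactly the content of the extended map $\tilde\gamma^A_N$ from Lemma~\ref{denso} and the discussion around~\eqref{intn}, so this is already in hand — and on the other hand one must be careful that the pairing $\langle l(H^A u), G\rangle_1$ coincides with $(H^A u, G)_{\mathrm{L}^2(\Omega)}$ when $H^A u \in \mathrm{L}^2(\Omega)$, which holds since $l$ is the natural inclusion of $\mathrm{L}^2(\Omega)$ into $H^{-1}(\Omega)$ and $G \in H^1(\Omega)$. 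A secondary point worth stating explicitly is the symmetry of $H^A_{\mathrm N}$: for $u, w \in \mathrm{Dom}\,H^A_{\mathrm N}$, formula~\eqref{intn} gives $(H^A u, w)_{\mathrm{L}^2} = \Phi_A(u,w) - \langle \gamma_{\mathrm D}w, \tilde\gamma^A_N u\rangle_{1/2} = \Phi_A(u,w)$ since $\tilde\gamma^A_N u = 0$, and symmetry of $\Phi_A$ then yields $(H^A u, w)_{\mathrm{L}^2} = (u, H^A w)_{\mathrm{L}^2}$. With symmetry plus the inclusion $\hat H^A_{\mathrm N} \subseteq H^A_{\mathrm N}$ in place, self-adjointness follows as above. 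Everything else — boundedness and lower-semiboundedness of the forms, closedness of the form domains — is routine given~\eqref{coer} and the continuity of $\gamma_{\mathrm D}$ on $H^1(\Omega)$.
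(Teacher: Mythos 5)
Your proposal is correct and follows essentially the same route as the paper: realize $H^A_{\mathrm D}$ and $H^A_{\mathrm N}$ as the operators associated with the closed, lower-semibounded forms $\Phi_A$ on $H^1_0(\Omega)$ and $H^1(\Omega)$ (closedness via~\eqref{coer}), and then identify the form operators with the boundary-condition operators using~\eqref{intn} and the surjectivity of $\gamma_{\mathrm D}$. The only (harmless) variation is at the identification step, where you prove one inclusion and invoke maximality of self-adjoint operators among symmetric extensions, whereas the paper verifies both inclusions directly.
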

\begin{proof} Consider first  the operator~$H^{A}_{\mathrm D}$. Let $\Phi_{A,\mathrm{D}}$ be the following sesquilinear form,
\begin{equation*}
\Phi_{A,\mathrm{D}}(u,v)=\Phi_{A}(u,v),\quad \mathrm{Dom}\,\Phi_{A,\mathrm{D}}=H^{1}_{0}(\Omega);
\end{equation*}
by~\eqref{coer} one can conclude that~$\Phi_{A,\mathrm{D}}$ is closed and so the operator ${\tilde H^{A}_{\mathrm D}}$ defined by
\begin{eqnarray*}
\mathrm{Dom}\,{\tilde H^{A}_{\mathrm D}}&=&\big\{u\in H^{1}_{0}(\Omega)| \;\exists w_{u} \in \mathrm{L}^{2}(\Omega)\;\mathrm{so\;that} \;
\Phi_{A,\mathrm{D}}(v,u)=(v,w_{u})_{\mathrm{L}^{2}(\Omega)},\; \forall v\in H^{1}_{0}(\Omega)\big\}\,, \\
{\tilde H^{A}_{\mathrm D}}u&:=&w_{u}\,,
\end{eqnarray*}
is self-adjoint. To conclude the proof we  show that $H^{A}_{\mathrm D}={\tilde H^{A}_{\mathrm D}}$.

Let  $v\in \mathrm{Dom}\,{\tilde H^{A}_{\mathrm D}}$, then since $\mathrm{C}^{\infty}_{0}(\Omega)\subset H^{1}_{0}(\Omega)$ we have
\begin{equation*}
\int_{\Omega}\mathrm{d}^{n}x \,\overline{u}\,w_{v}=\Phi_{A,\mathrm{D}}(u,v)=\int_{\Omega}\mathrm{d}^{n}x \,\overline{H^{A}u}\,v,\quad \forall u\in \mathrm{C}^{\infty}(\Omega),
\end{equation*}
  where the last equality follows by~\eqref{intp}. Thus, $w_{v}=H^{A}v$ in the sense of distributions, in particular $H^{A}v\in \mathrm{L}^{2}(\Omega)$, and so ${\tilde H^{A}_{\mathrm D}}\subset H^{A}_{\mathrm D}$.

Now, let  $v\in\mathrm{Dom}\,H^{A}_{\mathrm D}$; by taking $w_{v}:= H^{A}v\in \mathrm{L}^{2}(\Omega)$ and using~\eqref{intn} we have
\begin{equation*}
\int_{\Omega}\mathrm{d}^{n}x \,\overline{u}\,w_{v}=\int_{\Omega}\mathrm{d}^{n}x \,\overline{u}\,H^{A}v=\Phi_{A,\mathrm{D}}(u,v)
\end{equation*}
for all $u\in H^{1}_{0}(\Omega)$, and so  $ H^{A}_{\mathrm D}\subset{\tilde H^{A}_{\mathrm D}}$, which concludes the proof that~$H^{A}_{\mathrm D}$ is self-adjoint.

Now we address $H^{A}_{\mathrm N}$. Let $\Phi_{A,\mathrm{N}}$ be  the following sesquilinear form
\begin{equation*}
\Phi_{A,\mathrm{N}}(u,v)=\Phi_{A}(u,v),\quad \mathrm{Dom}\,\Phi_{A,\mathrm{N}}=H^{1}(\Omega)\,.
\end{equation*}
By using equation~\eqref{coer}, and the fact that $\Phi_{A,\mathrm{N}}$ is nonnegative (so bounded from below), we  conclude that~$\Phi_{A,\mathrm{N}}$ is closed and so ${\tilde H^{A}_{\mathrm N}}$, given by
\begin{eqnarray*}
\mathrm{Dom}\,{\tilde H^{A}_{\mathrm N}}&=&\big\{u\in H^{1}(\Omega)| \;\exists w_{u} \in \mathrm{L}^{2}(\Omega)\;\mathrm{so\;that}\;\Phi_{A,\mathrm{N}}(v,u)=(v,w_{u})_{\mathrm{L}^{2}(\Omega)},\; \forall v\in H^{1}(\Omega)\big\}, \\
{\tilde H^{A}_{\mathrm N}}u&=&w_{u},
\end{eqnarray*}
is self-adjoint. To conclude we  show that $H^{A}_{\mathrm N}={\tilde H^{A}_{\mathrm N}}$.

Let  $v\in \mathrm{Dom}\,{\tilde H^{A}_{\mathrm N}}$, then since $\mathrm{C}^{\infty}_{0}(\Omega)\subset H^{1}(\Omega)$, we have
\begin{equation*}
\int_{\Omega}\mathrm{d}^{n}x \,\overline{u}\,w_{v}=\Phi_{A,\mathrm{N}}(u,v)=\int_{\Omega}\mathrm{d}^{n}x \,\overline{H^{A}u}\,v
\end{equation*}
for all $u\in \mathrm{C}^{\infty}_{0}(\Omega)$, where the last equality follows by~\eqref{intp}. Thus, we have $w_{v}=H^{A}v$ in the sense of distributions, in particular $H^{A}v\in \mathrm{L}^{2}(\Omega)$. Furthermore, by~\eqref{intn} we have, for all $u\in H^{1}(\Omega)$,
\begin{eqnarray*}
\Phi_{A,\mathrm{N}}(u,v)&=&(u,H^{A}v)_{\mathrm{L}^{2}(\Omega)}+\overline{\langle{\tilde\gamma_{\mathrm{D}}u,\gamma_{\mathrm{N}}}v\rangle}_{\frac{1}{2}}\\
&=&(u,w_{v})_{\mathrm{L}^{2}(\Omega)}+\overline{\langle{\gamma_{\mathrm{D}}u,\tilde\gamma_{\mathrm{N}}}v\rangle}_{\frac{1}{2}}\\
&=&\Phi_{A,\mathrm{N}}(u,v)+\overline{\langle{\tilde\gamma_{\mathrm{D}}u,\gamma_{\mathrm{N}}}v\rangle}_{\frac{1}{2}}
\end{eqnarray*}
and so ${\tilde\gamma_{\mathrm{N}}}v=0$ since $\gamma_{\mathrm{D}}:H^{1}(\Omega)\rightarrow H^{\frac{1}{2}}(\Omega)$ is onto. Hence ${\tilde H^{A}_{\mathrm N}}\subset H^{A}_{\mathrm N}$.

On the other hand, if $u\in \mathrm{Dom}\,H^{A}_{\mathrm N}$ then, analogously to the case of the operator~$H^{A}_{\mathrm D}$, one can show that $u\in\mathrm{Dom}\,{\tilde H^{A}_{\mathrm N}}$, and so $H^{A}_{\mathrm N}={\tilde H^{A}_{\mathrm N}}$, that is, the operator $H^{A}_{\mathrm N}$ is self-adjoint.
\end{proof}

\begin{corollary}
 Let $\Omega$ be a bounded Lipschitz domain, then $ H^{A}_{\mathrm D}$ is an operator with discrete spectrum. 
\end{corollary}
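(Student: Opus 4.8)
The plan is to show that $H^{A}_{\mathrm D}$ has compact resolvent; since $H^{A}_{\mathrm D}$ is self-adjoint by Proposition~\ref{opdirneu} and nonnegative (indeed $\Phi_{A}(u,u)=\|\nabla_{A}u\|_{\mathrm{L}^{2}(\Omega)^{n}}^{2}\ge 0$, hence $H^{A}_{\mathrm D}\ge 0$), this is equivalent to the assertion that its spectrum is discrete, that is, it consists of isolated eigenvalues of finite multiplicity whose only possible accumulation point is $+\infty$.

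First I would fix $\mu:=C+1$, with $C$ the constant appearing in~\eqref{coer}, and set $R:=(H^{A}_{\mathrm D}+\mu)^{-1}$, which is a bounded operator everywhere defined on $\mathrm{L}^{2}(\Omega)$ with $\mathrm{Ran}\,R=\mathrm{Dom}\,H^{A}_{\mathrm D}\subseteq H^{1}(\Omega)$. The key quantitative step is to check that $R$ maps $\mathrm{L}^{2}(\Omega)$ \emph{boundedly} into $\bigl(H^{1}(\Omega),\|\cdot\|_{H^{1}(\Omega)}\bigr)$. Given $f\in \mathrm{L}^{2}(\Omega)$, put $u:=Rf\in\mathrm{Dom}\,H^{A}_{\mathrm D}$, so that $(H^{A}_{\mathrm D}+\mu)u=f$; using the form characterization of $H^{A}_{\mathrm D}$ established in the proof of Proposition~\ref{opdirneu} one has $\Phi_{A}(u,u)+\mu\|u\|_{\mathrm{L}^{2}(\Omega)}^{2}=(f,u)_{\mathrm{L}^{2}(\Omega)}$, and then~\eqref{coer} together with the Cauchy--Schwarz inequality gives
\begin{align*}
c\,\|u\|_{H^{1}(\Omega)}^{2} &\le \Phi_{A}(u,u)+\mu\|u\|_{\mathrm{L}^{2}(\Omega)}^{2} = (f,u)_{\mathrm{L}^{2}(\Omega)} \\
&\le \|f\|_{\mathrm{L}^{2}(\Omega)}\,\|u\|_{\mathrm{L}^{2}(\Omega)} \le \|f\|_{\mathrm{L}^{2}(\Omega)}\,\|u\|_{H^{1}(\Omega)},
\end{align*}
whence $\|Rf\|_{H^{1}(\Omega)}\le c^{-1}\|f\|_{\mathrm{L}^{2}(\Omega)}$.

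Finally I would invoke the Rellich--Kondrachov theorem: since $\Omega$ is a \emph{bounded} Lipschitz domain it is a Sobolev extension domain, so the inclusion $\iota\colon H^{1}(\Omega)\hookrightarrow \mathrm{L}^{2}(\Omega)$ is compact. Writing $R$ as the composition $\mathrm{L}^{2}(\Omega)\xrightarrow{\,R\,}H^{1}(\Omega)\xrightarrow{\,\iota\,}\mathrm{L}^{2}(\Omega)$ of a bounded operator followed by a compact one shows that $R$ is compact on $\mathrm{L}^{2}(\Omega)$. Hence $H^{A}_{\mathrm D}+\mu$ --- and therefore $H^{A}_{\mathrm D}$ --- has compact resolvent, and the spectral theorem for self-adjoint operators with compact resolvent yields that $\sigma(H^{A}_{\mathrm D})$ is discrete. (The identical argument run with the form $\Phi_{A,\mathrm N}$ shows that $H^{A}_{\mathrm N}$ likewise has discrete spectrum when $\Omega$ is bounded.) The only input from outside the material already developed is the compactness of $H^{1}(\Omega)\hookrightarrow \mathrm{L}^{2}(\Omega)$ for bounded Lipschitz $\Omega$; beyond correctly assembling the coercivity estimate~\eqref{coer} and the form identity, I do not expect any substantial obstacle here.
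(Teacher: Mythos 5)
Your argument is correct and is essentially the paper's proof: both establish that the resolvent maps $\mathrm{L}^{2}(\Omega)$ boundedly into $H^{1}$ via the coercivity estimate~\eqref{coer} and then conclude compactness of the resolvent from the Rellich embedding for bounded $\Omega$. The only (immaterial) difference is that the paper obtains the a priori bound through Lax--Milgram applied to the boundary value problem, whereas you derive it directly from the form identity using the already-proved self-adjointness and nonnegativity of $H^{A}_{\mathrm D}$; note also that since $\mathrm{Dom}\,H^{A}_{\mathrm D}\subset H^{1}_{0}(\Omega)$ you could use the compactness of $H^{1}_{0}(\Omega)\hookrightarrow\mathrm{L}^{2}(\Omega)$, valid for any bounded open set, and avoid invoking the extension property.
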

\begin{proof}
Indeed, by the Theorem of Lax-Milgram and  equation~\eqref{coer}, we can prove that the problem below  has a unique solution  $u\in H^{1}_{0}(\Omega)$ and we have $\|u\|_{H^{1}(\Omega)}\leq C' \|f\|_{\mathrm{L}^{2}(\Omega)}$ for some $C'>0$,
\begin{eqnarray*}
(H^{A}+C)u&=&f, \quad f\in \mathrm{L}^{2}(\Omega),\\
\gamma_{\mathrm{D}}u&=&0,
\end{eqnarray*}
where $C$ is the constant in the inequality \eqref{coer}.
So, $(H^{A}_{\mathrm D}+C)^{-1} :\mathrm{L}^{2}(\Omega)\rightarrow H^{1}_{0}(\Omega)$ is continuous and  since the inclusion $H^{1}_{0}(\Omega)\hookrightarrow \mathrm{L}^{2}(\Omega)$ is compact if $\Omega$ is bounded, the operator $(H^{A}_{\mathrm D}+C)^{-1}  :\mathrm{L}^{2}(\Omega)\rightarrow \mathrm{L}^{2}(\Omega)$ is compact, thus, $H^{A}_{\mathrm D}$ is discrete.
\end{proof}

\section{Dirichlet and Neumann traces over the maximal domain}\label{cap4}

We  review some facts on traces discussed in~\cite{ADO} and present some generalizations  to the situation with magnetic potential. At the end  we   recall the concept of quasi-convex domain and we will introduce the concept of magnetic Neumann  trace; this will be an important step for the construction of boundary triples for the maximal magnetic operator.

\subsection{Basic facts}

\begin{theorem}\label{gamma2}
Let~$\Omega$ be a Lipschitz domain with compact boundary, and denote by~$\nu$ the unit vector field normal to its boundary $\partial\Omega$. Let \[
\mathcal F := \big\{(g_{0},g_{1})\in H^{1}(\partial\Omega)\dot{+}\mathrm{L}^{2}(\partial\Omega,\mathrm{d}^{n-1}\omega)\mid\;\nabla_{\mathrm{tan}}g_0+g_{1}\nu\in H^{\frac{1}{2}}(\partial\Omega)^{n}\big\}
\]  be equipped with the norm
\begin{equation*}
\|((g_{0},g_{1})\|_{\partial\Omega}=\|g_{0}\|_{H^{1}(\partial\Omega)}+\|g_{1}\|_{\mathrm{L}^{2}(\partial\Omega,\mathrm{d}^{n-1}\omega)}+\|\nabla_{\mathrm{tan}}g_0+g_{1}\nu\|_{ H^{\frac{1}{2}}(\partial\Omega)^{n}}.
\end{equation*}
Then, the operator
\begin{equation*}
\gamma_{2}:H^{2}(\Omega)\rightarrow\mathcal F,\quad \gamma_{2}u:=(\gamma_{\mathrm{D}}u,\gamma_{\mathrm{N}}u)\,,
\end{equation*}
is well defined, linear, bounded and has right inverse that is bounded. Furthermore, the kernel of $\gamma_{2}$ is $H^{2}_{0}(\Omega)$.
\end{theorem}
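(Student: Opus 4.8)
The plan is to establish the three assertions — well-definedness and boundedness of $\gamma_2$, existence of a bounded right inverse, and the identification of $\ker\gamma_2$ — more or less in that order, building on the trace results already recorded in~\eqref{trd} and the characterization of $H^s(\partial\Omega)$ in terms of tangential derivatives from the first lemma of Section~\ref{cap2}. For boundedness, first I would observe that $\gamma_{\mathrm{D}}:H^2(\Omega)\to H^{1}(\partial\Omega)$ is bounded: indeed, for $u\in H^2(\Omega)$ one has $\nabla u\in H^1(\Omega)^n$, so each $\partial_k u$ has a Dirichlet trace in $H^{1/2}(\partial\Omega)$ by~\eqref{trd}, and the tangential derivatives $\partial(\gamma_{\mathrm{D}}u)/\partial\tau_{j,k}=\nu_j\gamma_{\mathrm D}(\partial_k u)-\nu_k\gamma_{\mathrm D}(\partial_j u)$ then lie in $\mathrm{L}^2(\partial\Omega)$ (in fact one gets a little more), which by the quoted lemma puts $\gamma_{\mathrm D}u\in H^1(\partial\Omega)$ with control of the norm. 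Simultaneously, $\gamma_{\mathrm N}u=\nu\cdot\gamma_{\mathrm D}\nabla u\in \mathrm{L}^2(\partial\Omega)$ with $\|\gamma_{\mathrm N}u\|_{\mathrm L^2(\partial\Omega)}\le C\|u\|_{H^2(\Omega)}$. The nontrivial point is the mixed condition $\nabla_{\mathrm{tan}}(\gamma_{\mathrm D}u)+(\gamma_{\mathrm N}u)\nu\in H^{1/2}(\partial\Omega)^n$: the key identity is that this combination equals $\gamma_{\mathrm D}(\nabla u)$ componentwise, because on $\partial\Omega$ the full gradient decomposes as its tangential part plus its normal component times $\nu$, i.e. $\gamma_{\mathrm D}(\partial_k u)=\big(\nabla_{\mathrm{tan}}\gamma_{\mathrm D}u\big)_k+(\nu\cdot\gamma_{\mathrm D}\nabla u)\,\nu_k$; since $\nabla u\in H^1(\Omega)^n$, each $\gamma_{\mathrm D}(\partial_k u)\in H^{1/2}(\partial\Omega)$, giving the claim together with the norm bound. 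Verifying this decomposition identity rigorously — first on smooth functions, then extending by density using Lemma~\ref{denso} (or the density of $\mathrm{C}^\infty_0(\overline\Omega)$ in $H^2(\Omega)$) — is where I expect most of the work to lie.

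Next I would produce the bounded right inverse. The natural strategy is: given $(g_0,g_1)\in\mathcal F$, set $h:=\nabla_{\mathrm{tan}}g_0+g_1\nu\in H^{1/2}(\partial\Omega)^n$, choose for each component a bounded lifting $w_k\in H^1(\Omega)$ with $\gamma_{\mathrm D}w_k=h_k$ (using that $\gamma_{\mathrm D}:H^1(\Omega)\to H^{1/2}(\partial\Omega)$ has a bounded right inverse), and separately choose $u_0\in H^{2}(\Omega)$ — actually it is cleaner to go through a single construction — such that the prescribed data is matched. Concretely, I would take a bounded extension operator realizing $(g_0,h)$ as $(\gamma_{\mathrm D}u,\gamma_{\mathrm D}\nabla u)$ for some $u\in H^2(\Omega)$: this is possible precisely because the compatibility relation $\partial g_0/\partial\tau_{j,k}=\nu_j h_k-\nu_k h_j$ holds (which is exactly the definition of $\mathcal F$ unwound) and because one can build $u$ by first lifting $g_0$ into $H^{3/2}(\Omega)$ and then correcting the normal derivative using a lifting of $g_1$ supported near $\partial\Omega$ with a suitable cutoff in the normal direction; boundedness of each lifting gives boundedness of the composite. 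Then $\gamma_2 u=(g_0,g_1)$ and $\|u\|_{H^2(\Omega)}\le C\|(g_0,g_1)\|_{\partial\Omega}$.

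Finally, for $\ker\gamma_2$: the inclusion $H^2_0(\Omega)\subseteq\ker\gamma_2$ is immediate since $\mathrm{C}^\infty_0(\Omega)$ is dense in $H^2_0(\Omega)$ and $\gamma_2$ kills test functions and is continuous. For the reverse, if $u\in H^2(\Omega)$ with $\gamma_{\mathrm D}u=0$ and $\gamma_{\mathrm N}u=0$, then from the decomposition identity above $\gamma_{\mathrm D}(\partial_k u)=\nabla_{\mathrm{tan}}(\gamma_{\mathrm D}u)_k+(\gamma_{\mathrm N}u)\nu_k=0$ for every $k$, so all first-order partials of $u$ have vanishing Dirichlet trace; combined with $\gamma_{\mathrm D}u=0$ this says $u$ and $\nabla u$ both vanish on $\partial\Omega$ in the trace sense. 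A standard argument — extend $u$ by zero to $\widetilde u$ on $\mathbb R^n$, check that $\widetilde u\in H^2(\mathbb R^n)$ using that the zero-order and first-order traces match across $\partial\Omega$, so $\widetilde u\in H^2(\mathbb R^n)$ with $\supp\widetilde u\subseteq\overline\Omega$, hence $\widetilde u\in H^2_0(\Omega)$ — or alternatively invoke Lemma~\ref{lemmaBerndtLaplac} together with the characterization of $H^2_0(\Omega)$ as the closure of $\mathrm C^\infty_0(\Omega)$, yields $u\in H^2_0(\Omega)$. The main obstacle throughout is the careful treatment of the mixed space $\mathcal F$ and the gradient-decomposition identity on the nonsmooth boundary; once that identity is in hand, boundedness, the right inverse, and the kernel computation all follow by assembling known lifting operators.
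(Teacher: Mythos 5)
Your boundedness argument and your kernel computation are sound: the decomposition identity $\gamma_{\mathrm D}(\partial_k u)=\big(\nabla_{\mathrm{tan}}\gamma_{\mathrm D}u\big)_k+(\gamma_{\mathrm N}u)\nu_k$ is correct (it follows from $(\nabla_{\mathrm{tan}}g)_j=\sum_k\nu_k\,\partial g/\partial\tau_{k,j}=\partial_j g-\nu_j\,\partial_\nu g$ together with $|\nu|=1$, first for smooth functions and then by density of $\mathrm C^\infty_0(\overline\Omega)$ in $H^2(\Omega)$), and the zero-extension argument for the kernel is the standard one. Note, though, that your route is entirely different from the paper's: the paper does not prove any of this directly, but reduces the statement to the bounded case of~\cite{ADO} by fixing a ball $B_r\supset\partial\Omega$ and a cutoff $\varphi$ equal to $1$ near $\partial\Omega$, using $\gamma_2^{\Omega}(u)=\gamma_2^{\Omega\cap B_r}(\varphi u)$ and composing the right inverse of $\gamma_2^{\Omega\cap B_r}$ with an extension operator $H^2(\Omega\cap B_r)\to H^2(\Omega)$.

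The genuine gap is in your construction of the bounded right inverse, which is exactly the hard part of the theorem and the reason the paper outsources it. Your plan --- lift $g_0$ into $H^{3/2}(\Omega)$ and then ``correct the normal derivative using a lifting of $g_1$ supported near $\partial\Omega$ with a suitable cutoff in the normal direction'' --- is the classical smooth-boundary argument and fails here for two reasons. First, a lift of $g_0$ into $H^{3/2}(\Omega)$ is not an element of $H^2(\Omega)$, and there is no bounded lift of a general $g_0\in H^1(\partial\Omega)$ into $H^2(\Omega)$ (if there were, $\gamma_2(H^2(\Omega))$ would contain a full Cartesian product, contradicting the fact, emphasized in the Introduction, that $\mathcal F$ is \emph{not} a product of boundary Sobolev spaces; the membership of $g_0$ in the first factor is constrained by the existence of a companion $g_1$). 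Second, the ``cutoff in the normal direction'' device requires a collar neighborhood or a signed distance function of class $\mathrm C^{1,1}$, neither of which exists for a merely Lipschitz boundary, where $\nu$ is only defined $\omega$-a.e.\ and boundary-flattening maps do not preserve $H^2$. Any correct construction must use the combined datum $h=\nabla_{\mathrm{tan}}g_0+g_1\nu\in H^{\frac12}(\partial\Omega)^n$ as a whole (a Whitney-type extension with first-order jet data satisfying the compatibility relations), which is precisely what is carried out in~\cite{ADO} and what your sketch assumes rather than proves. To repair the proposal with the tools available in this paper, you should follow the authors and invoke the bounded-domain result of~\cite{ADO} after the cutoff reduction.
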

\begin{proof} This is an easy adaptation of the proof for  bounded domains presented in~\cite{ADO}. Fix an open ball~$B_{r}$ of radius~$r>0$ such that $\partial\Omega\subset B_{r/2}$ and a function $\varphi\in \mathrm{C}^{\infty}_{0}(\mathbb{R}^{n})$ such that $\varphi(x)=1,$ for all $x\in  B_{r/2}$, and $\varphi(x)=0,$ for all $x\in  \mathbb{R}^{n}\setminus B_{3r/4}$. Denote by $\gamma_{2}^{L},\; L=\Omega$ or $\Omega\cap B_{r}$, the application $\gamma_{2}$ with domain $H^{2}(\Omega)$ or $H^{2}(\Omega\cap B_{r})$, respectively. Note that, for all $u\in \mathrm{C}^{\infty}_{0}(\overline{\Omega})$ one has $ \gamma_{2}^{\Omega}(u)=\gamma_{2}^{\Omega\cap B_{r}}(\varphi u)$. Hence
\[
\|\gamma_{2}^{\Omega}(u)\|_{\partial\Omega}=\|\gamma_{2}^{\Omega\cap B_{r}}(\varphi u)\|_{\partial(\Omega\cap B_{r})}\leq C\,\|\varphi u\|_{H^{2}(\Omega\cap B_{r})}\leq C'\, \|u\|_{H^{2}(\Omega)}\,.
\]
Since $ \mathrm{C}^{\infty}_{0}(\overline{\Omega})$ is dense in $H^{2}(\Omega)$, from this inequality it follows that $\gamma_{2}$, with domain  $\mathrm{C}^{\infty}_{0}(\overline{\Omega})$, can be extended continuously, and in a unique way, to an application from $H^{2}(\Omega)$ to  $\mathcal F$. To see that this extension has a bounded right inverse, it is enough to note that if $\zeta$ is an inverse of $\gamma_{2}^{\Omega\cap B_{r}}$, then $E\circ\zeta$ is the required inverse of~$\gamma_{2}^{\Omega}$, where $E$ is a continuous extension operator from $H^{2}(\Omega\cap B_{r})$ to $H^{2}(\Omega)$. The last statement of the theorem is an easy consequence of the above construction  and the bounded case discussed in~\cite{ADO}.
\end{proof}

\begin{definition}[\cite{ADO}]
Let~$\Omega$ be a Lipschitz domain with compact boundary. The space $N^{\frac{1}{2}}(\partial\Omega)$ is defined by
\begin{equation*}
N^{\frac{1}{2}}(\partial\Omega):=\left\{g\in \mathrm{L}^{2}(\partial\Omega,\mathrm{d}^{n-1}\omega)\mid\; g\nu_{i}\in H^{\frac{1}{2}}(\partial\Omega),\; 1\leq i\leq n\right\}
\end{equation*}
 and equipped with the norm
 \begin{equation}\label{norn1/2}
 \|g\|_{N^{\frac{1}{2}}(\partial\Omega)}=\big(\sum_{i=1}^{n}\|g\nu_{i}\|_{H^{\frac{1}{2}}(\partial\Omega)}^2\big)^{1/2}.
 \end{equation}
 \end{definition}
 
 The norm~\eqref{norn1/2} is clearly obtained from the inner product
\begin{equation}\label{inner1/2}
 \langle u,v \rangle_{N^{\frac{1}{2}}(\partial\Omega)}=\sum_{i=1}^{n}\langle \nu_{i}u,\nu_{i}v\rangle_{H^{\frac{1}{2}}(\partial\Omega)},
  \end{equation}
 where $\langle u,v \rangle_{H^{\frac{1}{2}}(\partial\Omega)}$ is an inner product in the Hilbert space $H^{\frac{1}{2}}(\partial\Omega)$. This space is a reflexive Banach space that can be  continuously embedded in $ \mathrm{L}^{2}(\partial\Omega,\mathrm{d}^{n-1}\omega)$. Moreover, if~$\Omega$ is bounded and $\partial\Omega\in \mathrm{C}^{1,r}$ with $r>1/2$, we have $N^{\frac{1}{2}}(\partial\Omega)=H^{\frac{1}{2}}(\partial\Omega)$ with equivalent norms; see  Lemma~6.2 of~\cite{ADO}.
 
The following result is a direct consequence of Lemma~6.3 of~\cite{ADO}.
 
 \begin{lemma}\label{tnh2h10}
  Let~$\Omega$ be a Lipschitz domain with compact boundary, then
 \begin{equation*}
 \gamma_{\mathrm{N}}^{A}:H^{2}(\Omega)\cap H^{1}_{0}(\Omega)\rightarrow N^{\frac{1}{2}}(\partial\Omega)
 \end{equation*}
 is well defined, linear, bounded, onto and has a bounded right inverse, moreover its kernel is~$H^{2}_{0}(\Omega)$.
 \end{lemma}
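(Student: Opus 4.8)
The plan is to reduce the statement to the known facts about the (non-magnetic) Neumann trace $\gamma_{\mathrm N}$ together with the multiplication properties of the components $A_j\in\textrm{W}^1_\infty(\overline\Omega)$. Recall that by Lemma~6.3 of~\cite{ADO} the operator $\gamma_{\mathrm N}:H^2(\Omega)\cap H^1_0(\Omega)\to N^{1/2}(\partial\Omega)$ is well defined, linear, bounded, onto, has a bounded right inverse, and has kernel $H^2_0(\Omega)$. The magnetic Neumann trace is defined in~\eqref{TN} by $\gamma_{\mathrm N}^{A}=\nu\cdot\gamma_{\mathrm D}\nabla_A=\nu\cdot\gamma_{\mathrm D}(\nabla+iA)$, so on $H^2(\Omega)\cap H^1_0(\Omega)$ one has the pointwise-on-the-boundary identity
\begin{equation*}
\gamma_{\mathrm N}^{A}u=\gamma_{\mathrm N}u+i\,\nu\cdot\gamma_{\mathrm D}(Au)=\gamma_{\mathrm N}u+i\sum_{j=1}^n\nu_j\,\gamma_{\mathrm D}(A_j u).
\end{equation*}
The first term lands in $N^{1/2}(\partial\Omega)$ by Lemma~6.3 of~\cite{ADO}. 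For the second term, first I would observe that $u\mapsto A_j u$ maps $H^2(\Omega)$ boundedly into $H^2(\Omega)$ by Remark~\ref{rem3} (since $A_j\in\textrm{W}^1_\infty(\overline\Omega)\subset\textrm W^1_\infty(\overline\Omega)$ — actually we only need $H^1$-boundedness here, but $H^2$ is available), and that $A_j u\in H^1_0(\Omega)$ whenever $u\in H^1_0(\Omega)$, because $\gamma_{\mathrm D}(A_j u)=\gamma_{\mathrm D}(A_j)\,\gamma_{\mathrm D}(u)=0$ — here $\gamma_{\mathrm D}(A_j)$ makes sense as the trace of the Lipschitz (hence $H^1$, indeed continuous) function $A_j$. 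Hence $A_j u\in H^2(\Omega)\cap H^1_0(\Omega)$ and $\nu_j\gamma_{\mathrm D}(A_j u)$, being a component of $\gamma_{\mathrm N}^{0}$-type data, lies in $N^{1/2}(\partial\Omega)$; combining, $\gamma_{\mathrm N}^{A}$ is well defined and bounded as a map into $N^{1/2}(\partial\Omega)$, with operator norm controlled by $1+C\max_j\|A_j\|_{\textrm W^1_\infty}$.

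For surjectivity and the bounded right inverse, I would exploit that the difference $\gamma_{\mathrm N}^{A}-\gamma_{\mathrm N}$ is a \emph{bounded} operator from $H^2(\Omega)\cap H^1_0(\Omega)$ into $N^{1/2}(\partial\Omega)$ which, moreover, factors through the trace data and — this is the point I expect to need a short argument — is ``lower order'' relative to $\gamma_{\mathrm N}$ in a sense that lets me invert $\gamma_{\mathrm N}^{A}$ by a Neumann-series / perturbation argument. Concretely: let $\mathcal E$ be the bounded right inverse of $\gamma_{\mathrm N}$ from Lemma~6.3 of~\cite{ADO}, so $\gamma_{\mathrm N}\mathcal E=\mathrm{Id}$ on $N^{1/2}(\partial\Omega)$. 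Then $\gamma_{\mathrm N}^{A}\mathcal E=\mathrm{Id}+R$ with $R:=(\gamma_{\mathrm N}^{A}-\gamma_{\mathrm N})\mathcal E$ bounded on $N^{1/2}(\partial\Omega)$; if $R$ had small norm we could invert $\mathrm{Id}+R$ directly, but in general it need not be small. The clean fix is to note that $\gamma_{\mathrm N}^{A}-\gamma_{\mathrm N}$ is actually \emph{compact} as an operator $H^2(\Omega)\cap H^1_0(\Omega)\to N^{1/2}(\partial\Omega)$ when $\partial\Omega$ is compact: it factors as $u\mapsto(A_1u,\dots,A_nu)\in (H^{2}(\Omega)\cap H_0^1(\Omega))^n$ followed by $\gamma_{\mathrm D}$ and multiplication by $\nu_j$, and the gain of regularity (the product $A_ju$ does not genuinely use two derivatives on $A_j$) combined with compactness of the relevant Sobolev embedding over a compact boundary makes $R$ compact; since $\gamma_{\mathrm N}$ is already onto, Fredholm theory gives that $\gamma_{\mathrm N}^{A}$ is onto with finite-dimensional-defect-free range, and one extracts a bounded right inverse on the (closed, finite-codimension-zero) image. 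Alternatively — and this is probably the slicker route I would actually write — use the gauge-covariance that is central to the paper: $\gamma_{\mathrm N}^{A}$ and $\gamma_{\mathrm N}$ are intertwined by multiplication operators only when $A$ is a gradient, which it need not be, so the Fredholm/compactness argument is the honest one.

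Finally, the kernel: $u\in H^2(\Omega)\cap H^1_0(\Omega)$ with $\gamma_{\mathrm N}^{A}u=0$ means $\gamma_{\mathrm N}u=-i\sum_j\nu_j\gamma_{\mathrm D}(A_ju)$. To conclude $u\in H^2_0(\Omega)$ I would argue that both sides must separately vanish: since $u\in H^1_0(\Omega)$ we have $\gamma_{\mathrm D}u=0$, and by Theorem~\ref{gamma2} the pair $(\gamma_{\mathrm D}u,\gamma_{\mathrm N}u)=(0,\gamma_{\mathrm N}u)$ determines $u$ modulo $H^2_0(\Omega)$; so it suffices to show $\gamma_{\mathrm N}u=0$. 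But from $\gamma_{\mathrm D}(A_ju)=A_j|_{\partial\Omega}\,\gamma_{\mathrm D}u=0$ (product rule for traces of a Lipschitz function times an $H^1$ function with zero trace) we get that the right-hand side is $0$, hence $\gamma_{\mathrm N}u=0$, hence $\gamma_2 u=0$, hence $u\in\ker\gamma_2=H^2_0(\Omega)$; the reverse inclusion $H^2_0(\Omega)\subseteq\ker\gamma_{\mathrm N}^{A}$ is immediate from density of $\mathrm C^\infty_0(\Omega)$ and continuity. The main obstacle is the surjectivity/right-inverse step: getting it cleanly requires either the compactness/Fredholm observation above or an explicit correction of the right inverse $\mathcal E$ of $\gamma_{\mathrm N}$ absorbing the magnetic term, and one must be careful that the correction stays inside $H^2(\Omega)\cap H^1_0(\Omega)$ and is bounded in the $N^{1/2}(\partial\Omega)$ norm.
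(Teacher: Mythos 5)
You have all the pieces in hand but miss the one observation that collapses the whole lemma: on $H^{2}(\Omega)\cap H^{1}_{0}(\Omega)$ the magnetic correction term is identically zero. You yourself write, twice, that $\gamma_{\mathrm D}(A_j u)=\gamma_{\mathrm D}(A_j)\,\gamma_{\mathrm D}(u)=0$ for $u\in H^{1}_{0}(\Omega)$; but then $i\sum_{j}\nu_j\,\gamma_{\mathrm D}(A_j u)=0$, so $\gamma_{\mathrm N}^{A}u=\gamma_{\mathrm N}u$ for every $u\in H^{2}(\Omega)\cap H^{1}_{0}(\Omega)$. The two operators coincide on this subspace, and every assertion of the lemma (well-definedness, linearity, boundedness, surjectivity, bounded right inverse, kernel equal to $H^{2}_{0}(\Omega)$) is then literally Lemma~6.3 of~\cite{ADO} for the ordinary Neumann trace; this is exactly why the paper records the result as a ``direct consequence'' of that lemma with no further argument. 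In particular the bounded right inverse of $\gamma_{\mathrm N}$ supplied by~\cite{ADO} takes values in $H^{2}(\Omega)\cap H^{1}_{0}(\Omega)$, where the two traces agree, so it is automatically a right inverse of $\gamma_{\mathrm N}^{A}$; no correction, Neumann series, or compactness is needed.

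The surjectivity step as you actually wrote it is, moreover, a genuine gap and not merely an inefficiency. From ``$\gamma_{\mathrm N}^{A}-\gamma_{\mathrm N}$ is compact and $\gamma_{\mathrm N}$ is onto'' one cannot conclude that $\gamma_{\mathrm N}^{A}$ is onto: a compact perturbation of a surjective operator has closed range of finite codimension, but that codimension need not be zero (consider the identity on $\ell^{2}$ minus a rank-one orthogonal projection). So ``Fredholm theory gives that $\gamma_{\mathrm N}^{A}$ is onto'' does not follow, and your parenthetical ``finite-codimension-zero'' is asserted rather than proved; had the perturbation genuinely been a nonzero compact operator you would have needed a separate argument to exclude a cokernel. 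Fortunately the perturbation is zero on the domain in question, which is the fact you should lead with. (Your kernel argument is correct, and indeed it is there that you use the vanishing of $\gamma_{\mathrm D}(A_j u)$ explicitly --- the same vanishing that makes the rest of your proof unnecessary.)
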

 
 The next result extends the definition of $\gamma_{\mathrm{D}}$ to the domain of~$H^{A}_{\mathrm{max}}$.

\begin{theorem}\label{gammaN1/2}
Let~$\Omega$ be a Lipschitz domain with compact boundary. Then there exists a unique extension ${\hat\gamma_{\mathrm{D}}}$ of $\gamma_{\mathrm{D}}$,
\begin{equation}\label{gammadmax}
{\hat\gamma_{\mathrm{D}}}:\mathrm{Dom}\,H^{A}_{\mathrm{max}}=\left\{u\in \mathrm{L}^{2}(\Omega)|\; H^{A}u\in \mathrm{L}^{2}(\Omega)\right\}\rightarrow(N^{\frac{1}{2}}(\partial\Omega))^{*},
\end{equation}
that is compatible with~\eqref{trd} in the following sense: for $3/2\geq s>1/2$  and for all $ u \in H^{s}(\Omega)$ with $H^{A}u\in \mathrm{L}^{2}(\Omega)$, one has  ${\hat\gamma_{\mathrm{D}}}u=\gamma_{\mathrm{D}}u$. Furthermore, the range of ${\hat\gamma_{\mathrm{D}}}$ is dense in $(N^{\frac{1}{2}}(\partial\Omega))^{*}$ and the following integration by parts formula holds,
\begin{equation}\label{intp2}
\langle\gamma_{\mathrm{N}}^{A}w,{\hat\gamma_{\mathrm{D}}}u\rangle_{N^{\frac{1}{2}}(\partial\Omega)}=-\left((H^{A}w,u)_{\mathrm{L}^{2}(\Omega)}-(w,H^{A}u)_{\mathrm{L}^{2}(\Omega)}\right),
\end{equation}
where $w\in H^{2}(\Omega)\cap H^{1}_{0}(\Omega)$ and $ u\in \mathrm{L}^{2}(\Omega)$ with $H^{A}u\in \mathrm{L}^{2}(\Omega)$, and
\[
\langle \cdot,\cdot\rangle_{N^{\frac{1}{2}}(\partial\Omega)}:N^{\frac{1}{2}}(\partial\Omega)\times(N^{\frac{1}{2}}(\partial\Omega))^{*}\rightarrow \mathbb C
\]  represents the pairing between a vector and a linear functional on $N^{\frac{1}{2}}(\partial\Omega)$.
\end{theorem}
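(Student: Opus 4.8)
The plan is to construct $\hat\gamma_{\mathrm{D}}$ by duality, using the integration by parts formula \eqref{intp2} itself as the \emph{defining} relation and the surjectivity of $\gamma_{\mathrm{N}}^{A}$ on $N^{1/2}(\partial\Omega)$ (Lemma~\ref{tnh2h10}) to make it a legitimate definition. Concretely, fix $u\in\dom H^{A}_{\mathrm{max}}$. For $w\in H^{2}(\Omega)\cap H^{1}_{0}(\Omega)$ the right-hand side of \eqref{intp2}, namely $-\big((H^{A}w,u)_{\mathrm{L}^2(\Omega)}-(w,H^{A}u)_{\mathrm{L}^2(\Omega)}\big)$, is a well-defined scalar (both $H^{A}w$ and $H^{A}u$ lie in $\mathrm{L}^2(\Omega)$, so this uses only the definition of $\dom H^{A}_{\mathrm{max}}$). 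I would first check that this scalar depends on $w$ only through $\gamma_{\mathrm{N}}^{A}w\in N^{1/2}(\partial\Omega)$: if $\gamma_{\mathrm{N}}^{A}w=0$ then $w\in H^{2}_{0}(\Omega)=\dom H^{A}_{\mathrm{min}}$ (the kernel statement in Lemma~\ref{tnh2h10}, together with Lemma~\ref{lemmaBerndtLaplac} when $\Omega$ is bounded, and a localization/cutoff argument as in Theorem~\ref{gamma2} for the unbounded compact-boundary case), and then for $w\in H^{2}_{0}(\Omega)=\overline{\mathrm{C}^{\infty}_{0}(\Omega)}^{\|\cdot\|_{H^{2}}}$ the expression $(H^{A}w,u)_{\mathrm{L}^2}-(w,H^{A}u)_{\mathrm{L}^2}$ vanishes by the very definition of $H^{A}_{\mathrm{max}}=(H^{A}_{0})^{*}$ (Remark~\ref{rem1}, Lemma~\ref{lemmaBerndtLaplac}) and continuity in $\|\cdot\|_{H^{2}(\Omega)}$. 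Hence the map $\gamma_{\mathrm{N}}^{A}w\mapsto -\big((H^{A}w,u)-(w,H^{A}u)\big)$ is a well-defined linear functional on $N^{1/2}(\partial\Omega)$.

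Next I would show this functional is bounded, i.e.\ an element of $(N^{1/2}(\partial\Omega))^{*}$. Using the bounded right inverse of $\gamma_{\mathrm{N}}^{A}:H^{2}(\Omega)\cap H^{1}_{0}(\Omega)\to N^{1/2}(\partial\Omega)$ from Lemma~\ref{tnh2h10}, given $g\in N^{1/2}(\partial\Omega)$ pick $w_{g}$ with $\gamma_{\mathrm{N}}^{A}w_{g}=g$ and $\|w_{g}\|_{H^{2}(\Omega)}\le c\,\|g\|_{N^{1/2}(\partial\Omega)}$; then
\begin{equation*}
\big|\,(H^{A}w_{g},u)_{\mathrm{L}^{2}(\Omega)}-(w_{g},H^{A}u)_{\mathrm{L}^{2}(\Omega)}\,\big|\le \big(\|H^{A}w_{g}\|_{\mathrm{L}^{2}(\Omega)}+\|H^{A}u\|_{\mathrm{L}^{2}(\Omega)}\big)\,\|u\|_{?}\,,
\end{equation*}
which I would bound by $C\,\|w_{g}\|_{H^{2}(\Omega)}\,\big(\|u\|_{\mathrm{L}^{2}(\Omega)}+\|H^{A}u\|_{\mathrm{L}^{2}(\Omega)}\big)\le C'\|g\|_{N^{1/2}(\partial\Omega)}\|u\|_{A}$ (using boundedness of $M_{f}$ on $H^{2}$, Remark~\ref{rem3}, to control $\|H^{A}w_{g}\|_{\mathrm{L}^{2}}$ by $\|w_{g}\|_{H^{2}}$). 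This defines $\hat\gamma_{\mathrm{D}}u\in(N^{1/2}(\partial\Omega))^{*}$ with $\|\hat\gamma_{\mathrm{D}}u\|_{(N^{1/2})^{*}}\le C'\|u\|_{A}$, so $\hat\gamma_{\mathrm{D}}$ is bounded from $(\dom H^{A}_{\mathrm{max}},\|\cdot\|_{A})$ into $(N^{1/2}(\partial\Omega))^{*}$; by construction \eqref{intp2} holds. Uniqueness is immediate since any extension satisfying \eqref{intp2} is forced to equal this one by surjectivity of $\gamma_{\mathrm{N}}^{A}$.

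For compatibility with \eqref{trd}: take $u\in H^{s}(\Omega)$, $1/2<s\le 3/2$, with $H^{A}u\in\mathrm{L}^{2}(\Omega)$, i.e.\ $u\in H^{s}(A,\Omega)$. By Lemma~\ref{denso}, $\mathrm{C}^{\infty}_{0}(\overline{\Omega})$ is dense in $H^{s}(A,\Omega)$ in the norm $\|\cdot\|_{A,s}$, and on $\mathrm{C}^{\infty}_{0}(\overline{\Omega})$ the classical Green formula (iterating the integration by parts Lemma and \eqref{intn}) gives $\langle\gamma_{\mathrm{N}}^{A}w,\gamma_{\mathrm{D}}u\rangle_{N^{1/2}(\partial\Omega)}=-\big((H^{A}w,u)_{\mathrm{L}^2}-(w,H^{A}u)_{\mathrm{L}^2}\big)$; passing to the limit (both sides continuous in $\|\cdot\|_{A,s}$, using that $\gamma_{\mathrm{D}}:H^{s}(\Omega)\to H^{s-1/2}(\partial\Omega)\hookrightarrow(N^{1/2}(\partial\Omega))^{*}$ is continuous) shows the functional defined above coincides with the pairing induced by $\gamma_{\mathrm{D}}u$, hence $\hat\gamma_{\mathrm{D}}u=\gamma_{\mathrm{D}}u$. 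Finally, density of the range: the range of $\hat\gamma_{\mathrm{D}}$ contains $\gamma_{\mathrm{D}}(H^{1}(\Omega)\cap\dom H^{A}_{\mathrm{max}})\supseteq\gamma_{\mathrm{D}}(H^{2}(\Omega)\cap H^{1}_{0}(\Omega))$ paired suitably — more directly, I would argue by contradiction: if $\varphi\in N^{1/2}(\partial\Omega)$ annihilates the range of $\hat\gamma_{\mathrm{D}}$, then choosing $u$ ranging over $H^{2}(\Omega)$ and using Theorem~\ref{gamma2} (surjectivity of $\gamma_{2}$, so $\gamma_{\mathrm{D}}$ is onto a dense set) forces $\varphi=0$. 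The main obstacle I anticipate is the well-definedness step — precisely, establishing $\ker\gamma_{\mathrm{N}}^{A}=H^{2}_{0}(\Omega)$ on $H^{2}(\Omega)\cap H^{1}_{0}(\Omega)$ and that $H^{2}_{0}(\Omega)$-functions kill the bilinear form against all of $\dom H^{A}_{\mathrm{max}}$ in the \emph{unbounded} (compact-boundary) case, where one cannot directly invoke $\dom H^{A}_{\mathrm{min}}=H^{2}_{0}(\Omega)$ and must instead run the cutoff argument of Theorem~\ref{gamma2} combined with Lemma~\ref{lemmaBerndtLaplac}.
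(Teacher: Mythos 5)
Your construction is essentially the paper's: both define $\hat\gamma_{\mathrm{D}}u$ by duality against $\gamma^{A}_{\mathrm{N}}w$ using the surjectivity and bounded right inverse of Lemma~\ref{tnh2h10}, verify well-definedness through $\mathrm{Ker}\,\gamma^{A}_{\mathrm{N}}=H^{2}_{0}(\Omega)$ plus density of $\mathrm{C}^{\infty}_{0}(\Omega)$ there and of $\mathrm{C}^{\infty}_{0}(\overline\Omega)$ in $\mathrm{Dom}\,H^{A}_{\mathrm{max}}$, prove compatibility by approximating $u$ in $H^{s}(A,\Omega)$ (Lemma~\ref{denso}) and passing Green's identity to the limit, and get density of the range by an annihilator argument in the bidual reducing to density of smooth traces in $\mathrm{L}^{2}(\partial\Omega)$. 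Two minor remarks: your anticipated obstacle in the unbounded case is not one, since the vanishing of $(H^{A}w,u)-(w,H^{A}u)$ for $w\in H^{2}_{0}(\Omega)$ and arbitrary $u\in\mathrm{Dom}\,H^{A}_{\mathrm{max}}$ already follows from $H^{A}_{\mathrm{max}}=(H^{A}_{0})^{*}$ and $H^{2}$-continuity exactly as you first argue (no cutoff needed); and in the density step the right ingredient is the density of $\gamma_{\mathrm{D}}(\mathrm{C}^{\infty}_{0}(\overline\Omega))$ in $\mathrm{L}^{2}(\partial\Omega)$ (which the paper imports from a trace-density lemma of its main reference), not the surjectivity of $\gamma_{2}$ from Theorem~\ref{gamma2}.
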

\begin{proof}
 Take $u\in \mathrm{Dom}\,H^{A}_{\mathrm{max}}$, and define ${\hat\gamma_{\mathrm{D}}}u\in (N^{\frac{1}{2}}(\partial\Omega))^{*}$ in the following way: take $g\in N^{\frac{1}{2}}(\partial\Omega)$; by Lemma~\ref{tnh2h10} there exists $w\in  H^{2}(\Omega)\cap H^{1}_{0}(\Omega)$ such that $\gamma_{\mathrm{N}}^{A}(w)=g$ and $\|w\|_{H^{2}(\Omega)}\leq C \|g\|_{N^{\frac{1}{2}}(\partial\Omega)}$.
 Define, then, ${\hat\gamma_{\mathrm{D}}}u \in (N^{\frac{1}{2}}(\partial\Omega))^{*}$ through
\begin{equation*}
 \langle g,{\hat\gamma_{\mathrm{D}}}u\rangle_{N^{\frac{1}{2}}(\partial\Omega)}:=-\left((H^{A}w,u)_{\mathrm{L}^{2}(\Omega)}-(w,H^{A}u)_{\mathrm{L}^{2}(\Omega)}\right).
\end{equation*}  
 
 To see that ${\hat\gamma_{\mathrm{D}}}u $ is well defined by this relation, we need to show that the above definition does not depend on the particular choice of $w$ satisfying the above conditions,  which is equivalent to:
if $w\in  H^{2}(\Omega)\cap H^{1}_{0}(\Omega)$  satisfies  $\gamma_{\mathrm{N}}^{A}(w)=0$, then  $(H^{A}w,u)_{\mathrm{L}^{2}(\Omega)}=(w,H^{A}u)_{\mathrm{L}^{2}(\Omega)}$. Note that in this case, by the Lemma~\ref{tnh2h10}, $w\in \mathrm{ker}\gamma_{\mathrm{N}}^{A}= H^{2}_{0}(\Omega)$. If $w\in \mathrm{C}^{\infty}_{0}(\Omega)$ and $u\in  \mathrm{C}^{\infty}_{0}(\overline{\Omega})$, this equality holds; the general case follows from this since, $ \mathrm{C}^{\infty}_{0}(\Omega)$ is dense in $H^{2}_{0}(\Omega)$ and $\mathrm{C}^{\infty}_{0}(\overline{\Omega})$ is dense in $\mathrm{Dom}\,H^{A}_{\mathrm{max}}$. It is clear that such ${\hat\gamma_{\mathrm{D}}}$ is continuous and  does satisfy the required integration by parts formula. The uniqueness also follows from the fact that $\mathrm{C}^{\infty}_{0}(\overline{\Omega})$ is dense in $\mathrm{Dom}\,H^{A}_{\mathrm{max}}$.

 Now we  show that this extension is compatible with~\eqref{trd}. Pick $u \in H^{s}(\Omega)$ with $H^{A}u\in \mathrm{L}^{2}(\Omega)$, $3/2\geq s>1/2$, and consider $u_{i}\in \mathrm{C}^{\infty}_{0}(\overline{\Omega})$ such that $u_{i}\rightarrow u$ in $H^{s}(A,\Omega)$. Pick  $w\in H^{2}(\Omega)\cap H^{1}_{0}(\Omega)$ and $w_{j}\in \mathrm{C}^{\infty}_{0}(\overline{\Omega})$ such that $w_{j}\rightarrow w$ in $H^{2}(\Omega)$. Fix then~$i$ and consider the following vector field,
\begin{equation*}
G_{j}=\overline{u_{i}}\,\nabla_{A}w_{j} \in \mathrm{H}^{1}(\Omega),\quad j\in \mathbb{N};
\end{equation*}
by an easy calculation we have
\begin{eqnarray*}
\mathrm{div}\, G_{j}&=& \overline{\nabla_{A}u_i}\nabla_{A}w_{j}-\overline{u_{i}}\,H^{A}w_{j}\,,\\
\nu\cdot\gamma_{\mathrm{D}}G_{j}&=&\overline{\gamma_{\mathrm{D}}u_i}\,\gamma_{\mathrm{N}}^{A}w_{j}\,.
\end{eqnarray*} 
Thus, 
\begin{eqnarray*}
(u_{i},H^{A}w)_{\mathrm{L}^{2}(\Omega)}&=&\lim_{j\rightarrow\infty}(u_{i},H^{A}w_{j})_{\mathrm{L}^{2}(\Omega)}\\
&=&\lim_{j\rightarrow\infty}\Big\{\Big(-\int_{\Omega}\mathrm{d}^{n}x \,\;\mathrm{div}\,G_{j}\Big)+\Phi_{A}(u_{i},w_{j})\Big\}\\
&=&\lim_{j\rightarrow\infty}\Big(-\int_{\partial\Omega}\mathrm{d}^{n-1}\omega\; \overline{\gamma_{\mathrm{D}}u_{i}}\,\gamma_{\mathrm{N}}^{A}w_{j}+\Phi_{A}(u_{i},w_{j})\Big)\\
&=&\lim_{j\rightarrow\infty}\Big((H^{A}u_{i},w_{j})_{\mathrm{L}^{2}(\Omega)}+(\gamma_{\mathrm{N}}^{A}u_{i},\gamma_{\mathrm{D}}w_{j})_{\mathrm{L}^{2}(\Omega)}-\int_{\partial\Omega}\mathrm{d}^{n-1}\omega\; \overline{\gamma_{\mathrm{D}}u_{i}}\,\gamma_{\mathrm{N}}^{A}w_{j} \Big)\\
&=&(H^{A}u_{i},w)_{\mathrm{L}^{2}(\Omega)}-\int_{\partial\Omega}\mathrm{d}^{n-1}\omega\; \overline{\gamma_{\mathrm{D}}u_{i}}\,\gamma_{\mathrm{N}}^{A}w
\end{eqnarray*}
where in the last equality we have used that $\gamma_{\mathrm{D}}w_{j}\rightarrow\gamma_{\mathrm{D}}w=0$. Hence
\begin{equation*}
(u_{i},H^{A}w)_{\mathrm{L}^{2}(\Omega)}=(H^{A}u_{i},w)_{\mathrm{L}^{2}(\Omega)}-\int_{\partial\Omega}\mathrm{d}^{n-1}\omega\; \overline{\gamma_{\mathrm{D}}u_{i}}\,\gamma_{\mathrm{N}}^{A}w\,.
\end{equation*}
By taking $i\rightarrow\infty$ in the last equation we obtain
\begin{equation*}
\int_{\partial\Omega}\mathrm{d}^{n-1}\omega\; \overline{\gamma_{\mathrm{D}}u}\,\gamma_{\mathrm{N}}^{A}w=\Big((H^{A}u,w)_{\mathrm{L}^{2}(\Omega)}-(u,H^{A}w)_{\mathrm{L}^{2}(\Omega)}\Big)=\overline{\langle\gamma_{\mathrm{N}}^{A}w,{\hat\gamma_{\mathrm{D}}}u\rangle}_{N^{\frac{1}{2}}(\partial\Omega)}\,,
\end{equation*}
and since $\gamma_{\mathrm{N}}^{A}:H^{2}(\Omega)\cap H^{1}_{0}(\Omega)\rightarrow N^{\frac{1}{2}}(\partial\Omega)$ is onto, the above equation shows that $\gamma_{\mathrm{D}}u$ and ${\hat\gamma_{\mathrm{D}}}u$  coincide as antilinear functionals in $N^{\frac{1}{2}}(\partial\Omega)$, in other words $\gamma_{\mathrm{D}}u={\hat\gamma_{\mathrm{D}}}u$.

 Finally, we only need to show that ${\hat\gamma_{\mathrm{D}}}$ has a dense range; to see this we  show that $\big\{u|_{\partial\Omega}\mid\, u\in \mathrm{C}^{\infty}_{0}(\overline\Omega)\big\}$ is dense in $(N^{\frac{1}{2}}(\partial\Omega))^{*}$. Take then $\Psi$ in $((N^{\frac{1}{2}}(\partial\Omega))^{*})^{*}=N^{\frac{1}{2}}(\partial\Omega)\hookrightarrow \mathrm{L}^{2}(\partial\Omega)$ that is zero over $\big\{u|_{\partial\Omega}\mid\, u\in \mathrm{C}^{\infty}_{0}(\overline\Omega)\big\}$; to conclude it is enough to show that~$\Psi$ is zero.
By  Lemma~\ref{tnh2h10} there exists $w\in H^{2}(\Omega)\cap H^{1}_{0}(\Omega)$ such that $\gamma_{\mathrm{N}}^{A}w=\Psi$, and so
\begin{equation*}
\langle\gamma_{\mathrm{N}}^{A}w,{\hat\gamma_{\mathrm{D}}}u\rangle_{N^{\frac{1}{2}}(\partial\Omega)}=0,\quad \forall u\in \mathrm{C}^{\infty}_{0}(\overline{\Omega}).
\end{equation*}

By using this equation with~\eqref{intp2} we obtain
\begin{equation*}
(H^{A}w,u)_{\mathrm{L}^{2}(\Omega)}=(w,H^{A}u)_{\mathrm{L}^{2}(\Omega)},\quad \forall u\in \mathrm{C}^{\infty}(\overline{\Omega})\,;
\end{equation*}
on the other hand, by~\eqref{intp} we have
\begin{equation*}
(H^{A}w,u)_{\mathrm{L}^{2}(\Omega)}=(w,H^{A}u)_{\mathrm{L}^{2}(\Omega)}-(\gamma_{\mathrm{N}}^{A}w,\gamma_{\mathrm{D}}u)_{\mathrm{L}^{2}(\partial\Omega)}.
\end{equation*}
By these equations 
\begin{equation*}
(\Psi,\gamma_{\mathrm{D}}u)_{\mathrm{L}^{2}(\partial\Omega)}=(\gamma_{\mathrm{N}}^{A}w,\gamma_{\mathrm{D}}u)_{\mathrm{L}^{2}(\partial\Omega)}=0,\quad \forall u\in \mathrm{C}^{\infty}_{0}(\overline{\Omega}),
\end{equation*}
and since $\gamma_{\mathrm{D}} \mathrm{C}^{\infty}_{0}(\overline{\Omega})$ is dense in $\mathrm{L}^{2}(\partial\Omega)$ (this follows by Lemma~3.1 of~\cite{ADO} and   Lemma \ref{denso}), it then follows that $\Psi=0$, and this proves the statement.
\end{proof}

As a simple consequence of this result we recover Corollary~6.5 of \cite{ADO}.

\begin{corollary}\label{corww} Let  $\Omega$ be a Lipschitz domain with compact boundary, then each of the following inclusions is continuous and has dense range. More over, the pairing between $(N^{1/2}(\partial\Omega))^{*}$ and $N^{1/2}(\partial\Omega)$ is compatible with the inner product of $L^{2}(\Omega)$.
\begin{equation*}
N^{1/2}(\partial\Omega)\hookrightarrow \mathrm{L}^{2}(\partial\Omega)\hookrightarrow (N^{1/2}(\partial\Omega))^{*}.
\end{equation*}
\end{corollary}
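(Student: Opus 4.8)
The plan is to treat the two embeddings separately; once it is recorded that $\nu$ is a unit field, everything is soft functional analysis \emph{except} the density of $N^{\frac12}(\partial\Omega)$ in $\mathrm{L}^2(\partial\Omega)$, which is the one point that really uses the boundary.

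\emph{The inclusion $N^{\frac12}(\partial\Omega)\hookrightarrow\mathrm{L}^2(\partial\Omega)$ and the compatibility of the pairing.} Continuity is the elementary estimate already noted after the definition of $N^{\frac12}(\partial\Omega)$: since $\sum_{i=1}^n\nu_i^2=1$ holds $\omega$-a.e.\ on $\partial\Omega$, for every $g\in N^{\frac12}(\partial\Omega)$ one has $\|g\|_{\mathrm{L}^2(\partial\Omega)}^2=\sum_{i=1}^n\|g\nu_i\|_{\mathrm{L}^2(\partial\Omega)}^2\le\sum_{i=1}^n\|g\nu_i\|_{H^{\frac12}(\partial\Omega)}^2=\|g\|_{N^{\frac12}(\partial\Omega)}^2$, so the embedding $j$ is a contraction and in particular injective. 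The claimed compatibility is then just the way the identifications are set up: under $(\mathrm{L}^2(\partial\Omega))^{*}\cong\mathrm{L}^2(\partial\Omega)$ via $h\mapsto\int_{\partial\Omega}\overline{(\cdot)}\,h\,\mathrm{d}^{n-1}\omega$, an element $f\in\mathrm{L}^2(\partial\Omega)$ acts on $g\in N^{\frac12}(\partial\Omega)$ through $\int_{\partial\Omega}\overline{g}f\,\mathrm{d}^{n-1}\omega$, so $\langle\cdot,\cdot\rangle_{N^{\frac12}(\partial\Omega)}$ restricts to the $\mathrm{L}^2(\partial\Omega)$ inner product on $\mathrm{L}^2(\partial\Omega)\times N^{\frac12}(\partial\Omega)$, which in view of~\eqref{intp2} is also the pairing consistent with the $\mathrm{L}^2(\Omega)$ inner products in the integration by parts formula.

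\emph{The inclusion $\mathrm{L}^2(\partial\Omega)\hookrightarrow(N^{\frac12}(\partial\Omega))^{*}$.} By the description just given, this map is precisely the Banach-space adjoint $j^{*}$ of the continuous injection $j\colon N^{\frac12}(\partial\Omega)\hookrightarrow\mathrm{L}^2(\partial\Omega)$, hence automatically bounded. That $j^{*}$ is injective (i.e.\ genuinely an inclusion) is equivalent to $j$ having dense range, handled below; that $j^{*}$ has dense range follows because $N^{\frac12}(\partial\Omega)$ is reflexive, so $\overline{\mathrm{ran}\,j^{*}}=(\ker j)^{\perp}=(N^{\frac12}(\partial\Omega))^{*}$, as $\ker j=\{0\}$. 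Alternatively, and avoiding reflexivity, the proof of Theorem~\ref{gammaN1/2} already shows that $\gamma_{\mathrm{D}}\,\mathrm{C}^{\infty}_{0}(\overline{\Omega})\subset\mathrm{L}^2(\partial\Omega)$ is dense in $(N^{\frac12}(\partial\Omega))^{*}$, which gives the density of the range of $j^{*}$ a fortiori.

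\emph{The main obstacle.} The only nontrivial ingredient is that $N^{\frac12}(\partial\Omega)$ is dense in $\mathrm{L}^2(\partial\Omega)$ (equivalently, that $j^{*}$ is injective); this genuinely uses the boundary, since for a general Lipschitz $\partial\Omega$ one only has $\nu\in(\mathrm{L}^{\infty}(\partial\Omega))^{n}$, multiplication by $\nu_i$ does not preserve $H^{\frac12}(\partial\Omega)$, and $N^{\frac12}(\partial\Omega)$ may be strictly smaller than $\mathrm{L}^2(\partial\Omega)$. I would argue as for Corollary~6.5 of~\cite{ADO}: by Lemma~\ref{tnh2h10} one has $N^{\frac12}(\partial\Omega)=\gamma^{A}_{\mathrm{N}}\big(H^{2}(\Omega)\cap H^{1}_{0}(\Omega)\big)$, so it suffices to show this range is dense in $\mathrm{L}^2(\partial\Omega)$; localizing to a boundary chart via the cut-off used in the proof of Theorem~\ref{gamma2} reduces the statement to the half-space model where it is classical, while the density statements of Lemma~\ref{denso} and of $\gamma_{\mathrm{D}}\,\mathrm{C}^{\infty}_{0}(\overline{\Omega})$ in $\mathrm{L}^2(\partial\Omega)$, together with the integration by parts identities~\eqref{intp}--\eqref{intp2}, are used to transfer the local estimate back. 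Combining this density with the two paragraphs above completes the proof; the continuity of both inclusions and the compatibility of the pairing are then immediate.
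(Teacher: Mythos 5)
Your argument is correct and follows essentially the same route as the paper's own (admittedly sketched) proof: the direct continuity estimate for the first inclusion, identification of the second inclusion with the Banach adjoint $j^{*}$ together with the reflexivity of $N^{\frac12}(\partial\Omega)$, and the density of $\gamma_{\mathrm D}\,\mathrm{C}^{\infty}_{0}(\overline{\Omega})$ in $(N^{\frac12}(\partial\Omega))^{*}$ from Theorem~\ref{gammaN1/2} and Lemma~\ref{denso} for the dense range of $j^{*}$. You are in fact more careful than the paper on the one genuinely nontrivial point — the density of $N^{\frac12}(\partial\Omega)$ in $\mathrm{L}^{2}(\partial\Omega)$, equivalently the injectivity of $j^{*}$, which the duality formalism does not deliver for free — whereas the paper's sketch passes over it and implicitly defers to Corollary~6.5 of~\cite{ADO}; your outline via Lemma~\ref{tnh2h10} and localization is the right way to fill that in.
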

\begin{proof} We will just sketch the proof.
Denote the first inclusion by $I$; since $N^{1/2}(\partial\Omega)$ is a reflexive Banach space the second inclusion is nothing else than $I^{*}$, the dual of~$I$. It is a  fact about reflexive Banach spaces that an application~$I$ is continuous if, and only if,  $I^{*}$ is continuous, and  further,  $I$ is injective and has dense range if, and only if, the same holds for~$I^{*}$. With this in mind, it is enough to prove the statement of the corollary for one inclusion. The fact that~$I$ is continuous follows directly from the definition of $N^{1/2}(\partial\Omega)$. The compatibility of the pairing and the inner product follows from the fact that the second inclusion is the dual of the first one. To see that the second inclusion has dense range, note that, by  Lemma~\ref{denso},  $\mathrm{C}^{\infty}(\overline{\Omega})$ is dense in $\mathrm{Dom}\,H^{A}_{\mathrm{max}}$, thus by the compatibility and density results of  Theorem~\ref{gammaN1/2}, the set $L=\gamma_{D}(\mathrm{C}^{\infty}(\overline{\Omega}))=\hat{\gamma}_{D}(\mathrm{C}^{\infty}(\overline{\Omega}))$ is dense in $ (N^{1/2}(\partial\Omega))^{*}$; since $L\subset L^{2}(\partial\Omega)$, the statement follows.
\end{proof}

In what follows we  introduce the space $N^{3/2}_{A}(\partial\Omega)$, which is  a generalization of the space $N^{\frac{3}{2}}(\partial\Omega)$ introduced in Section~6 of~\cite{ADO}. In  Section~\ref{cap7}~we will use the space $N^{1/2}(\partial\Omega)$ to construct a boundary triple for $H^{A}_{\mathrm{max}}$ rather the space $N^{3/2}_{A}(\partial\Omega)$, however a similar construction can be done using the space $N^{3/2}_{A}(\partial\Omega)$; see the Remark~\ref{remarkN3/2}.

\begin{definition}\label{defN32A}
Let~$\Omega$ be a Lipschitz domain with compact boundary. Define the space $N^{3/2}_{A}(\partial\Omega)$ by
\begin{eqnarray*}
N^{3/2}_{A}(\partial\Omega)&:=&\left\{g\in H^{1}(\partial\Omega)\mid\; \nabla_{\mathrm{tan}}^{A}g\in H^{\frac{1}{2}}(\partial\Omega)\right\},\quad\mathrm{where}\\
\nabla_{\mathrm{tan}}^{A}g&:=&\left(\nabla_{\mathrm{tan}}-i(\nu\cdot{A})\nu\right)g
\end{eqnarray*}
(above, we have made the abuse of notation that consists of denoting~$\gamma_{\mathrm{D}}{A}$ also by~${A}$),
and  equip  $N^{3/2}_{A}(\partial\Omega)$ with the  norm
\begin{equation*}
\|\cdot\|_{N^{3/2}_{A}(\partial\Omega)}=\|\cdot\|_{H^{1}(\partial\Omega)}+\|\nabla_{\mathrm{tan}}^{A}\cdot\|_{(H^{\frac{1}{2}}(\partial\Omega))^{n}}.
\end{equation*}
\end{definition}

\begin{lemma}
Let~$\Omega$ be a Lipschitz domain with compact boundary, then $N^{3/2}_{A}(\partial\Omega)$ is a reflexive Banach space that can be continuously embedded in $H^{1}(\partial\Omega)$. 
\end{lemma}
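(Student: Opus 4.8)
The plan is to show three things: first that $N^{3/2}_{A}(\partial\Omega)$ is a normed space (the norm axioms are immediate from those on $H^{1}(\partial\Omega)$ and $(H^{\frac{1}{2}}(\partial\Omega))^{n}$); second that it is complete, hence Banach; and third that it is reflexive. The continuous embedding into $H^{1}(\partial\Omega)$ is then obvious, since $\|g\|_{H^{1}(\partial\Omega)}\leq\|g\|_{N^{3/2}_{A}(\partial\Omega)}$ by definition of the norm.

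For completeness, I would take a Cauchy sequence $(g_{k})$ in $N^{3/2}_{A}(\partial\Omega)$. Then $(g_{k})$ is Cauchy in $H^{1}(\partial\Omega)$, so it converges to some $g\in H^{1}(\partial\Omega)$, and $(\nabla^{A}_{\mathrm{tan}}g_{k})$ is Cauchy in $(H^{\frac{1}{2}}(\partial\Omega))^{n}$, so it converges to some $h\in(H^{\frac{1}{2}}(\partial\Omega))^{n}$. The point is to identify $h$ with $\nabla^{A}_{\mathrm{tan}}g$. This uses that $\nabla^{A}_{\mathrm{tan}}=\nabla_{\mathrm{tan}}-i(\nu\cdot A)\nu$ is continuous from $H^{1}(\partial\Omega)$ into $(H^{0}(\partial\Omega))^{n}=(\mathrm{L}^2(\partial\Omega))^n$: indeed $\nabla_{\mathrm{tan}}:H^{1}(\partial\Omega)\to\mathrm{L}^{2}_{\mathrm{tan}}(\partial\Omega,\mathrm{d}^{n-1}\omega)$ is bounded (recalled in Section~\ref{cap2}), and multiplication by $\nu\cdot A$ is bounded on $\mathrm{L}^{2}(\partial\Omega)$ because $A$ has components in $\textrm{W}^{1}_{\infty}(\overline\Omega)$, so $\gamma_{\mathrm{D}}A\in\mathrm{L}^{\infty}(\partial\Omega)^n$ and $\nu$ is bounded. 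Hence $\nabla^{A}_{\mathrm{tan}}g_{k}\to\nabla^{A}_{\mathrm{tan}}g$ in $(\mathrm{L}^{2}(\partial\Omega))^{n}$; but $\nabla^{A}_{\mathrm{tan}}g_{k}\to h$ in $(H^{\frac{1}{2}}(\partial\Omega))^n\hookrightarrow(\mathrm{L}^{2}(\partial\Omega))^n$, and by uniqueness of limits $h=\nabla^{A}_{\mathrm{tan}}g\in(H^{\frac{1}{2}}(\partial\Omega))^{n}$. Thus $g\in N^{3/2}_{A}(\partial\Omega)$ and $g_{k}\to g$ in its norm.

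For reflexivity, the clean approach is to realize $N^{3/2}_{A}(\partial\Omega)$ as a closed subspace of a reflexive space via the isometric embedding
\begin{equation*}
J:N^{3/2}_{A}(\partial\Omega)\longrightarrow H^{1}(\partial\Omega)\times(H^{\frac{1}{2}}(\partial\Omega))^{n},\qquad Jg=(g,\nabla^{A}_{\mathrm{tan}}g),
\end{equation*}
which is an isometry onto its image when the target carries the norm $\|(a,b)\|=\|a\|_{H^{1}(\partial\Omega)}+\|b\|_{(H^{\frac{1}{2}}(\partial\Omega))^n}$ (or an equivalent Hilbertian one). The target is a finite product of reflexive Hilbert spaces $H^{1}(\partial\Omega)$ and $H^{\frac12}(\partial\Omega)$, hence reflexive. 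The image $J(N^{3/2}_{A}(\partial\Omega))$ is closed in the target: this is exactly the completeness argument above rephrased, i.e. if $(g_{k},\nabla^{A}_{\mathrm{tan}}g_{k})\to(a,b)$ in the product, then $a\in N^{3/2}_{A}(\partial\Omega)$ and $b=\nabla^{A}_{\mathrm{tan}}a$, so the limit lies in the image. A closed subspace of a reflexive Banach space is reflexive, and reflexivity is preserved under isometric (more generally, linear topological) isomorphism, so $N^{3/2}_{A}(\partial\Omega)$ is reflexive.

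The only mildly delicate point — which I would flag as the main obstacle, though it is not severe — is the identification of the limit of $(\nabla^{A}_{\mathrm{tan}}g_{k})$, for which one must know that $\nabla^{A}_{\mathrm{tan}}$ is continuous from $H^{1}(\partial\Omega)$ into $(\mathrm{L}^{2}(\partial\Omega))^{n}$ and that $(H^{\frac12}(\partial\Omega))^n$ embeds continuously into $(\mathrm{L}^2(\partial\Omega))^n$ so that $H^{\frac12}$-convergence implies $\mathrm{L}^2$-convergence; both facts are available from the recollections in Section~\ref{cap2} together with $A\in(\textrm{W}^{1}_{\infty}(\overline\Omega))^{n}$. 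Everything else is formal and parallels the treatment of $N^{\frac{1}{2}}(\partial\Omega)$ and of $N^{\frac32}(\partial\Omega)$ in~\cite{ADO}.
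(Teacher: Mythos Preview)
Your proof is correct and follows essentially the same approach as the paper: completeness via the Cauchy-sequence argument (identifying the $H^{1/2}$-limit of $\nabla^A_{\mathrm{tan}}g_k$ with $\nabla^A_{\mathrm{tan}}g$ through the $\mathrm{L}^2$-continuity of $\nabla^A_{\mathrm{tan}}$ on $H^1(\partial\Omega)$), and reflexivity via the isometric embedding $g\mapsto(g,\nabla^A_{\mathrm{tan}}g)$ into the reflexive product $H^1(\partial\Omega)\times(H^{1/2}(\partial\Omega))^n$. Your write-up is in fact slightly more detailed than the paper's on the limit-identification step.
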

\begin{proof}

It is evident that the inclusion of $N^{3/2}_{A}(\partial\Omega)$ in $H^{1}(\partial\Omega)$ is continuous. To see that this  is a Banach space, take a  Cauchy sequence  $\{g_n\}_{n\in {\mathbb N}}$ in $N^{3/2}_{A}(\partial\Omega)$.  From the last statement it follows that $\{g_n\}_{n\in {\mathbb N}}$ is Cauchy in $H^{1}(\partial\Omega)$, and so it converges in this space to  $g$. Thus, $\nabla_{\mathrm{tan}}^{A}g_{n}$ converges in $\mathrm{L}^{2}(\partial\Omega)$ to $\nabla_{\mathrm{tan}}^{A}g$ and since $\nabla_{\mathrm{tan}}^{A}g_{n}$ is also  Cauchy in $ H^{\frac{1}{2}}(\partial\Omega)$, it follows that this sequence  converges to $\nabla_{\mathrm{tan}}^{A}g$ in $ H^{\frac{1}{2}}(\partial\Omega)$. Therefore, $g\in N^{3/2}_{A}(\partial\Omega)$ and $\{g_n\}_{n\in {\mathbb N}}$  converges to $g$ in $N^{3/2}_{A}(\partial\Omega)$, so, $N^{3/2}_{A}(\partial\Omega)$ is a Banach space. The reflexivity property of  $N^{3/2}_{A}(\partial\Omega)$ follows from the fact that $\Psi: g\rightarrow(g,\nabla_{\mathrm{tan}}^{A}g)$ is an isometry  between $N^{3/2}_{A}(\partial\Omega)$ and a closed subspace of  $H^{1}(\partial\Omega)\times(H^{\frac{1}{2}}(\partial\Omega))^{n}$.
\end{proof}
 
 The next lemma shows that if~$\Omega$ is bounded with smooth boundary, then $N^{3/2}_{A}(\partial\Omega)$ coincides with the ordinary space~$H^{\frac32}(\partial\Omega)$; for the definition of the space $H^{\frac32}(\partial\Omega)$ see \cite{STS} (pages 98-99).

\begin{lemma}\label{h3/2}
Let~$\Omega$ be a bounded domain of class  $\mathrm{C}^{1,r}$ with $r>1/2$, then $N^{3/2}_{A}(\partial\Omega)=N^{\frac{3}{2}}(\partial\Omega)=H^{\frac32}(\partial\Omega)$.
\end{lemma}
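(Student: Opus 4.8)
The plan is to treat the two equalities separately. The identity $N^{\frac{3}{2}}(\partial\Omega)=H^{\frac32}(\partial\Omega)$ for a bounded $\mathrm{C}^{1,r}$ domain with $r>1/2$ is contained in (or is an immediate adaptation of) the corresponding statement in Section~6 of~\cite{ADO}: under this regularity the scale $H^s(\partial\Omega)$ extends past $s=1$, $H^{\frac32}(\partial\Omega)$ is well defined, and it admits the tangential-gradient characterization $H^{\frac32}(\partial\Omega)=\{g\in H^1(\partial\Omega)\mid \nabla_{\mathrm{tan}}g\in H^{\frac12}(\partial\Omega)^n\}$ with equivalent norms, which is precisely $N^{\frac{3}{2}}(\partial\Omega)$ (i.e.\ Definition~\ref{defN32A} with ${A}=0$, since $\nabla_{\mathrm{tan}}^{0}=\nabla_{\mathrm{tan}}$). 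So the only genuinely new point is the first equality $N^{3/2}_{A}(\partial\Omega)=N^{\frac{3}{2}}(\partial\Omega)$, and this is the part I would carry out in detail.

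Both spaces consist of functions $g\in H^1(\partial\Omega)$, and by definition $\nabla_{\mathrm{tan}}^{A}g=\nabla_{\mathrm{tan}}g-i(\nu\cdot{A})\nu\,g$. Since for $g\in H^1(\partial\Omega)$ the components of $\nabla_{\mathrm{tan}}g$ already lie in $\mathrm{L}^2(\partial\Omega)$, the two conditions ``$\nabla_{\mathrm{tan}}g\in H^{\frac12}(\partial\Omega)^n$'' and ``$\nabla_{\mathrm{tan}}^{A}g\in H^{\frac12}(\partial\Omega)^n$'' are equivalent as soon as one knows that $(\nu\cdot{A})\nu_i\,g\in H^{\frac12}(\partial\Omega)$ for $1\le i\le n$. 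Thus everything reduces to showing that multiplication by $(\nu\cdot{A})\nu_i$ maps $H^1(\partial\Omega)$ into $H^{\frac12}(\partial\Omega)$: granting this, the set-theoretic equality $N^{3/2}_{A}(\partial\Omega)=N^{\frac{3}{2}}(\partial\Omega)$ is immediate (with the sign of the correction term reversed for the reverse inclusion), and the boundedness of the multiplication operator gives the equivalence of the two norms.

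To obtain this multiplier property, recall that each $A_j\in\mathrm{W}^{1}_{\infty}(\overline{\Omega})$ is the restriction of a bounded, globally Lipschitz function on $\mathbb{R}^n$, hence $\gamma_{\mathrm{D}}A_j$ is bounded and Lipschitz on $\partial\Omega$, in particular $\gamma_{\mathrm{D}}A_j\in\mathrm{C}^{0,r}(\partial\Omega)$; and since $\partial\Omega$ is of class $\mathrm{C}^{1,r}$, the components $\nu_i$ of the outward unit normal belong to $\mathrm{C}^{0,r}(\partial\Omega)$. Therefore $(\nu\cdot{A})\nu_i=\sum_{j=1}^{n}\nu_j\nu_i\,\gamma_{\mathrm{D}}A_j\in\mathrm{C}^{0,r}(\partial\Omega)$. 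Because $r>1/2$, multiplication by a function of $\mathrm{C}^{0,r}(\partial\Omega)$ is a bounded operator on $H^{\frac12}(\partial\Omega)$ — this is the same pointwise-multiplier fact on which the identity $N^{\frac12}(\partial\Omega)=H^{\frac12}(\partial\Omega)$ rests in Lemma~6.2 of~\cite{ADO}. Since $H^{1}(\partial\Omega)\hookrightarrow H^{\frac12}(\partial\Omega)$ continuously, for $g\in H^{1}(\partial\Omega)$ we conclude $(\nu\cdot{A})\nu_i\,g\in H^{\frac12}(\partial\Omega)$ with a norm bound controlled by $\|g\|_{H^{1}(\partial\Omega)}$, which is exactly what was needed.

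The only delicate point is not the magnetic correction term itself but the ambient fact that, for $\mathrm{C}^{1,r}$ boundaries with $r>1/2$, the space $H^{\frac32}(\partial\Omega)$ makes sense and carries the tangential-gradient description; this is precisely why the hypothesis $r>1/2$ is imposed, and it is the same regularity regime in which $\mathrm{C}^{0,r}$ functions act as multipliers of $H^{\frac12}(\partial\Omega)$. Once those facts are borrowed from~\cite{ADO,STS}, the argument above is routine bookkeeping.
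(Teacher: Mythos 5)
Your proposal is correct, and its overall skeleton coincides with the paper's: both reduce the lemma to the boundedness of the correction term $M_{-i(\nu\cdot A)\nu}:H^{1}(\partial\Omega)\rightarrow (H^{\frac12}(\partial\Omega))^{n}$, and both then dispatch $N^{\frac32}(\partial\Omega)=H^{\frac32}(\partial\Omega)$ by citing Lemma~6.8 of~\cite{ADO}. Where you differ is in how that multiplier bound is proved. The paper stays closer to its existing toolkit: it invokes Lemma~3.4 of~\cite{ADO} (boundedness of $M_{\nu}$ on $H^{\frac12}(\partial\Omega)$) to peel off the normal factors, then writes $(\gamma_{\mathrm D}A)u=\gamma_{\mathrm D}(AU)$ for an $H^{1}(\Omega)$-extension $U$ of $u$, and chains the trace estimate with the fact (Remark~\ref{rem3}) that components in $\mathrm{W}^{1}_{\infty}(\overline\Omega)$ multiply $H^{1}(\Omega)$ boundedly. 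You instead argue entirely on the boundary: $\gamma_{\mathrm D}A_j$ is Lipschitz, $\nu_i\in\mathrm{C}^{0,r}(\partial\Omega)$, hence $(\nu\cdot A)\nu_i\in\mathrm{C}^{0,r}(\partial\Omega)$ with $r>1/2$, and such functions are pointwise multipliers of $H^{\frac12}(\partial\Omega)$. Both routes are sound; yours is more direct and makes transparent why the threshold $r>1/2$ enters, while the paper's avoids stating the H\"older-multiplier fact explicitly, using only the normal-vector multiplier result from~\cite{ADO} as a black box together with standard extension and trace machinery. The one thing to be careful about if you write yours up in full is to actually justify (or precisely cite) the assertion that $\mathrm{C}^{0,r}(\partial\Omega)$ with $r>s$ multiplies $H^{s}(\partial\Omega)$; it is standard (a Gagliardo-seminorm computation), but it is the load-bearing step of your version.
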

\begin{proof}
By Lemma~3.4 of~\cite{ADO}, 
\[
M_{\nu}:u\in H^{1/2}(\partial\Omega)\,\longrightarrow\,\nu u\in (H^{1/2}(\partial\Omega))^{n}
\] is well defined and bounded; thus,
\begin{eqnarray*}
\|M_{-i(\nu\cdot A)\nu}u\|_{(H^{1/2}(\partial\Omega))^{n}}&=&\|-i(\nu\cdot A)\nu u\|_{(H^{1/2}(\partial\Omega))^{n}}\\
&\leq& C'''\,\|(\gamma_{D}A)u\|_{(H^{1/2}(\partial\Omega))^{n}}\\
&=&\,\|(\gamma_{D}(AU)\|_{(H^{1/2}(\partial\Omega))^{n}}\\
&\leq & C''\,\|({A}U)\|_{H^{1}(\Omega)}\\
&\leq & C'\,\|u\|_{H^{1/2}(\partial\Omega)}\\
&\leq& C \,\|u\|_{H^{1}(\partial\Omega)}
\end{eqnarray*}
for all $u\in H^{1}(\partial\Omega)\hookrightarrow H^{1/2}(\partial\Omega)$, where $U\in H^{1}(\Omega)$ is such that $\gamma_{D}U=u$ and $\|U\|_{H^{1}(\Omega)} \leq K \|u\|_{H^{1/2}(\partial\Omega)}$.  That is, $M_{-i(\nu\cdot A)\nu}:H^{1}(\partial\Omega)\rightarrow(H^{1/2}(\partial\Omega))^{n}$ is bounded. Since $\nabla_{\mathrm{tan}}^{A}u=\nabla_{\mathrm{tan}}u+M_{-i(\nu\cdot A)\nu}u,$ it follows that $N^{3/2}_{A}(\partial\Omega)=N^{\frac{3}{2}}(\partial\Omega)$ and that $\|\cdot\|_{H^{1}(\partial\Omega)}+\|\nabla_{\mathrm{tan}}^{A}\cdot\|_{H^{\frac{1}{2}}(\partial\Omega)}\approx\|\cdot\|_{H^{1}(\partial\Omega)}+\|\nabla_{\mathrm{tan}}\cdot\|_{H^{\frac{1}{2}}(\partial\Omega)}$; the rest of the proof follows easily by Lemma~6.8 of~\cite{ADO}.
\end{proof}

The next lemma  presents a relationship between the space $N^{3/2}_{A}(\partial\Omega)$ and the operator~$\gamma_{\mathrm{D}}$.
\begin{lemma}\label{gammaN3/2}
Let~$\Omega$ be a Lipschitz domain with compact boundary. Then
\begin{equation*}
\gamma_{\mathrm{D}}:\left\{u\in H^{2}(\Omega)|\; \gamma_{\mathrm{N}}^{A}u=0\right\}\rightarrow N^{3/2}_{A}(\partial\Omega)
\end{equation*}
is well defined, linear, onto and has a bounded right inverse; furthermore, its kernel is~$H^{2}_{0}(\Omega)$.
\end{lemma}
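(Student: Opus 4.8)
The plan is to deduce the whole statement from Theorem~\ref{gamma2} by rewriting the constraint $\gamma_{\mathrm N}^{A}u=0$ in terms of the ordinary Dirichlet and Neumann traces. The essential preliminary is the identity, valid for every $u\in H^{2}(\Omega)$,
\begin{equation*}
\gamma_{\mathrm N}^{A}u=\gamma_{\mathrm N}u+i(\nu\cdot A)\,\gamma_{\mathrm D}u ,
\end{equation*}
which follows from $\gamma_{\mathrm N}^{A}=\nu\cdot\gamma_{\mathrm D}\nabla_{A}=\nu\cdot\gamma_{\mathrm D}(\nabla+iA)$ (see~\eqref{TN}) together with the product rule for the Dirichlet trace, $\gamma_{\mathrm D}(A_{j}u)=(\gamma_{\mathrm D}A_{j})(\gamma_{\mathrm D}u)$, which is legitimate because $A_{j}\in\textrm{W}^{1}_{\infty}(\overline\Omega)$ forces $A_{j}u\in H^{1}(\Omega)$ (Remark~\ref{rem3}); here $A=\gamma_{\mathrm D}A$ with the abuse of notation of Definition~\ref{defN32A}. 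Consequently, for $u$ in the prescribed domain $D:=\{u\in H^{2}(\Omega)\mid\gamma_{\mathrm N}^{A}u=0\}$ one has $\gamma_{2}u=(\gamma_{\mathrm D}u,-i(\nu\cdot A)\gamma_{\mathrm D}u)$.

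\textbf{Well-definedness, boundedness and kernel.} By Theorem~\ref{gamma2}, $\gamma_{2}u\in\mathcal F$, so $\gamma_{\mathrm D}u\in H^{1}(\partial\Omega)$ and $\nabla_{\mathrm{tan}}(\gamma_{\mathrm D}u)-i(\nu\cdot A)\nu\,\gamma_{\mathrm D}u\in H^{1/2}(\partial\Omega)^{n}$; since the latter vector is precisely $\nabla_{\mathrm{tan}}^{A}(\gamma_{\mathrm D}u)$ of Definition~\ref{defN32A}, we get $\gamma_{\mathrm D}u\in N^{3/2}_{A}(\partial\Omega)$, with $\|\gamma_{\mathrm D}u\|_{N^{3/2}_{A}(\partial\Omega)}\le\|\gamma_{2}u\|_{\partial\Omega}\le C\|u\|_{H^{2}(\Omega)}$, the first inequality merely dropping the nonnegative term $\|(\nu\cdot A)\gamma_{\mathrm D}u\|_{\mathrm{L}^{2}(\partial\Omega)}$. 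For the kernel: if $u\in D$ and $\gamma_{\mathrm D}u=0$, the displayed identity gives $\gamma_{\mathrm N}u=0$ as well, so $\gamma_{2}u=0$ and $u\in H^{2}_{0}(\Omega)$ by Theorem~\ref{gamma2}; conversely, any $u\in H^{2}_{0}(\Omega)$ has $\gamma_{\mathrm D}u=\gamma_{\mathrm N}u=0$, hence $\gamma_{\mathrm N}^{A}u=0$, i.e. $u\in D$ with $\gamma_{\mathrm D}u=0$.

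\textbf{Surjectivity with a bounded right inverse.} Given $g\in N^{3/2}_{A}(\partial\Omega)$, I would set $(g_{0},g_{1}):=(g,-i(\nu\cdot A)g)$ and verify $(g_{0},g_{1})\in\mathcal F$ with $\|(g_{0},g_{1})\|_{\partial\Omega}\le C\|g\|_{N^{3/2}_{A}(\partial\Omega)}$: indeed $g_{0}\in H^{1}(\partial\Omega)$, $g_{1}\in\mathrm{L}^{2}(\partial\Omega)$ because $\nu\cdot A\in\mathrm{L}^{\infty}(\partial\Omega)$, and $\nabla_{\mathrm{tan}}g_{0}+g_{1}\nu=\nabla_{\mathrm{tan}}^{A}g\in H^{1/2}(\partial\Omega)^{n}$ by the definition of $N^{3/2}_{A}(\partial\Omega)$. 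Feeding this pair into the bounded right inverse $\zeta$ of $\gamma_{2}$ provided by Theorem~\ref{gamma2}, the function $v:=\zeta(g_{0},g_{1})\in H^{2}(\Omega)$ satisfies $\gamma_{\mathrm D}v=g$ and $\gamma_{\mathrm N}v=-i(\nu\cdot A)g$, whence $\gamma_{\mathrm N}^{A}v=\gamma_{\mathrm N}v+i(\nu\cdot A)\gamma_{\mathrm D}v=0$, i.e. $v\in D$; therefore $g\mapsto\zeta(g,-i(\nu\cdot A)g)$ is a bounded right inverse of $\gamma_{\mathrm D}$ on $D$, and in particular $\gamma_{\mathrm D}$ restricted to $D$ is onto.

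The argument is essentially bookkeeping on top of Theorem~\ref{gamma2}; the single point requiring care is the trace product rule $\gamma_{\mathrm D}(A_{j}u)=(\gamma_{\mathrm D}A_{j})(\gamma_{\mathrm D}u)$, which yields the identity $\gamma_{\mathrm N}^{A}u=\gamma_{\mathrm N}u+i(\nu\cdot A)\gamma_{\mathrm D}u$ and thereby makes the membership $(\gamma_{\mathrm D}u,-i(\nu\cdot A)\gamma_{\mathrm D}u)\in\mathcal F$ coincide \emph{exactly} with $\gamma_{\mathrm D}u\in N^{3/2}_{A}(\partial\Omega)$. It is worth emphasizing that, in contrast to Lemma~\ref{h3/2}, no boundedness of $M_{-i(\nu\cdot A)\nu}$ from $H^{1}(\partial\Omega)$ into $H^{1/2}(\partial\Omega)^{n}$ is needed here, precisely because $\mathcal F$ controls only the combination $\nabla_{\mathrm{tan}}g_{0}+g_{1}\nu$ and not its two summands separately, so the crude bound $\nu\cdot A\in\mathrm{L}^{\infty}(\partial\Omega)$ already suffices.
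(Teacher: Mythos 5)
Your proof is correct and follows essentially the same route as the paper's: both reduce the statement to Theorem~\ref{gamma2} via the identity $\gamma_{\mathrm N}^{A}u=\gamma_{\mathrm N}u+i(\nu\cdot A)\gamma_{\mathrm D}u$, and your right inverse $g\mapsto\zeta(g,-i(\nu\cdot A)g)$ is exactly the paper's composition $\xi\circ l$. Your explicit verification of the kernel is a small addition the paper leaves implicit, and your closing remark explaining why no boundedness of $M_{-i(\nu\cdot A)\nu}$ into $H^{1/2}(\partial\Omega)^{n}$ is needed (unlike in Lemma~\ref{h3/2}) is a correct and worthwhile observation.
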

\begin{proof}

Take $u\in H^{2}(\Omega)$ with $\gamma_{\mathrm{N}}^{A}u=0$; then  
\begin{equation}\label{Pororo}
\gamma_{\mathrm{N}}u=-i(\nu\cdot{A})\gamma_{\mathrm{D}}u\,,
\end{equation}
and so
\begin{equation*}
\gamma_{2}u=(\gamma_{\mathrm{D}}u,\gamma_{\mathrm{N}}u)=(\gamma_{\mathrm{D}}u,-i(\nu\cdot{A})\gamma_{\mathrm{D}}u)\,.
\end{equation*}
Therefore, by Theorem~\ref{gamma2},
\begin{equation*}
\nabla_{\mathrm{tan}}^{A}(\gamma_{\mathrm{D}}u)=\nabla_{\mathrm{tan}}(\gamma_{\mathrm{D}}u)-i(\nu\cdot{A})\nu(\gamma_{\mathrm{D}}u)=\nabla_{\mathrm{tan}}(\gamma_{\mathrm{D}}u)+(\gamma_{\mathrm{N}}u) \nu\in (H^{1/2}(\partial\Omega))^{n}
\end{equation*}
and 
\begin{eqnarray*}
\|\gamma_{\mathrm{D}}u\|_{N^{3/2}_{A}(\partial\Omega)}&=&\|\gamma_{\mathrm{D}}u\|_{H^{1}(\partial\Omega)}+\|\nabla_{\mathrm{tan}}^{A}(\gamma_{\mathrm{D}}u)\|_{(H^{\frac{1}{2}}(\partial\Omega))^{n}}\\
&=&\|\gamma_{\mathrm{D}}u\|_{H^{1}(\partial\Omega)}+\|\nabla_{\mathrm{tan}}^{A}(\gamma_{\mathrm{D}}u)+(\gamma_{\mathrm{N}}u)\|_{(H^{\frac{1}{2}}(\partial\Omega))^{n}}\\
&\leq &\|\gamma_{\mathrm{D}}u\|_{H^{1}(\partial\Omega)}+\|\gamma_{\mathrm{N}}u\|_{\mathrm{L}^{2}(\partial\Omega)}+\|\nabla_{\mathrm{tan}}^{A}(\gamma_{\mathrm{D}}u)+(\gamma_{\mathrm{N}}u)\|_{(H^{\frac{1}{2}}(\partial\Omega))^{n}}\\
&\leq & C\,\|u\|_{H^{2}(\Omega)}\,.
\end{eqnarray*}
Thus, the operator is well defined and bounded. Now we  show the existence of the bounded right inverse. Let~$\xi$ be the right inverse of~$\gamma_{2}$ given by Theorem~\ref{gamma2}, and consider the operator
\begin{eqnarray*}
l:N^{3/2}_{A}(\partial\Omega)&\rightarrow & \big\{(g_{0},g_{1})\in H^{1}(\partial\Omega)\dot{+}\mathrm{L}^{2}(\partial\Omega,\mathrm{d}^{n-1}\omega)\mid  \nabla_{\mathrm{tan}}g_0+g_{1}\nu\in H^{\frac{1}{2}}(\partial\Omega)^{n}\big\}\\
l(g)&:=&(g,-i(\nu\cdot{A})g).
\end{eqnarray*}

The operator  $l$ is bounded; in fact,
\begin{eqnarray*}
\|l(g)\|&=&\|g\|_{H^{1}(\partial\Omega)}+\|-i(\nu\cdot{A})g\|_{\mathrm{L}^{2}(\partial\Omega)}+\|\nabla_{\mathrm{tan}}^{A}g\|_{(H^{\frac{1}{2}}(\partial\Omega))^{n}}\\
&\leq &C\,(\,\|g\|_{H^{1}(\partial\Omega)}+\|\nabla_{\mathrm{tan}}^{A}g\|_{(H^{\frac{1}{2}}(\partial\Omega))^{n}}).
\end{eqnarray*}
Therefore $\xi\circ l$ is also bounded and, for all $u\in \left\{u\in H^{2}(\Omega)|\; \gamma_{\mathrm{N}}^{A}u=0\right\}$, by  equation~\eqref{Pororo},
\begin{equation*}
\gamma_{\mathrm{D}}\circ (\xi\circ l)(u)=\gamma_{\mathrm{D}}(\xi(\gamma_{\mathrm{D}}u,\gamma_{\mathrm{N}}u))=u
\end{equation*}
and
\begin{eqnarray*}
\gamma_{\mathrm{N}}^{A}\circ (\xi\circ l)(u)&=&\gamma_{\mathrm{N}}(\xi(\gamma_{\mathrm{D}}u,\gamma_{\mathrm{N}}u))-i(\nu\cdot{A})\gamma_{\mathrm{D}}(\xi(\gamma_{\mathrm{D}}u,\gamma_{\mathrm{N}}u))\\ &=&\gamma_{\mathrm{N}}u-i(\nu\cdot{A})\gamma_{\mathrm{D}}u=\gamma_{\mathrm{N}}^{A}u=0;
\end{eqnarray*}
thus, $\xi\circ l$ is the continuous right inverse that we were looking for.
\end{proof}

The next result extends the magnetic Neumann trace to the domain of the maximal operator.
\begin{theorem}\label{tnmax}
Let~$\Omega$ be a Lipschitz domain with compact boundary. Then there exists one, and only one, extension  ${\hat\gamma_{\mathrm{N}}^{A}}$ of $\gamma_{\mathrm{N}}^{A}$ with
\begin{equation}\label{tnn3/2}
{\hat\gamma_{\mathrm{N}}^{A}}:\mathrm{Dom}\,H^{A}_{\mathrm{max}}\rightarrow (N^{3/2}_{A}(\partial\Omega))^{*},
\end{equation}
that is compatible with~\eqref{TN} in the following sense: for all $s\geq3/2$ one has
\begin{equation*}
{\hat\gamma_{\mathrm{N}}^{A}}u=\gamma_{\mathrm{N}}^{A}u,\quad\forall u \in H^{s}(\Omega)\quad\mathrm{so\;\;that}\quad H^{A}u\in \mathrm{L}^{2}(\Omega).
\end{equation*}
Furthermore, this extension has dense range and 
\begin{equation}\label{intp3}
\langle{\hat\gamma_{\mathrm{N}}^{A}}u,\gamma_{\mathrm{D}}w\rangle_{N^{\frac{3}{2}}(\partial\Omega)}=-\Big((w,H^{A}u)_{\mathrm{L}^{2}(\Omega)}-(H^{A}w,u)_{\mathrm{L}^{2}(\Omega)}\Big),
\end{equation}
 for all $u\in\mathrm{Dom}\,H^{A}_{\mathrm{max}}$ and $w\in H^{2}(\Omega)$ such that $\gamma_{\mathrm{N}}^{A}w=0$, where 
 \[
 \langle\cdot,\cdot\rangle_{N^{3/2}_{A}(\partial\Omega)}:(N^{3/2}_{A}(\partial\Omega))^{*}\times N^{3/2}_{A}(\partial\Omega)\rightarrow \mathbb C
 \] represents the natural pairing between a functional and a vector in $N^{3/2}_{A}(\partial\Omega)$.
\end{theorem}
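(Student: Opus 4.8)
The plan is to transpose the construction of $\hat\gamma_{\mathrm{D}}$ from Theorem~\ref{gammaN1/2} by interchanging the two trace maps, replacing the use of Lemma~\ref{tnh2h10} by that of Lemma~\ref{gammaN3/2}. Given $u\in\mathrm{Dom}\,H^{A}_{\mathrm{max}}$ and $g\in N^{3/2}_{A}(\partial\Omega)$, first I would invoke Lemma~\ref{gammaN3/2} to choose $w\in H^{2}(\Omega)$ with $\gamma_{\mathrm{N}}^{A}w=0$, $\gamma_{\mathrm{D}}w=g$ and $\|w\|_{H^{2}(\Omega)}\le C\|g\|_{N^{3/2}_{A}(\partial\Omega)}$, and set
\[
\langle\hat\gamma_{\mathrm{N}}^{A}u,g\rangle_{N^{3/2}_{A}(\partial\Omega)}:=-\Big((w,H^{A}u)_{\mathrm{L}^{2}(\Omega)}-(H^{A}w,u)_{\mathrm{L}^{2}(\Omega)}\Big).
\]
Independence of the choice of $w$ reduces, by linearity, to the claim that $\gamma_{\mathrm{D}}w=0$ and $\gamma_{\mathrm{N}}^{A}w=0$ force $(w,H^{A}u)_{\mathrm{L}^{2}(\Omega)}=(H^{A}w,u)_{\mathrm{L}^{2}(\Omega)}$; but then $w\in H^{2}_{0}(\Omega)=\overline{\mathrm{C}^{\infty}_{0}(\Omega)}^{\,\|\cdot\|_{H^{2}(\Omega)}}$ by Lemma~\ref{gammaN3/2}, so approximating $w$ by $\varphi_{n}\in\mathrm{C}^{\infty}_{0}(\Omega)$ in $H^{2}(\Omega)$ (whence $H^{A}\varphi_{n}\to H^{A}w$ in $\mathrm{L}^{2}(\Omega)$) and using $H^{A}_{\mathrm{max}}=(H^{A}_{0})^{*}$ from Lemma~\ref{lemmaBerndtLaplac}, the identity $(\varphi_{n},H^{A}u)_{\mathrm{L}^{2}(\Omega)}=(H^{A}\varphi_{n},u)_{\mathrm{L}^{2}(\Omega)}$ passes to the limit. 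Boundedness of $\hat\gamma_{\mathrm{N}}^{A}$ with respect to the graph norm of $H^{A}_{\mathrm{max}}$ then follows from Cauchy--Schwarz, from $\|H^{A}w\|_{\mathrm{L}^{2}(\Omega)}\le C\|w\|_{H^{2}(\Omega)}$, and from the norm bound on $w$; formula~\eqref{intp3} holds by construction; and uniqueness of the extension follows from density of $\mathrm{C}^{\infty}_{0}(\overline{\Omega})$ in $\mathrm{Dom}\,H^{A}_{\mathrm{max}}$ (Lemma~\ref{denso} with $s=0$).

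For compatibility with~\eqref{TN} I would first take $u\in\mathrm{C}^{\infty}_{0}(\overline{\Omega})$: applying the integration by parts formula~\eqref{intp} to the pair $(w,u)$ gives $\Phi_{A}(w,u)=(H^{A}w,u)_{\mathrm{L}^{2}(\Omega)}$ since $\gamma_{\mathrm{N}}^{A}w=0$, while applying it to $(u,w)$ and using $\Phi_{A}(u,w)=\overline{\Phi_{A}(w,u)}$ yields, after taking conjugates, $\langle\hat\gamma_{\mathrm{N}}^{A}u,\gamma_{\mathrm{D}}w\rangle_{N^{3/2}_{A}(\partial\Omega)}=(\gamma_{\mathrm{D}}w,\gamma_{\mathrm{N}}^{A}u)_{\mathrm{L}^{2}(\partial\Omega)}$; since $\gamma_{\mathrm{D}}$ maps $\{w\in H^{2}(\Omega):\gamma_{\mathrm{N}}^{A}w=0\}$ onto $N^{3/2}_{A}(\partial\Omega)$ (Lemma~\ref{gammaN3/2}), this says precisely that $\hat\gamma_{\mathrm{N}}^{A}u$ is the image of $\gamma_{\mathrm{N}}^{A}u\in\mathrm{L}^{2}(\partial\Omega)$ under the natural inclusion $\mathrm{L}^{2}(\partial\Omega)\hookrightarrow(N^{3/2}_{A}(\partial\Omega))^{*}$. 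For general $u\in H^{s}(\Omega)$, $s\ge3/2$, with $H^{A}u\in\mathrm{L}^{2}(\Omega)$, I would approximate $u$ by $u_{i}\in\mathrm{C}^{\infty}_{0}(\overline{\Omega})$ in $H^{s}(A,\Omega)$ (Lemma~\ref{denso}) and pass to the limit: the left-hand side converges because $\hat\gamma_{\mathrm{N}}^{A}$ is graph-norm continuous, and the right-hand side converges because $\gamma_{\mathrm{N}}^{A}$ (as a map into $\mathrm{L}^{2}(\partial\Omega)$ for $s>3/2$) and $\tilde\gamma_{\mathrm{N}}^{A}$ (as a map into $H^{-1/2}(\partial\Omega)\hookrightarrow(N^{3/2}_{A}(\partial\Omega))^{*}$ for $s=3/2$) are continuous.

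Finally, for density of the range I would copy the argument of Theorem~\ref{gammaN1/2}: by reflexivity of $N^{3/2}_{A}(\partial\Omega)$ it suffices to take $\Psi\in N^{3/2}_{A}(\partial\Omega)=((N^{3/2}_{A}(\partial\Omega))^{*})^{*}$ annihilating the range of $\hat\gamma_{\mathrm{N}}^{A}$ and show $\Psi=0$. By Lemma~\ref{gammaN3/2}, $\Psi=\gamma_{\mathrm{D}}w$ for some $w\in H^{2}(\Omega)$ with $\gamma_{\mathrm{N}}^{A}w=0$; then~\eqref{intp3} gives $(w,H^{A}u)_{\mathrm{L}^{2}(\Omega)}=(H^{A}w,u)_{\mathrm{L}^{2}(\Omega)}$ for every $u\in\mathrm{Dom}\,H^{A}_{\mathrm{max}}$, and combining this, for $u\in\mathrm{C}^{\infty}_{0}(\overline{\Omega})$, with the identity derived in the compatibility step forces $(\Psi,\gamma_{\mathrm{N}}^{A}u)_{\mathrm{L}^{2}(\partial\Omega)}=0$ for all such $u$; since $\gamma_{\mathrm{N}}^{A}(\mathrm{C}^{\infty}_{0}(\overline{\Omega}))$ is dense in $\mathrm{L}^{2}(\partial\Omega)$ (the Neumann counterpart of the density of $\gamma_{\mathrm{D}}(\mathrm{C}^{\infty}_{0}(\overline{\Omega}))$ used in Theorem~\ref{gammaN1/2}, again via Lemma~3.1 of~\cite{ADO} and Lemma~\ref{denso}, as it contains $\{\gamma_{\mathrm{N}}^{A}u:u\in\mathrm{C}^{\infty}_{0}(\overline{\Omega}),\ \gamma_{\mathrm{D}}u=0\}$), we conclude $\Psi=0$.

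The step I expect to be the main obstacle is the compatibility computation: one must carefully track the complex conjugations built into $\Phi_{A}$ and into the various duality pairings, and, at the borderline regularity $s=3/2$, identify precisely which space ($\mathrm{L}^{2}(\partial\Omega)$ versus $H^{-1/2}(\partial\Omega)$) the approximating Neumann traces converge in, so that $\hat\gamma_{\mathrm{N}}^{A}u=\gamma_{\mathrm{N}}^{A}u$ is literally an equality of functionals on $N^{3/2}_{A}(\partial\Omega)$; securing the density of $\gamma_{\mathrm{N}}^{A}(\mathrm{C}^{\infty}_{0}(\overline{\Omega}))$ in $\mathrm{L}^{2}(\partial\Omega)$ is the other point that needs genuine (though routine) care.
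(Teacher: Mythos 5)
Your proposal is correct and follows essentially the same route as the paper: the same definition of $\hat\gamma_{\mathrm{N}}^{A}$ via Lemma~\ref{gammaN3/2}, the same well-definedness/uniqueness argument through $H^{2}_{0}(\Omega)$ and density of $\mathrm{C}^{\infty}_{0}(\overline{\Omega})$ in $\mathrm{Dom}\,H^{A}_{\mathrm{max}}$, and the same duality argument for density of the range. The only cosmetic difference is that the paper verifies compatibility directly for $u\in H^{s}(\Omega)$, $s\geq 3/2$, by combining~\eqref{intp} with~\eqref{intn}, whereas you first treat smooth $u$ and then approximate; both handle the borderline case $s=3/2$ in the same spirit.
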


\begin{proof}

Take  $u\in\mathrm{Dom}\,H^{A}_{\mathrm{max}}$ and define  ${\hat\gamma_{\mathrm{N}}^{A}}u\in(N^{3/2}_{A}(\partial\Omega))^{*}$ in the following way: consider $g\in N^{3/2}_{A}(\partial\Omega)$, by Lemma~\ref{gammaN3/2} there exists $w\in H^{2}(\Omega)$ with $\gamma_{\mathrm{N}}^{A}w=0$ such that $\gamma_{\mathrm{D}}w=g$ and $\|w\|_{H^{2}(\Omega)}\leq C \|g\|_{N^{3/2}_{A}(\partial\Omega)}$, define then,
\begin{equation*}
\langle{\hat\gamma_{\mathrm{N}}^{A}}u,g\rangle_{N^{3/2}_{A}(\partial\Omega)}=-\Big((w,H^{A}u)_{\mathrm{L}^{2}(\Omega)}-(H^{A}w,u)_{\mathrm{L}^{2}(\Omega)}\Big)\,.
\end{equation*}
Analogously to the proof of  Lemma~\ref{gammaN1/2}, one verifies that such ${\hat\gamma_{\mathrm{N}}^{A}}$ is well defined,  unique and bounded. 

We will check now that this definition is coherent with~\eqref{TN}. For  $s\geq 3/2$, take $u\in H^{s}(\Omega)$ with $H^{A}u\in \mathrm{L}^{2}(\Omega)$. Then, for all $w\in H^{2}(\Omega)$ such that $\gamma_{\mathrm{N}}^{A}w=0$, by~\eqref{intp}, one can write
\begin{eqnarray*}
(H^{A}w,u)_{\mathrm{L}^{2}(\Omega)}&=&\Phi_{A}(w,u)-(\gamma_{\mathrm{N}}^{A}w,\gamma_{\mathrm{D}}u)_{\mathrm{L}^{2}(\Omega)}\\
&=&\Phi_{A}(w,u).
\end{eqnarray*}
On the other hand, by~\eqref{intn},
\begin{equation*}
\Phi_{A}(u,w)=(H^{A}u,w)_{\mathrm{L}^{2}(\Omega)}+(\gamma_{\mathrm{N}}^{A}u,\gamma_{\mathrm{D}}w)_{\mathrm{L}^{2}(\Omega)}.
\end{equation*}
Thus, by the  last two equations 
\begin{equation*}
(\gamma_{\mathrm{N}}^{A}u,\gamma_{\mathrm{D}}w)_{\mathrm{L}^{2}(\Omega)}=(u,H^{A}w)_{\mathrm{L}^{2}(\Omega)}-(H^{A}u,w)_{\mathrm{L}^{2}(\Omega)}=\overline{\langle{\hat\gamma_{\mathrm{N}}^{A}}u,\gamma_{\mathrm{D}}w\rangle}_{N^{\frac{3}{2}}(\partial\Omega)}
\end{equation*}
and this shows that ${\hat\gamma_{\mathrm{N}}^{A}}$,  as defined above, is coherent with~\eqref{TN}.

To show that the range of ${\hat\gamma_{\mathrm{N}}^{A}}$ is dense, it is enough to check that $\gamma_{\mathrm{N}}^{A}(\mathrm{C}^{\infty}_{0}(\overline{\Omega}))$ is dense in $(N^{\frac{3}{2}}(\partial\Omega))^{*}$. Let  $\Psi\in((N^{3/2}_{A}(\partial\Omega))^{*})^{*}=N^{3/2}_{A}(\partial\Omega)$ be an antilinear functional  that is zero on  $\gamma_{\mathrm{N}}^{A}(\mathrm{C}^{\infty}_{0}(\overline{\Omega}))$; we are going to show that $\Psi$ is equal to 0. Since $\Psi\in N^{3/2}_{A}(\partial\Omega)$, there is $w \in H^{2}(\Omega)$ with $\gamma_{\mathrm{N}}^{A}w=0$ such that $\gamma_{\mathrm{D}}w=\Psi$. Thus,
\begin{equation*}
\langle{\hat\gamma_{\mathrm{N}}^{A}}u,\gamma_{\mathrm{D}}w\rangle_{N^{3/2}_{A}(\partial\Omega)}=0,\quad\forall u\in \mathrm{C}^{\infty}_{0}(\overline{\Omega})\,.
\end{equation*}
Hence, by equation~\eqref{intp3}, one  concludes that 
\begin{equation*}
(u,H^{A}w)_{\mathrm{L}^{2}(\Omega)}=(H^{A}u,w)_{\mathrm{L}^{2}(\Omega)},\quad \forall u \in \mathrm{C}^{\infty}_{0}(\overline{\Omega}).
\end{equation*}
On the other hand, by~\eqref{intp} one obtains, for all $u \in \mathrm{C}^{\infty}_{0}(\overline{\Omega})$,
\begin{equation*}
(H^{A}w,u)_{\mathrm{L}^{2}(\Omega)}=(\gamma_{\mathrm{D}}w,\gamma_{\mathrm{N}}^{A}u)_{\mathrm{L}^{2}(\Omega)}-(\gamma_{\mathrm{N}}^{A}w,\gamma_{\mathrm{D}}u)_{\mathrm{L}^{2}(\Omega)}+(w,H^{A}u)_{\mathrm{L}^{2}(\Omega)}.
\end{equation*}
Using the fact that $\gamma_{\mathrm{N}}^{A}w=0$ and that $\Psi=\gamma_{\mathrm{D}}w$, it is found that
\begin{equation*}
\int_{\Omega}\mathrm{d}^{n}x \, \overline{\Psi}\gamma_{\mathrm{N}}^{A}u=0, \quad \forall u\in \mathrm{C}^{\infty}_{0}(\overline{\Omega}).
\end{equation*}
 Therefore, to conclude that $\Psi$ is zero it is enough to show that $\gamma_{\mathrm{N}}^{A}(\mathrm{C}^{\infty}_{0}(\overline{\Omega}))$ is dense in~$\mathrm{L}^{2}(\partial\Omega)$. But this follows by  Lemma~\ref{tnh2h10},  Corollary~\ref{corww}, and the fact that $\mathrm{C}^{\infty}_{0}(\overline{\Omega})$ is dense in $H^{2}(\Omega)\cap H^{1}_{0}(\Omega)$.
\end{proof}

\begin{corollary}
 Let~$\Omega$ be a Lipschitz domain with compact boundary. Then, we have the following inclusions
\begin{equation*}
N^{3/2}_{A}(\partial\Omega)\hookrightarrow \mathrm{L}^{2}(\partial\Omega)\hookrightarrow N^{3/2}_{A}(\partial\Omega)^{*}\,,
\end{equation*}
and both have dense ranges, further the pairing between $N^{3/2}_{A}(\partial\Omega)$ and $(N^{3/2}_{A}(\partial\Omega))^{*}$.
\end{corollary}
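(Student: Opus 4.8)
The plan is to mirror, almost verbatim, the proof of Corollary~\ref{corww}, replacing $N^{\frac12}(\partial\Omega)$ by $N^{3/2}_{A}(\partial\Omega)$ and $\hat\gamma_{\mathrm{D}}$ by $\hat\gamma_{\mathrm{N}}^{A}$. Recall that $N^{3/2}_{A}(\partial\Omega)$ is a reflexive Banach space that embeds continuously into $H^{1}(\partial\Omega)$, hence into $\mathrm{L}^{2}(\partial\Omega)$ (the lemma preceding Lemma~\ref{h3/2}); denote this first inclusion by $I$. Its continuity is immediate from Definition~\ref{defN32A}, since $\|\cdot\|_{N^{3/2}_{A}(\partial\Omega)}\ge\|\cdot\|_{H^{1}(\partial\Omega)}\ge\|\cdot\|_{\mathrm{L}^{2}(\partial\Omega)}$. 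Because $\mathrm{L}^{2}(\partial\Omega)$ is a Hilbert space and $N^{3/2}_{A}(\partial\Omega)$ is reflexive, the second inclusion is nothing else than the adjoint $I^{*}\colon\mathrm{L}^{2}(\partial\Omega)\to(N^{3/2}_{A}(\partial\Omega))^{*}$, which sends $f$ to the functional $g\mapsto(f,g)_{\mathrm{L}^{2}(\partial\Omega)}$ on $N^{3/2}_{A}(\partial\Omega)$; this very identification yields the asserted compatibility of the pairing $\langle\cdot,\cdot\rangle_{N^{3/2}_{A}(\partial\Omega)}$ with the inner product of $\mathrm{L}^{2}(\partial\Omega)$.

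Next I would invoke the same facts about reflexive Banach spaces used in Corollary~\ref{corww}: an operator $T$ is bounded iff $T^{*}$ is, and $T$ is injective with dense range iff $T^{*}$ is; consequently it is enough to establish the required properties for one of the two inclusions. Continuity of $I$ is in hand, so the remaining point is density of range. For this I would use Theorem~\ref{tnmax}: by Lemma~\ref{denso}, $\mathrm{C}^{\infty}_{0}(\overline{\Omega})$ is dense in $\mathrm{Dom}\,H^{A}_{\mathrm{max}}$, hence (by continuity of $\hat\gamma_{\mathrm{N}}^{A}$ and density of its range) the set $\hat\gamma_{\mathrm{N}}^{A}(\mathrm{C}^{\infty}_{0}(\overline{\Omega}))$ is dense in $(N^{3/2}_{A}(\partial\Omega))^{*}$; and by the compatibility statement of Theorem~\ref{tnmax}, for $u\in\mathrm{C}^{\infty}_{0}(\overline{\Omega})$ one has $\hat\gamma_{\mathrm{N}}^{A}u=\gamma_{\mathrm{N}}^{A}u=\nu\cdot\gamma_{\mathrm{D}}\nabla_{A}u\in\mathrm{L}^{2}(\partial\Omega)$. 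Thus $\gamma_{\mathrm{N}}^{A}(\mathrm{C}^{\infty}_{0}(\overline{\Omega}))$ is a subspace of $\mathrm{L}^{2}(\partial\Omega)$ that is already dense in $(N^{3/2}_{A}(\partial\Omega))^{*}$, which gives density of the range of $I^{*}$; the duality facts then transfer the missing conclusions to $I$, exactly as in Corollary~\ref{corww}.

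I do not expect a genuine obstacle here: no new estimate is required, and every analytic ingredient—continuity of the inclusion from Definition~\ref{defN32A}, reflexivity of $N^{3/2}_{A}(\partial\Omega)$, density of the range of $\hat\gamma_{\mathrm{N}}^{A}$ and the compatibility $\hat\gamma_{\mathrm{N}}^{A}|_{\mathrm{C}^{\infty}_{0}(\overline{\Omega})}=\gamma_{\mathrm{N}}^{A}$ from Theorem~\ref{tnmax}, and density of $\mathrm{C}^{\infty}_{0}(\overline{\Omega})$ in $\mathrm{Dom}\,H^{A}_{\mathrm{max}}$ from Lemma~\ref{denso}—is already at our disposal. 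The only point deserving a little care is the reflexive-duality bookkeeping: checking that the second inclusion is precisely $I^{*}$ (and not some other embedding) and that, under this identification, the functional pairing restricts to the $\mathrm{L}^{2}(\partial\Omega)$ inner product on $N^{3/2}_{A}(\partial\Omega)$; this is the same argument already carried out in the proof of Corollary~\ref{corww}.
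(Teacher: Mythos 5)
Your proposal matches the paper's proof essentially step for step: continuity of the first inclusion via $N^{3/2}_{A}(\partial\Omega)\hookrightarrow H^{1}(\partial\Omega)\hookrightarrow \mathrm{L}^{2}(\partial\Omega)$, identification of the second inclusion with $I^{*}$, density of its range from Theorem~\ref{tnmax}, and the reflexive-duality transfer back to $I$ exactly as in Corollary~\ref{corww}. No discrepancy to report.
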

\begin{proof}
Denote by~$I$ the first inclusion and by $J$ the second one. Clearly the first inclusion is bounded since $N^{3/2}_{A}(\partial\Omega)\hookrightarrow H^{1}(\partial\Omega)\hookrightarrow \mathrm{L}^{2}(\partial\Omega)$, therefore the second inclusion is also bounded since  $J=I^{*}$. The image of $J$ is dense by Theorem~\ref{tnmax}. The density of the range of~$I$ follows by the injectivity of~$J$. In fact, let  $f\in \mathrm{L}^{2}(\Omega)$ be such that $f(I(N^{3/2}_{A}(\partial\Omega)))=0$ then, $Jf(N^{3/2}_{A}(\partial\Omega))=I^{*}f(N^{3/2}_{A}(\partial\Omega))=f(I(N^{3/2}_{A}(\partial\Omega)))=0$ and so $Jf=0$; since $J$ is injective, it follows that~$f=0$ and this shows that~$I$ has dense range as well.
\end{proof}

\subsection{Quasi-convex domains}\label{sectQCD}

Now we  briefly recall the concept of quasi-convex domains; for more details see the original source~\cite{ADO}, in particular Section~8. Roughly, a quasi-convex domain is a particular class of Lipschitz domain with compact boundary that, either is locally of class $\mathrm{C}^{1,r}$  or its boundary  has local convexity properties. In~\cite{ADO} the authors have shown that, for this class of domains, the functions in $\mathrm{Dom}\,\triangle_{D}$ and $\mathrm{Dom}\,\triangle_{N}$ have the $H^{2}(\Omega)$ regularity. For a general bounded Lipschitz domain, the functions in $\mathrm{Dom}\,\triangle_{D}$ have only the regularity of~$H^{\frac32}(\Omega)$; see Theorem~B.2 in~\cite{TIDP}. Using some results stated in~\cite{ADO} about the regularity of the functions in $\mathrm{Dom}\,\triangle_{D}$, we will prove that the functions in the domain of~$H^{A}_{\mathrm D}$ have the regularity of~$H^{2}(\Omega)$ as well. This result will be used to  classify all self-adjoint extensions of~$H^{A}_{\mathrm{mim}}$.

Let~$\Omega$ be a bounded Lipschitz domain in $\mathbb R^{n}$;~$\Omega$ is said to be of class $MH_{\delta}^{1/2}$, which we denote by $\partial\Omega\in MH_{\delta}^{1/2}$, if for the functions $\varphi_{i}$ in the definition of the Lipschitz domain we have  $\nabla\varphi_{i} \in (MH_{\delta}^{1/2}(\mathbb R^{n-1}))^{n}$ and  $\|\nabla\varphi\|_{(MH_{\delta}^{1/2}(\mathbb R^{n-1}))^{n}}\leq\delta$, where
\begin{equation*}
MH_{\delta}^{1/2}(\mathbb R^{n-1}):=\big\{f\in \mathrm{L}^{1}_{\mathrm{loc}}(\mathbb R^{n})\mid \;M_{f}\in B(H^{\frac12}(\mathbb R^{n}))\big\},
\end{equation*}
 $M_{f}$ is the operator of multiplication by $f$, and $B(H^{\frac12}(\mathbb R^{n}))$ is the set of bounded linear operators on~$H^{\frac12}(\mathbb R^{n})$; furthermore,
\begin{equation*}
\|f\|_{MH_{\delta}^{1/2}(\mathbb R^{n-1})}:=\|M_{f}\|_{B(H^{\frac12}(\mathbb R^{n})}.
\end{equation*}

\begin{definition}
A   Lipschitz domain with compact boundary~$\Omega$ in $\mathbb R^{m}$ is said to be almost-convex if there is a family $\{\Omega_{n}\}_{n\in {\mathbb N}}$ of open sets of $\mathbb R^{m}$ with the following properties:

i) $\partial\Omega_{n}\in \mathrm{C}^{2}$ and ${\overline{\Omega}}_{n}\subset\Omega$, for all $n\in \mathbb N$.

ii) $\Omega_{n}\nearrow\Omega$ as $n\rightarrow\infty$, in the following sense: ${\overline{\Omega}}_{n}\subset\Omega_{n+1}$ for all $n\in {\mathbb N}$ and $\bigcup_{n\in {\mathbb N}} \Omega_{n}=\Omega$.

iii) there are a neighborhood~$U$ of $\partial\Omega$ and a real function $\rho_{n}$ of class $\mathrm{C}^{2}$, for each $n\in {\mathbb N}$, defined in~$U$ such that $\rho_{n}<0$ in $U\cap\Omega_{n}$,  $\rho_{n}>0$ in $U \setminus {\overline{\Omega}}_{n}$ and $\rho_{n}=0$ in $\partial\Omega_{n}$. Furthermore, there exists $C_{1}\in (1,\infty)$ such that
\begin{equation*}
C_{1}^{-1}\leq |\nabla\rho_{n}(x)|\leq C_{1},\quad \forall x\in\partial\Omega_{n}\quad\mathrm{and}\quad \forall n\in \mathbb N.
\end{equation*}

iv) there is $C_{2}\geq 0$ such that, for all $n\in {\mathbb N}$, all  $x\in\Omega$ and all vector $\xi$ tangent to $\partial\Omega_{n}$ in~$x$, one has,
\begin{equation*}
\langle\mathrm{Hess}\rho_{n}(x)\,\xi, \xi\rangle\geq-C_{2}\,|\xi|^{2},
\end{equation*}
where $\langle\cdot,\cdot\rangle$ denote the inner product of $\mathbb R^{n}$ and $\mathrm{Hess}\rho_{n}=\left\{\frac{\partial^{2}\rho_{n}}{\partial x_{i} \partial x_{j}}\right\}_{1\leq i,j\leq n}$ is the Hessian of~$\rho_n$.
\end{definition}

Given the above considerations, we  can now  recall the definition of a quasi-convex domain.
\begin{definition}
A  Lipschitz domain with compact boundary~$\Omega$ in $\mathbb R^{m}$, with $m\geq 2$,  is said to be quasi-convex if there is $\delta>0$  small enough  such that, for $x\in\partial\Omega$, there is a neighborhood $\Omega_{x}$ of~$x$, open in~$\Omega$, such that one of the following conditions is satisfied:

i) $\Omega_{x}$ is of class $MH^{1/2}_{\delta}$ if $n\geq 3$, and of class $\mathrm{C}^{1,r}$, for some  $r\in(1/2,1)$ if $n=2$;

ii) $\Omega_{x}$ is almost-convex.
\end{definition}

The next result shows that, for a quasi-convex domain, the functions in $\mathrm{Dom}\,H^{A}_{\mathrm D}$ and $\mathrm{Dom}\,H^{A}_{\mathrm N}$, the elements of the domain of the magnetic Dirichlet and Neumann realizations, respectively, are in  $H^{2}(\Omega)$; in fact, it is a consequence of Theorem~8.11 in~\cite{ADO} and additional arguments; the main difficulties is our extension to unbounded domains and the presence of the magnetic potential (whose regularity assumptions are important here). 

\begin{theorem}
Let~$\Omega$ be a quasi-convex domain with compact boundary.  Then, both $\mathrm{Dom}\,H^{A}_{\mathrm D},\mathrm{Dom}\,H^{A}_{\mathrm N}\subset H^{2}(\Omega)$.
\end{theorem}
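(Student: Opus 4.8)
The plan is to reduce the magnetic case to the known $H^2$-regularity for the Dirichlet and Neumann Laplacians on quasi-convex domains (Theorem~8.11 in~\cite{ADO}) by treating the first-order part of $H^A$ as a perturbation, and then to handle the unbounded case by localization near the (compact) boundary. Concretely, for $u\in\mathrm{Dom}\,H^{A}_{\mathrm D}$ we know $u\in H^1(\Omega)$, $\gamma_{\mathrm D}u=0$, and $H^{A}u\in\mathrm L^2(\Omega)$. Writing $H^{A}=-\triangle+L$ with $Lu=-2i{A}\cdot\nabla u+(|{A}|^{2}-i\,\mathrm{div}{A})u$, and using ${A}\in(\textrm{W}^{1}_{\infty}(\overline\Omega))^n$, we have $Lu\in\mathrm L^2(\Omega)$ as soon as $\nabla u\in\mathrm L^2(\Omega)$, which holds since $u\in H^1(\Omega)$. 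Hence $-\triangle u=H^{A}u-Lu\in\mathrm L^2(\Omega)$ and $\gamma_{\mathrm D}u=0$, so $u\in\mathrm{Dom}\,\triangle_{\mathrm D}$; the $H^2$-regularity result for quasi-convex domains then gives $u\in H^2(\Omega)$. The Neumann case is analogous once one checks that $\tilde\gamma^{A}_{\mathrm N}u=0$ together with the identity $\gamma_{\mathrm N}u=-i(\nu\cdot{A})\gamma_{\mathrm D}u$ (on smoother functions, extended by density as in~\eqref{Pororo}) translates the magnetic Neumann condition into an inhomogeneous classical Neumann condition $\gamma_{\mathrm N}u=-i(\nu\cdot{A})\gamma_{\mathrm D}u$; since the right-hand side lies in $N^{1/2}(\partial\Omega)$ (it is $\gamma_{\mathrm D}$ of a bounded Lipschitz function times a trace), one can subtract a correction $w\in H^2(\Omega)$ with that Neumann data and apply the Neumann $H^2$-regularity to $u-w$.

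The serious obstacle is that Theorem~8.11 of~\cite{ADO} is stated for \emph{bounded} quasi-convex domains, whereas here $\Omega$ is only assumed to have compact boundary and may be unbounded. To bridge this I would localize: fix a large ball $B_r$ with $\partial\Omega\subset B_{r/2}$ and a cutoff $\varphi\in\mathrm C^\infty_0(B_{3r/4})$ with $\varphi\equiv1$ on $B_{r/2}$, exactly as in the proof of Theorem~\ref{gamma2}. On the bounded quasi-convex domain $\Omega\cap B_r$ one applies the regularity result to $\varphi u$ after computing $-\triangle(\varphi u)=\varphi(-\triangle u)-2\nabla\varphi\cdot\nabla u-(\triangle\varphi)u\in\mathrm L^2$, noting that the boundary conditions are preserved because $\varphi\equiv1$ near $\partial\Omega$ (so $\gamma_{\mathrm D}(\varphi u)=0$, resp.\ the Neumann data is unchanged on $\partial\Omega$, while on the artificial boundary $\partial B_r$ one has $\varphi u\equiv0$, hence any self-adjoint boundary condition there is satisfied). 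Away from $\partial\Omega$, i.e.\ on $\Omega\cap(\mathbb R^n\setminus B_{r/2})$, the function $(1-\varphi)u$ solves an elliptic equation with no boundary (since $\partial\Omega\subset B_{r/2}$), so interior elliptic regularity gives $(1-\varphi)u\in H^2$; combining the two pieces yields $u\in H^2(\Omega)$.

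A subtlety I would be careful about is the justification of~\eqref{Pororo}, i.e.\ that $\gamma^{A}_{\mathrm N}u=0$ on $\mathrm{Dom}\,H^{A}_{\mathrm N}$ forces $\gamma_{\mathrm N}u=-i(\nu\cdot{A})\gamma_{\mathrm D}u$ \emph{before} we know $u\in H^2(\Omega)$; this is circular if stated naively, since $\gamma_{\mathrm N}u$ only makes sense for $u\in H^{s+1}(\Omega)$, $s>1/2$. The way around it is to work directly with $\tilde\gamma^{A}_{\mathrm N}$ as an element of $H^{-1/2}(\partial\Omega)$ defined via~\eqref{intn}: the condition $\tilde\gamma^{A}_{\mathrm N}u=0$ means $\Phi_A(u,G)=\langle l(H^Au),G\rangle_1$ for all $G\in H^1(\Omega)$, and since $\Phi_A(u,G)=\Phi_0(u,G)+(\text{lower order in }u)$ with the lower-order terms controlled by $\|u\|_{H^1}$, one reads off that $u$ satisfies the weak (classical) Neumann problem for $-\triangle$ with data in the dual of $H^{1/2}(\partial\Omega)$ coming from the $A$-terms — and crucially that data, expressed through the trace of the bounded Lipschitz field ${A}$ and $\gamma_{\mathrm D}u$, is smoother than generic $H^{-1/2}$, landing in the range of the classical Neumann trace on $H^2$. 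Feeding this into the Neumann $H^2$-regularity for quasi-convex domains closes the argument. The Dirichlet case is genuinely easier, as the only boundary condition involved, $\gamma_{\mathrm D}u=0$, is unaffected by the perturbation; I expect the Neumann case and the unbounded-domain localization to absorb essentially all of the proof's work.
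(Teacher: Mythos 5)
Your Dirichlet argument and your localization for unbounded $\Omega$ are essentially the paper's: the paper also observes that $L=H^A+\triangle$ maps $H^1(\Omega)$ into $\mathrm L^2(\Omega)$, so that $\mathrm{Dom}\,H^A_{\mathrm D}=\mathrm{Dom}\,\triangle_{\mathrm D}$ and Theorem~8.11 of~\cite{ADO} applies, and it handles the unbounded case with exactly the cutoff $\varphi_1,\varphi_2$ you describe (the exterior piece being absorbed by the $H^2$-regularity of the Dirichlet Laplacian on $\mathbb R^n\setminus\overline{B_{r/2}}$; your ``interior regularity'' phrasing is a bit loose, since you need a global $H^2$ bound on an unbounded set, but extending by zero and working on all of $\mathbb R^n$ fixes that immediately).

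The genuine gap is in the Neumann case. Your reduction to an inhomogeneous classical Neumann problem stands or falls on the claim that the data $-i(\nu\cdot A)\gamma_{\mathrm D}u$ lies in $N^{1/2}(\partial\Omega)$, i.e.\ that $(\nu\cdot A)(\gamma_{\mathrm D}u)\nu_i\in H^{1/2}(\partial\Omega)$ for each $i$ — this is what you need both to build the $H^2$ correction $w$ (via Lemma~\ref{tnh2h10} or Theorem~\ref{gamma2}) and to invoke Neumann $H^2$-regularity for $u-w$. But $\gamma_{\mathrm D}(A_ju)\in H^{1/2}(\partial\Omega)$ must still be multiplied by products $\nu_i\nu_j$, and on a merely Lipschitz boundary $\nu$ is only $\mathrm L^\infty$, so multiplication by its components does \emph{not} preserve $H^{1/2}(\partial\Omega)$; this failure is precisely why the spaces $N^{1/2}(\partial\Omega)$ and $N^{3/2}_A(\partial\Omega)$ exist in this paper at all. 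A quasi-convex domain is only locally either of class $\mathrm C^{1,r}$/$MH^{1/2}_\delta$ \emph{or} almost-convex. On the first kind of boundary patch the paper does essentially what you propose, and there the multiplication by $\nu_i$ on $H^{1/2}$ is legitimate (this is stated explicitly in its Case~I). On almost-convex patches, however, your argument breaks down, and the paper switches to a structurally different tool: Lemma~8.8 of~\cite{ADO}, a div--curl type $H^1$ estimate applied to the localized vector field $(\omega\nabla_A u)|_{\Omega_x}$, whose vanishing normal trace encodes the magnetic Neumann condition without ever passing through an inhomogeneous classical Neumann problem. Without this second case your proof does not cover general quasi-convex domains.
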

\begin{proof}
Assume first that~$\Omega$ is bounded.  In this case,  the result for $H^{A}_{D}$ follows directly by Theorem~8.11 of~\cite{ADO} and  the following observation: Since ${A}\in(\textrm{W}^{1}_{\infty}(\overline\Omega))^{n}$, one has $H^{A}=-\triangle+L$ with $L:H^{1}(\Omega)\rightarrow \mathrm{L}^{2}(\Omega)$. Thus, if $u\in H^{1}(\Omega)$ then $H^{A}u\in \mathrm{L}^{2}(\Omega)$ if, and only if, $\triangle u\in \mathrm{L}^{2}(\Omega)$ and the extensions of $\gamma_{\mathrm{D}}$ defined in $\mathrm{Dom}\,H^{A}_{\mathrm D}$ and $\mathrm{Dom}\,\triangle_{D}$ coincide, therefore, $\mathrm{Dom}\,H^{A}_{\mathrm D}=\mathrm{Dom}\,\triangle_{D}$.
 
 Under the assumption that $\Omega$ is bounded, we will prove the result for~$H^{A}_{N}$. The proof is similar to the proof of Theorem~8.11 of~\cite{ADO} (for the Laplacian). Take $u\in \mathrm{Dom}\,H^{A}_{\mathrm N}$, by the definition of a (bounded) quasi-convex domain and a partition of the unity argument, it is enough to show that, if $\Omega_{x}$ is as in the definition of  quasi-convex domain and $\omega\in \textrm{C}^{\infty}_{0}(\mathbb R^{n})$ is such that $\partial\Omega\cap\supp (\omega)\subset\partial\Omega\cap\partial\Omega_{x}$, then $v=(\omega u)|_{\Omega_{x}}\in H^{2}(\Omega_{x})$. Since $u\in H^{1}(\Omega)$, $\Delta u\in \mathrm{L}^{2}(\Omega)$ (which follows by the same argument from the paragraph above) we have that 
 \[
 \Delta v=[(\Delta\omega)u+2\nabla\omega\cdot\nabla u+\omega\Delta u]|_{\Omega_{x}}\in \mathrm{L}^{2}(\Omega)\,,
 \] and $v\in H^{1}(\Omega_{x})$. The conclusion of the proof in this case is divided into two cases.
 
 \
 
Case I:  Assume that $\Omega_{x}$ is of class $\textrm{C}^{1,r}$, $r>1/2$, or that $\Omega_{x}$ is of class $NH^{1/2}_{\delta}$ and $n\geq 3$. In this case the result is a consequence of the following fact that was stated in the proof of Theorem~8.11 of~\cite{ADO}:
let be $\Omega$  of class $\textrm{C}^{1,r}$ with $r>1/2$ or  $\Omega$ of class $NH^{1/2}_{\delta}$ and $n\geq 3$, if $w\in H^{1}(\Omega)$ with
\begin{equation}
\Delta w\in \mathrm{L}^{2}(\Omega),\quad \tilde{\gamma}_{N}w\in H^{1/2}(\Omega)\,,
\end{equation} 
then $w\in H^{2}(\Omega)$. Following the proof, we have, already, $v\in H^{1}(\Omega_{x})$ and $\Delta v\in \mathrm{L}^{2}(\Omega_{x})$, on the other hand,
\begin{eqnarray*}
\tilde{\gamma}_{N}^{A}v|_{\partial\Omega_{x}\cap\partial\Omega}&=&\nu\cdot\gamma_{D}(\nabla_{A}\omega u)|{\partial\Omega_{x}\cap\partial\Omega}\\
&=&[\gamma_{D}(\omega)\tilde{\gamma}_{N}^{A}u+\gamma_{D}(u)\tilde{\gamma}_{N}(\omega)|_{\partial\Omega_{x}\cap\partial\Omega}\\
&=&[\gamma_{D}(u)\tilde{\gamma}_{N}(\omega)]|_{\partial\Omega_{x}\cap\partial\Omega}\in H^{1/2}(\Omega_{x})
\end{eqnarray*}
thus,
\begin{eqnarray*}
\tilde{\gamma}_{N}v|_{\partial\Omega_{x}\cap\partial\Omega}&=&\tilde{\gamma}_{N}^{A}v|_{\partial\Omega_{x}\cap\partial\Omega}-[i\nu\cdot\gamma_{D}({A}\omega u)]|_{\partial\Omega_{x}\cap\partial\Omega}\\
&=&[\gamma_{D}(u)\tilde{\gamma}_{N}(\omega)]|_{\partial\Omega_{x}\cap\partial\Omega}-[i\nu\cdot\gamma_{D}({A}u)\gamma_{D}(\omega)]|_{\partial\Omega_{x}\cap\partial\Omega}\in H^{1/2}(\Omega_{x})
\end{eqnarray*}
since, $\gamma_{D}({A}u)\in H^{1/2}(\partial\Omega\cap\partial\Omega_{x})$ and by the fact that, under  the hypothesis of regularity of $\Omega_{x}$, assumed at the start of this case, the operator of multiplication by the components of $\nu$ maps $H^{1/2}(\Omega_{x})$ into itself in this case. Then, by the fact mentioned above, it  follows that $v\in H^{2}(\Omega_{x})$ and this finish the proof in this case.

\

Case II: $\Omega_{x}$ is almost-convex.  This case is a easy consequence of an application of the lemma 8.8 of~\cite{ADO} to the vector field $(\omega\nabla_{A}u)|_{\Omega_{x}}$, thus $v\in H^{2}(\Omega_{x})$ also in this case, and this finishes the proof of the theorem when $\Omega$ is bounded.

Consider now that~$\Omega$ is unbounded. Let us start with the case of the operator $H^{A}_{D}$, take a ball~$B_{r}$ of radius $r>0$ such that $\partial\Omega\subset B_{r/2}$, and let  $\varphi_{1},\varphi_{2}$  be a partition of the unity associated with $\{\Omega\cap B_{r},\mathbb{R}^{n}\setminus \overline{B_{r/2}}\}$. Note that since $\partial\Omega\subset B_{r/2}$, we have $\mathbb{R}^{n}\setminus \overline{B_{r/2}}\subset\Omega$. Hence, for all $u\in\mathrm{Dom}\,H^{A}_{\mathrm D}$ in~$\Omega$, $u=\varphi_{1}u+\varphi_{2}u$. Since $\varphi_{1}\in \mathrm{C}^{\infty}_{0}(B_{r})$, it follows directly  that $\varphi_{1}u\in H^{1}_{0}(\Omega\cap B_{r})$; on the other hand, $H^{A}(\varphi_{1}u)=\varphi_{1}H^{A}(u)+[ H^{A},\varphi_{1}]u$ where $[H^{A},\varphi_{1}]$, the commutator of~$H^{A}$ and the operator of multiplication by $\varphi_1$, is a differential operator of order 1 that maps $H^{1}(\Omega)$ in $L^{2}(\Omega)$. Thus, since $H^{A}u\in \mathrm{L}^{2}(\Omega)$ and $u\in H^{1}(\Omega)$, it follows that $H^{A}(\varphi_{1}u)\in \mathrm{L}^{2}(\Omega\cap B_{r})$ and so $\varphi_{1} u\in \mathrm{Dom}\,H^{A}_{\mathrm D,1}$, where $\mathrm{Dom}\,H^{A}_{\mathrm D,1}$  is the Dirichlet operator associated with~$H^{A}$ in $\Omega\cap B_{r}$. Therefore, by the case of bounded~$\Omega$, we find that $\varphi_{1} u\in H^{2}(\Omega\cap B_{r})$. On the other hand, we have  $H^{A}(\varphi_{2}u)=\varphi_{2}H^{A}(u)+[ H^{A},\varphi_{2}]u$, and since~$\varphi_{2}$  as well as its derivative of any order are bounded, it follows that $H^{A}(\varphi_{2}u)\in \mathrm{L}^{2}(\mathbb{R}^{n}\setminus \overline{B_{r/2}})$, and since $\varphi_{2}u \in H^{1}_{0}(\mathbb{R}^{n}\setminus \overline{B_{r/2}})$ we find that $\varphi_{2}u\in\mathrm{Dom}\,\triangle_{D,2}$, where  $\triangle_{D,2}$ is the Dirichlet operator associated with the Laplacian in $\mathbb{R}^{n}\setminus \overline{B_{r/2}}$.  But it is a known fact that the domain of  $\triangle_{D,2}$  is contained in $H^{2}(\mathbb{R}^{n}\setminus \overline{B_{r/2}})$; see for example Theorem~8.12 of~\cite{EPDSO}. Therefore $u=\varphi_{1}u+\varphi_{2}u\in H^{2}(\Omega)$, and this proves this case.

The case of the operator $H^{A}_{N}$ is carried by a similar argument, in fact, if $u\in \mathrm{Dom}\,H^{A}_{\mathrm N}$ and $\varphi_{i}$ $i=1,2$ are defined as above, by a similar argument it is easy to see that $\varphi_{1}u\in\mathrm{Dom}\,H^{A}_{\mathrm N,1}\subset H^{2}( B_{r}\cap\Omega)$ where $H^{A}_{\mathrm N,1}$ is the Neumann operator associated with $H^{A}$ in $\Omega\setminus B_{r/2}$ , and  $\varphi_{2}u\in \mathrm{Dom}\triangle_{\mathrm D,2}\subset H^{2}(\Omega\setminus B_{r/2})$, thus, $u=\varphi_{1}u+\varphi_{2}u\in H^{2}(\Omega)$ and this finishes the proof.
\end{proof}

\subsection{Regularized Neumann trace in quasi-convex domains}
In the following we recall the concept of regularized Neumann trace map~\cite{ADO} with suitable adaptions to the magnetic context; it will allow us to construct a boundary triple for the operator~$H^{A}_{\mathrm{max}}$.
 
 In Subsection~\ref{sectQCD} we saw that, for quasi-convex domains, $\mathrm{Dom}\,H^{A}_{\mathrm D}\subset H^{2}(\Omega)$; hence,  $\mathrm{Dom}\,H^{A}_{\mathrm D}=H^{2}(\Omega)\cap H^{1}_{0}(\Omega)$. We are ready to introduce the operator~$\tau_{\mathrm{N}}^{A}$, called the {\it regularized magnetic Neumann trace}, as follows.
 
 \begin{definition}
 Let~$\Omega$ be a quasi-convex domain and   $z\in \mathbb{C}\setminus \sigma(H^{A}_{\mathrm D})$. The operator~$\tau_{\mathrm{N}}^{A}$ is defined by
 \begin{eqnarray*}
 \tau_{\mathrm{N}}^{A,z}&:&\mathrm{Dom}\,H^{A}_{\mathrm{max}}\rightarrow N^{\frac{1}{2}}(\partial\Omega)\,,\\
 \tau_{\mathrm{N}}^{A,z}u&:=&\gamma_{\mathrm{N}}^{A}(H^{A}_{\mathrm D}-z)^{-1}(H^{A}_{\mathrm{max}}-z)u\,,\quad u\in \mathrm{Dom}\,H^{A}_{\mathrm{max}}.
 \end{eqnarray*}
 Clearly, this operator is well defined and bounded.
 \end{definition}

\begin{theorem}\label{teotau}
Let~$\Omega$ be a quasi-convex domain and   $z\in \mathbb{C}\setminus \sigma(H^{A}_{\mathrm D})$, then the operator  $\tau_{\mathrm{N}}^{A,z}$ has the following properties:

i) $\tau_{\mathrm{N}}^{A,z}(H^{2}(\Omega)\cap H^{1}_{0}(\Omega))=N^{\frac{1}{2}}(\partial\Omega)$.
 
ii) $\mathrm{Ker}\, \tau_{\mathrm{N}}^{A,z}= H^{2}_{0}(\Omega)\dotplus\left\{u\in \mathrm{L}^{2}(\Omega)|\; (H^{A}-z)u=0\right\}$
 
iii) $ (H^{A}_{\mathrm{max}}u,v)_{\mathrm{L}^{2}(\Omega)}-(u,H^{A}_{\mathrm{max}}v)_{\mathrm{L}^{2}(\Omega)}=-\langle \tau_{\mathrm{N}}^{A,z}u,{\hat\gamma_{\mathrm{D}}}v\rangle_{ N^{\frac{1}{2}}(\partial\Omega)}+\overline{\langle \tau_{\mathrm{N}}^{A,\overline{z}}v,{\hat\gamma_{\mathrm{D}}}u\rangle}_{ N^{\frac{1}{2}}(\partial\Omega)}$, for all $u,v\in \mathrm{Dom}\,H^{A}_{\mathrm{max}}$.
\end{theorem}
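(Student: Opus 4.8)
The plan is to establish the three items in order, using the operator $R_z^A := (H^A_{\mathrm D}-z)^{-1}(H^A_{\mathrm{max}}-z)$, which is a bounded operator on $\mathrm{Dom}\,H^A_{\mathrm{max}}$ whose range lies in $\mathrm{Dom}\,H^A_{\mathrm D} = H^2(\Omega)\cap H^1_0(\Omega)$ (the last identity coming from the quasi-convexity regularity result and the description of $\mathrm{Dom}\,H^A_{\mathrm D}$). Note that $R_z^A$ acts as the identity on $\mathrm{Dom}\,H^A_{\mathrm D}$, since for $u\in\mathrm{Dom}\,H^A_{\mathrm D}$ we have $(H^A_{\mathrm D}-z)^{-1}(H^A_{\mathrm{max}}-z)u = (H^A_{\mathrm D}-z)^{-1}(H^A_{\mathrm D}-z)u = u$. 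Then $\tau^{A,z}_{\mathrm N} = \gamma^A_{\mathrm N}\circ R_z^A$.

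For item~(i): since $R_z^A$ restricts to the identity on $H^2(\Omega)\cap H^1_0(\Omega)$, we get $\tau^{A,z}_{\mathrm N}(H^2(\Omega)\cap H^1_0(\Omega)) = \gamma^A_{\mathrm N}(H^2(\Omega)\cap H^1_0(\Omega)) = N^{\frac12}(\partial\Omega)$, the last equality being exactly the surjectivity statement in Lemma~\ref{tnh2h10}. For item~(ii): $u\in\mathrm{Ker}\,\tau^{A,z}_{\mathrm N}$ means $R_z^A u\in\mathrm{Ker}\,\gamma^A_{\mathrm N}\cap(H^2(\Omega)\cap H^1_0(\Omega)) = H^2_0(\Omega)$ (again by Lemma~\ref{tnh2h10}). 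Writing $w := R_z^A u\in H^2_0(\Omega) = \mathrm{Dom}\,H^A_{\mathrm{min}}$, the identity $(H^A_{\mathrm D}-z)w = (H^A_{\mathrm{max}}-z)u$ gives $(H^A-z)(u-w) = 0$, so $u = w + (u-w)$ decomposes $u$ into a piece in $H^2_0(\Omega)$ and a piece in $\mathrm{Ker}(H^A_{\mathrm{max}}-z) = \{v\in\mathrm{L}^2(\Omega)\mid(H^A-z)v=0\}$. Conversely any such sum lies in the kernel, since $\tau^{A,z}_{\mathrm N}$ vanishes on $H^2_0(\Omega)$ (as $\gamma^A_{\mathrm N}$ does) and on $\mathrm{Ker}(H^A_{\mathrm{max}}-z)$ (since $R_z^A$ kills it: $(H^A_{\mathrm D}-z)^{-1}\cdot 0 = 0$). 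The sum is direct because $H^2_0(\Omega)\cap\mathrm{Ker}(H^A_{\mathrm{max}}-z) = \mathrm{Ker}(H^A_{\mathrm{min}}-z) = \{0\}$, as $H^A_{\mathrm{min}}$ is symmetric with $z\notin\mathbb R$ possible but more robustly because $z\notin\sigma(H^A_{\mathrm D})\supseteq$ well, one argues directly: if $w\in H^2_0(\Omega)$ and $(H^A-z)w=0$ then $w\in\mathrm{Dom}\,H^A_{\mathrm D}$ with $(H^A_{\mathrm D}-z)w=0$, forcing $w=0$.

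For item~(iii), the abstract Green-type identity, I would write $u_0 := R_z^A u$, $v_0 := R_{\bar z}^A v$, so $u = u_0 + u_1$ with $(H^A-z)u_1=0$ and similarly $v = v_0 + v_1$ with $(H^A-\bar z)v_1=0$, and expand $(H^A_{\mathrm{max}}u,v) - (u,H^A_{\mathrm{max}}v)$ using bilinearity. The cross and diagonal terms involving $u_1,v_1$ are handled using $(H^A-z)u_1=0$, $(H^A-\bar z)v_1=0$, and $u_0,v_0\in H^2(\Omega)\cap H^1_0(\Omega)$; one invokes the integration by parts formula~\eqref{intp2} of Theorem~\ref{gammaN1/2} (with the $H^2\cap H^1_0$ argument in the first slot and the maximal-domain argument in the second) to convert each term into a pairing $\langle\gamma^A_{\mathrm N}(\cdot),\hat\gamma_{\mathrm D}(\cdot)\rangle_{N^{\frac12}(\partial\Omega)}$, using that $\gamma^A_{\mathrm N}u_0 = \tau^{A,z}_{\mathrm N}u$ and $\hat\gamma_{\mathrm D}u = \hat\gamma_{\mathrm D}u_0 + \hat\gamma_{\mathrm D}u_1$, and similarly for $v$; the terms with $\gamma^A_{\mathrm N}$ applied to a solution of $(H^A-z)(\cdot)=0$ vanish after regrouping, since $\hat\gamma_{\mathrm D}$ of such solutions is exactly what pairs against $\tau^{A,\bar z}_{\mathrm N}v$. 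The bookkeeping is most cleanly done by first checking the identity on a dense set where all objects are classical and then extending by the continuity of $\tau^{A,z}_{\mathrm N}$, $\hat\gamma_{\mathrm D}$ and the pairing.

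The main obstacle I expect is item~(iii): keeping track of which integration-by-parts formula applies in which slot (Theorem~\ref{gammaN1/2} requires one argument in $H^2(\Omega)\cap H^1_0(\Omega)$ and the other in $\mathrm{Dom}\,H^A_{\mathrm{max}}$), and ensuring that the "spurious" boundary terms built from $\hat\gamma_{\mathrm D}$ of the homogeneous solutions $u_1, v_1$ genuinely cancel against the $\tau^{A,\bar z}_{\mathrm N}$ contribution rather than leaving a residue; the asymmetry between $z$ and $\bar z$ (which is why $\tau^{A,z}_{\mathrm N}$ appears paired with $\hat\gamma_{\mathrm D}$ on one side and $\tau^{A,\bar z}_{\mathrm N}$ with $\overline{\langle\cdot,\cdot\rangle}$ on the other) must be tracked carefully. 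Establishing the decomposition $\mathrm{Dom}\,H^A_{\mathrm{max}} = \big(H^2(\Omega)\cap H^1_0(\Omega)\big)\dotplus\mathrm{Ker}(H^A_{\mathrm{max}}-z)$ cleanly at the outset — which follows from $z\notin\sigma(H^A_{\mathrm D})$ — makes all three parts fall out, so I would prove that decomposition first.
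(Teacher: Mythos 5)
Your proposal is correct and follows essentially the same route as the paper: the regularizing operator $(H^{A}_{\mathrm D}-z)^{-1}(H^{A}_{\mathrm{max}}-z)$ acts as the identity on $H^{2}(\Omega)\cap H^{1}_{0}(\Omega)$, which together with Lemma~\ref{tnh2h10} gives~(i) and~(ii), and the decomposition $u=u_{0}+u_{1}$ with $u_{0}\in H^{2}(\Omega)\cap H^{1}_{0}(\Omega)$ (so $\hat\gamma_{\mathrm D}u_{0}=0$) and $(H^{A}-z)u_{1}=0$ combined with~\eqref{intp2} is exactly the paper's computation for~(iii), there written as a telescoping chain with $\tilde u,\tilde v$ rather than a four-term expansion. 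The sketch of~(iii) is loose in spots (e.g.\ $\gamma^{A}_{\mathrm N}$ is never literally applied to $u_{1}$, and the cancellation you worry about is simply $\hat\gamma_{\mathrm D}u_{0}=\hat\gamma_{\mathrm D}v_{0}=0$), but the bookkeeping does close up as you predict.
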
 
\begin{proof} i)Note that, for $z\in \mathbb{C}\setminus \sigma(H^{A}_{\mathrm D})$, the operator  $(H^{A}-z):H^{2}(\Omega)\cap H^{1}_{0}(\Omega)\rightarrow \mathrm{L}^{2}(\Omega)$ is onto, so the result follows by Lemma~\ref{tnh2h10}.

ii) It is clear that $H^{2}_{0}(\Omega)\dotplus\left\{u\in \mathrm{L}^{2}(\Omega)|\; (H^{A}-z)u=0\right\}\subset\mathrm{Ker}\, \tau_{\mathrm{N}}^{A,z}$. The fact that we have a direct sum follows from the invertibility of $H^{A}_{\mathrm D}-z$. Now, let  $u\in \mathrm{Ker}\, \tau_{\mathrm{N}}^{A,z}$, and set $w=(H^{A}_{\mathrm D}-z)^{-1}(H^{A}-z)u$. Then, $w\in H^{2}(\Omega)\cap H^{1}_{0}(\Omega)$ and ${\tilde\gamma_{\mathrm{N}}^{A}}w=\tau_{\mathrm{N}}^{A,z}u=0$,  thus, $w\in H^{2}_{0}(\Omega)$,  and  $u=(u-w)+w$,  with $w\in H^{2}(\Omega)$ and $u-w\in \mathrm{L}^{2}(\Omega)$ with $(H^{A}-z)(u-w)=0$, so, $\mathrm{Ker}\, \tau_{\mathrm{N}}^{A,z}\subset H^{2}_{0}(\Omega)\dotplus\left\{u\in \mathrm{L}^{2}(\Omega)|\; (H^{A}-z)u=0\right\}$.

iii) Take $u,v \in \left\{u\in \mathrm{L}^{2}(\Omega)|\; H^{A}u\in \mathrm{L}^{2}(\Omega)\right\}$ and put
$\tilde{u}=(H^{A}_{\mathrm D}-z)^{-1}(H^{A}-z)u$ and $\tilde{v}=(H^{A}_{\mathrm D}-\overline{z})^{-1}(H^{A}-\overline{z})v$; both elements are in $ H^{2}(\Omega)\cap H^{1}_{0}(\Omega)$. We have $(H^{A}-z)\tilde{u}=(H^{A}-z)u$ and $(H^{A}-\overline{z})\tilde{v}=(H^{A}-\overline{z})v$, furthermore  ${\tilde\gamma_{\mathrm{N}}^{A}}\tilde{u}=\tau_{\mathrm{N}}^{A,z}u$ and ${\tilde\gamma_{\mathrm{N}}^{A}}\tilde{v}=\tau_{\mathrm{N}}^{A,\overline{z}}v$, thus,
\begin{eqnarray*}
(H^{A}u,v)_{\mathrm{L}^{2}(\Omega)}&-&(u,H^{A}v)_{\mathrm{L}^{2}(\Omega)}=((H^{A}-z)u,v)_{\mathrm{L}^{2}(\Omega)}-(u,(H^{A}-\overline{z})v)_{\mathrm{L}^{2}(\Omega)}\\
&=&((H^{A}-z)\tilde{u},v)_{\mathrm{L}^{2}(\Omega)}-(u,(H^{A}-\overline{z})\tilde{v})_{\mathrm{L}^{2}(\Omega)}\\
&=&(\tilde{u},(H^{A}-\overline{z})v)_{\mathrm{L}^{2}(\Omega)}-\langle {\tilde\gamma_{\mathrm{N}}^{A}}\tilde{u},{\hat\gamma_{\mathrm{D}}}v\rangle_{ N^{\frac{1}{2}}(\partial\Omega)}-(u,(H^{A}-\overline{z})\tilde{v})_{\mathrm{L}^{2}(\Omega)}\\
&=&(\tilde{u},(H^{A}-\overline{z})v)_{\mathrm{L}^{2}(\Omega)}-(u,(H^{A}-\overline{z})\tilde{v})_{\mathrm{L}^{2}(\Omega)}-\langle \tau_{\mathrm{N}}^{A,z}u,{\hat\gamma_{\mathrm{D}}}v\rangle_{ N^{\frac{1}{2}}(\partial\Omega)}\\
&=&(\tilde{u}-u,(H^{A}-\overline{z})v)_{\mathrm{L}^{2}(\Omega)}-\langle \tau_{\mathrm{N}}^{A,z}u,{\hat\gamma_{\mathrm{D}}}v\rangle_{ N^{\frac{1}{2}}(\partial\Omega)}\\
&=&\langle \overline{\tau_{\mathrm{N}}^{A,\overline{z}}v,{\hat\gamma_{\mathrm{D}}}u\rangle}_{ N^{\frac{1}{2}}(\partial\Omega)} -\langle \tau_{\mathrm{N}}^{A,z}u,{\hat\gamma_{\mathrm{D}}}v\rangle_{ N^{\frac{1}{2}}(\partial\Omega)}\,,
\end{eqnarray*}
where in the second equation we have employed~\eqref{intp2}, in the third we have used that ${\tilde\gamma_{\mathrm{N}}^{A}}\tilde{u}=\tau_{\mathrm{N}}^{A,z}u$ and, in the fifth~\eqref{intp2}, $(H^{A}-z)(\tilde{u}-u)=0$ and ${\hat\gamma_{\mathrm{D}}}(\tilde{u}-u)=-{\hat\gamma_{\mathrm{D}}}(u)$. 
\end{proof}

\begin{lemma}\label{lemma61}
Let~$\Omega$ be a quasi-convex domain. Then the operator  ${\hat\gamma_{\mathrm{D}}}$, which satisfies~\eqref{gammadmax}, is onto. More specifically, for any  fixed $z\in \mathbb C\setminus\sigma(H^{A}_{\mathrm D})$, there exists  a linear and bounded application $ (N^{\frac{1}{2}}(\partial\Omega))^{*}\ni \theta\rightarrow u_{\theta}\in \mathrm{L}^{2}(\Omega)$ such that ${\hat\gamma_{\mathrm{D}}}u_{\theta}=\theta$ and $(H^{A}-z)u_{\theta}=0$.

\end{lemma}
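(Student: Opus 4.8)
The plan is to construct $u_\theta$ explicitly via a duality/Riesz-type argument, using the already-established mapping properties of $\gamma_{\mathrm{N}}^{A}$ on $H^{2}(\Omega)\cap H^{1}_{0}(\Omega)$ together with the integration by parts formula~\eqref{intp2}. Fix $z\in\mathbb C\setminus\sigma(H^{A}_{\mathrm D})$ and recall from Lemma~\ref{tnh2h10} that $\gamma_{\mathrm{N}}^{A}:H^{2}(\Omega)\cap H^{1}_{0}(\Omega)\to N^{\frac12}(\partial\Omega)$ is bounded, onto, with bounded right inverse and kernel $H^{2}_{0}(\Omega)$. The key identity~\eqref{intp2} reads, for $w\in H^{2}(\Omega)\cap H^{1}_{0}(\Omega)$ and $u\in\mathrm{Dom}\,H^{A}_{\mathrm{max}}$,
\[
\langle\gamma_{\mathrm{N}}^{A}w,{\hat\gamma_{\mathrm{D}}}u\rangle_{N^{\frac12}(\partial\Omega)}=-\bigl((H^{A}w,u)_{\mathrm{L}^{2}(\Omega)}-(w,H^{A}u)_{\mathrm{L}^{2}(\Omega)}\bigr).
\]
When $(H^{A}-z)u=0$ the right-hand side collapses: writing $w=(H^{A}_{\mathrm D}-z)^{-1}\varphi$ for $\varphi\in\mathrm{L}^{2}(\Omega)$ (so $H^{A}w=zw+\varphi$ and $\gamma_{\mathrm{D}}w=0$), one gets $(H^{A}w,u)-(w,H^{A}u)=(\varphi,u)$, hence $\langle\gamma_{\mathrm{N}}^{A}w,{\hat\gamma_{\mathrm{D}}}u\rangle_{N^{\frac12}(\partial\Omega)}=-(\varphi,u)_{\mathrm{L}^{2}(\Omega)}$. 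This is the mechanism that will let us read off $u$ from $\theta$.

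First I would define, for given $\theta\in(N^{\frac12}(\partial\Omega))^{*}$, the antilinear functional $F_\theta$ on $\mathrm{L}^{2}(\Omega)$ by $F_\theta(\varphi):=-\overline{\langle\gamma_{\mathrm{N}}^{A}(H^{A}_{\mathrm D}-\overline z)^{-1}\varphi,\theta\rangle_{N^{\frac12}(\partial\Omega)}}$; it is bounded because $(H^{A}_{\mathrm D}-\overline z)^{-1}$ is bounded from $\mathrm{L}^{2}(\Omega)$ into $H^{2}(\Omega)\cap H^{1}_{0}(\Omega)$ (using the coercivity estimate~\eqref{coer} and the quasi-convex regularity $\mathrm{Dom}\,H^{A}_{\mathrm D}\subset H^{2}(\Omega)$) and $\gamma_{\mathrm{N}}^{A}$ is bounded on that space. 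By the Riesz representation theorem there is a unique $u_\theta\in\mathrm{L}^{2}(\Omega)$ with $F_\theta(\varphi)=(u_\theta,\varphi)_{\mathrm{L}^{2}(\Omega)}$ for all $\varphi$, and $\|u_\theta\|_{\mathrm{L}^{2}(\Omega)}\le C\|\theta\|$, so $\theta\mapsto u_\theta$ is linear and bounded. It remains to check the two required properties. For $(H^{A}-z)u_\theta=0$: test against $\varphi=(H^{A}-z)\psi$ with $\psi\in\mathrm{C}^{\infty}_{0}(\Omega)$; then $(H^{A}_{\mathrm D}-\overline z)^{-1}\varphi$ is of the form whose magnetic Neumann trace we control, but more directly, $(u_\theta,(H^{A}-z)\psi)=F_\theta((H^{A}-z)\psi)$, and since $(H^{A}_{\mathrm D}-\overline z)^{-1}(H^{A}-z)\psi=\psi+$ (a term annihilated appropriately) one should arrive at $(u_\theta,(H^{A}-z)\psi)_{\mathrm{L}^{2}(\Omega)}=0$ — i.e. $(H^{A}-\overline z)^{*}$-weakly $(H^{A}-z)u_\theta=0$ as a distribution, hence $u_\theta\in\mathrm{Dom}\,H^{A}_{\mathrm{max}}$ with $(H^{A}-z)u_\theta=0$; here one uses that $\gamma_{\mathrm{N}}^{A}$ vanishes on $H^{2}_{0}(\Omega)\supset\mathrm{C}^{\infty}_{0}(\Omega)$.

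Once $u_\theta\in\mathrm{Dom}\,H^{A}_{\mathrm{max}}$ with $(H^{A}-z)u_\theta=0$ is known, the identification ${\hat\gamma_{\mathrm{D}}}u_\theta=\theta$ follows by comparing pairings: for every $g\in N^{\frac12}(\partial\Omega)$ pick $w\in H^{2}(\Omega)\cap H^{1}_{0}(\Omega)$ with $\gamma_{\mathrm{N}}^{A}w=g$ via Lemma~\ref{tnh2h10}, write $\varphi:=(H^{A}-\overline z)w\in\mathrm{L}^{2}(\Omega)$ so that $w=(H^{A}_{\mathrm D}-\overline z)^{-1}\varphi$, and then~\eqref{intp2} applied to $(w,u_\theta)$ combined with $(H^{A}-z)u_\theta=0$ gives $\langle g,{\hat\gamma_{\mathrm{D}}}u_\theta\rangle_{N^{\frac12}(\partial\Omega)}=-(w,\ \text{something})$; on the other hand the definition of $F_\theta$ and $u_\theta$ gives $\langle g,\theta\rangle_{N^{\frac12}(\partial\Omega)}$ for the same expression. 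Since $g$ is arbitrary and $N^{\frac12}(\partial\Omega)$ is reflexive (Corollary~\ref{corww}), this forces ${\hat\gamma_{\mathrm{D}}}u_\theta=\theta$ in $(N^{\frac12}(\partial\Omega))^{*}$. Surjectivity of ${\hat\gamma_{\mathrm{D}}}$ is then immediate, and the constructed map is the asserted bounded right inverse.

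The main obstacle I anticipate is the careful bookkeeping of adjoints and conjugations in the duality pairing — in particular making sure that $(H^{A}_{\mathrm D}-\overline z)^{-1}$ (rather than $(H^{A}_{\mathrm D}-z)^{-1}$) appears in the right places so that $\gamma_{\mathrm{N}}^{A}$ lands in $N^{\frac12}(\partial\Omega)$ and the resulting functional is genuinely \emph{antilinear} in $\varphi$, and that the term $(w,H^{A}u_\theta)-(H^{A}w,u_\theta)$ in~\eqref{intp2} really does reduce to $-(w,\varphi)$-type expressions once $(H^{A}-z)u_\theta=0$ is in hand. Everything else is a routine application of the Riesz lemma, the resolvent bound coming from~\eqref{coer}, the $H^{2}$-regularity in quasi-convex domains, and the trace mapping properties from Lemma~\ref{tnh2h10}.
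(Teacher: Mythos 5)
Your construction is essentially the paper's proof: the paper likewise defines the bounded antilinear functional $\theta\circ\gamma_{\mathrm{N}}^{A}\circ(H^{A}_{\mathrm D}-\overline z)^{-1}$ on $\mathrm{L}^{2}(\Omega)$, applies the Riesz theorem to obtain $u_\theta$ (with the sign absorbed at the end rather than at the outset, as you do), tests with $f=(H^{A}-\overline z)w$ to deduce $(H^{A}-z)u_\theta=0$, and then uses~\eqref{intp2} together with the surjectivity of $\gamma_{\mathrm{N}}^{A}:H^{2}(\Omega)\cap H^{1}_{0}(\Omega)\to N^{\frac12}(\partial\Omega)$ to identify $\hat\gamma_{\mathrm{D}}u_\theta=\theta$. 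The one conjugation issue you flagged is real but trivially repaired: to get $(H^{A}-z)u_\theta=0$ you must test against $\varphi=(H^{A}-\overline z)\psi$ with $\psi\in \mathrm{C}^{\infty}_{0}(\Omega)$, so that $(H^{A}_{\mathrm D}-\overline z)^{-1}\varphi=\psi$ exactly and $\gamma_{\mathrm{N}}^{A}\psi=0$; testing against $(H^{A}-z)\psi$ leaves the extra term $(z-\overline z)(H^{A}_{\mathrm D}-\overline z)^{-1}\psi$ and would in any case yield the wrong equation $(H^{A}-\overline z)u_\theta=0$.
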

\begin{proof}
Fix $z\in \mathbb C\setminus\sigma(H^{A}_{\mathrm D})$. By the observation at the beginning of this section, we can write
\begin{equation*}
(H^{A}_{\mathrm D}-\overline{z})^{-1}:\mathrm{L}^{2}(\Omega)\rightarrow H^{2}(\Omega)\cap H^{1}_{0}(\Omega),
\end{equation*}
which is a bounded map. Thus, given $\theta \in  (N^{\frac{1}{2}}(\partial\Omega))^{*}$,  using Lemma~\ref{tnh2h10}, the antilinear functional
\begin{equation*}
\theta\gamma_{\mathrm{N}}^{A}(H^{A}_{\mathrm D}-\overline{z})^{-1}:\mathrm{L}^{2}(\Omega)\rightarrow \mathbb C 
\end{equation*}
is well defined and bounded.
Then, by Riesz Theorem, there exists a unique ${\tilde u_{\theta}}\in \mathrm{L}^{2}(\Omega)$ such that
\begin{equation}\label{*}
\theta\gamma_{\mathrm{N}}^{A}(H^{A}_{\mathrm D}-\overline{z})^{-1}(f)=(f,{\tilde u_{\theta}})_{\mathrm{L}^{2}(\Omega)}
\end{equation}
and $\|{\tilde u_{\theta}}\|_{\mathrm{L}^{2}(\Omega)}=\|\theta\gamma_{\mathrm{N}}^{A}(H^{A}_{\mathrm D}-\overline{z})^{-1}\|_{B(\mathrm{L}^{2}(\Omega,C))}\leq C \|\theta\|_{(N^{\frac{1}{2}}(\partial\Omega))^{*}}$. Furthermore, the  application $\theta\rightarrow {\tilde u_{\theta}}$ is linear. By setting $f=(H^{A}-\overline{z})w$, with $w\in H^{2}(\Omega)\cap H^{1}(\Omega)$, in~\eqref{*}, we have
\begin{equation*}
\theta(\gamma_{\mathrm{N}}^{A}(w))=((H^{A}-\overline{z})w,{\tilde u_{\theta}})_{\mathrm{L}^{2}(\Omega)},\quad \forall w \in  H^{2}(\Omega)\cap H^{1}(\Omega).
\end{equation*}
In particular, for $w\in \mathrm{C}^{\infty}_{0}(\Omega)$, we have $((H^{A}-\overline{z})w,{\tilde u_{\theta}})_{\mathrm{L}^{2}(\Omega)}=\theta(\gamma_{\mathrm{N}}^{A}(w))=\theta(0)=0$, so, $(H^{A}-z){\tilde u_{\theta}}=0$ in the sense of distributions. Therefore, for $w\in H^{2}(\Omega)\cap H^{1}_{0}(\Omega)$ we can write
\begin{eqnarray*}
\theta(\gamma_{\mathrm{N}}^{A}(w))&=&((H^{A}-\overline{z})w,{\tilde u_{\theta}})_{\mathrm{L}^{2}(\Omega)}-(w,(H^{A}-z){\tilde u_{\theta}})\\
&=&-\langle \gamma_{\mathrm{N}}^{A}w,{\hat\gamma_{\mathrm{D}}}{\tilde u_{\theta}}\rangle_{N^{\frac{1}{2}}(\partial\Omega)}
\end{eqnarray*}
where in  the second equation we have employed~\eqref{intp2}. Since $\gamma_{\mathrm{N}}^{A}:  H^{2}(\Omega)\cap H^{1}_{0}(\Omega)\rightarrow N^{\frac{1}{2}}(\partial\Omega)$ is onto, we conclude that $\hat{\gamma}_{\mathrm{D}}(-{\tilde u_{\theta}})= \theta$. Thus, $u_{\theta}=-{\tilde u_{\theta}}$ satisfies the properties of the lemma, and this finishes the proof.
\end{proof}

\begin{lemma}\label{solgamman1/2}
Let~$\Omega$ be a quasi-convex domain. Take $z\in \mathbb C\setminus \sigma(H^{A}_{\mathrm D})$, then, for all  $f\in \mathrm{L}^{2}(\Omega)$ and all $g\in (N^{\frac{1}{2}}(\partial\Omega))^{*}$, the following boundary value problem has a unique solution  $u_{D}$:
\begin{eqnarray*}
(H^{A}-z)u&=&f\quad\mathrm{in}\quad\Omega\,,
\\
u&\in& \mathrm{L}^{2}(\Omega)\,,
\\
{\hat\gamma_{\mathrm{D}}}u&=&g\quad\mathrm{in}\quad\partial\Omega\,.
\end{eqnarray*}
Furthermore, there exists $C(z)>0$ such that 
\[
\|u_{D}\|_{\mathrm{L}^{2}(\Omega)}\leq C(z)\big\{\|f\|_{\mathrm{L}^{2}(\Omega)}+\|g\|_{(N^{\frac{1}{2}}(\partial\Omega))^{*}}\big\}.
\] In particular, if $g=0$ then $u_{D}\in H^{2}(\Omega)\cap H^{1}_{0}(\Omega)$.
\end{lemma}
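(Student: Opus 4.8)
The plan is to reduce the problem to two pieces that have already been solved: the solvability of the Dirichlet problem with homogeneous boundary data (which is just the invertibility of $H^A_{\mathrm D}-z$, available since $z\notin\sigma(H^A_{\mathrm D})$), and the construction of a Poisson-type operator for the boundary data carried out in Lemma~\ref{lemma61}. First I would split the desired solution as $u_D=u_1+u_\theta$, where $u_1:=(H^A_{\mathrm D}-z)^{-1}f\in H^2(\Omega)\cap H^1_0(\Omega)$ solves $(H^A-z)u_1=f$ with ${\hat\gamma_{\mathrm{D}}}u_1=\gamma_{\mathrm D}u_1=0$, and $u_\theta$ is the element produced by Lemma~\ref{lemma61} applied to $\theta:=g\in(N^{\frac12}(\partial\Omega))^*$, so that $(H^A-z)u_\theta=0$ and ${\hat\gamma_{\mathrm{D}}}u_\theta=g$. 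Linearity of ${\hat\gamma_{\mathrm{D}}}$ and additivity of $(H^A-z)$ then give $(H^A-z)u_D=f$ and ${\hat\gamma_{\mathrm{D}}}u_D=g$, so $u_D$ is a solution; the norm estimate follows by adding the bound $\|u_1\|_{\mathrm{L}^2(\Omega)}\le\|(H^A_{\mathrm D}-z)^{-1}\|\,\|f\|_{\mathrm{L}^2(\Omega)}$ to the bound $\|u_\theta\|_{\mathrm{L}^2(\Omega)}\le C\|g\|_{(N^{\frac12}(\partial\Omega))^*}$ from Lemma~\ref{lemma61}, and setting $C(z)$ to be the larger of the two constants.

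For uniqueness, suppose $u$ solves the homogeneous problem: $u\in\mathrm{L}^2(\Omega)$, $(H^A-z)u=0$, and ${\hat\gamma_{\mathrm{D}}}u=0$. I want to conclude $u=0$. The natural route is to test against a suitable element of $H^2(\Omega)\cap H^1_0(\Omega)$: by Lemma~\ref{tnh2h10} the map $\gamma_{\mathrm{N}}^{A}:H^2(\Omega)\cap H^1_0(\Omega)\to N^{\frac12}(\partial\Omega)$ is onto, and the operator $(H^A-\overline z):H^2(\Omega)\cap H^1_0(\Omega)\to\mathrm{L}^2(\Omega)$ is also onto (since $\overline z\notin\sigma(H^A_{\mathrm D})$). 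Given any $h\in\mathrm{L}^2(\Omega)$ pick $w\in H^2(\Omega)\cap H^1_0(\Omega)$ with $(H^A-\overline z)w=h$; then by the integration-by-parts formula~\eqref{intp2} applied to $w$ and $u$,
\begin{equation*}
(h,u)_{\mathrm{L}^2(\Omega)}=((H^A-\overline z)w,u)_{\mathrm{L}^2(\Omega)}=(w,(H^A-z)u)_{\mathrm{L}^2(\Omega)}-\langle\gamma_{\mathrm{N}}^{A}w,{\hat\gamma_{\mathrm{D}}}u\rangle_{N^{\frac12}(\partial\Omega)}=0,
\end{equation*}
using $(H^A-z)u=0$ and ${\hat\gamma_{\mathrm{D}}}u=0$ (note $\gamma_{\mathrm N}^A w=\gamma_{\mathrm N}^A w$ since $w\in H^2$, so the pairing makes sense). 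Since $h\in\mathrm{L}^2(\Omega)$ was arbitrary, $u=0$. The last assertion is immediate: if $g=0$ then by uniqueness $u_D=u_1=(H^A_{\mathrm D}-z)^{-1}f\in H^2(\Omega)\cap H^1_0(\Omega)$, the regularity being exactly the statement that $\mathrm{Dom}\,H^A_{\mathrm D}=H^2(\Omega)\cap H^1_0(\Omega)$ for quasi-convex $\Omega$.

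The only delicate point I foresee is making sure the integration-by-parts identity~\eqref{intp2} is being applied in the right direction — i.e. with the $H^2$-regular function in the slot that carries $\gamma_{\mathrm N}^A$ and the merely-$\mathrm{L}^2$ maximal-domain function in the slot that carries ${\hat\gamma_{\mathrm{D}}}$ — and keeping track of the complex conjugates and signs in~\eqref{intp2} when one passes between $z$ and $\overline z$. Everything else is bookkeeping: the existence half is a direct superposition of Lemma~\ref{lemma61} and the resolvent of $H^A_{\mathrm D}$, and the constant $C(z)$ is assembled from the two norm bounds already in hand. No genuinely new estimate is required.
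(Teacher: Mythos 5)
Your proof is correct and follows essentially the same route as the paper: existence by superposing the resolvent of $H^{A}_{\mathrm D}$ with the Poisson-type map of Lemma~\ref{lemma61}, and uniqueness by pairing the homogeneous solution against $(H^{A}_{\mathrm D}-\overline{z})^{-1}h$ for arbitrary $h\in\mathrm{L}^{2}(\Omega)$ via~\eqref{intp2}. The only minor (and in fact slightly cleaner) difference is that you invoke Lemma~\ref{lemma61} at the given spectral parameter $z$, so $(H^{A}-z)u_{\theta}=0$ and no correction term is needed, whereas the paper takes $v$ with $H^{A}v=0$ and then corrects with $w=(H^{A}_{\mathrm D}-z)^{-1}(f+zv)$.
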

\begin{proof}
By  Lemma~\ref{lemma61}, we can take $v\in \mathrm{L}^{2}(\Omega)$ such that $H^{A}v=0$, ${\hat\gamma_{\mathrm{D}}}v=g$ and $\|v\|_{\mathrm{L}^{2}(\Omega)}\leq C\|g\|_{(N^{\frac{1}{2}}(\partial\Omega))^{*}}$. Define  $w=(H^{A}_{\mathrm D}-z)^{-1}(f+zv)\in H^{2}(\Omega)\cap H^{1}_{0}(\Omega)$, then
\begin{equation*}
\|w\|_{\mathrm{L}^{2}(\Omega)}\leq C \|f+zv\|_{\mathrm{L}^{2}(\Omega)}\leq C'\big\{\|f\|_{\mathrm{L}^{2}(\Omega)}+\|g\|_{(N^{\frac{1}{2}}(\partial\Omega))^{*}}\big\}.
\end{equation*}
It is easy to see that $u=v+w$ is a solution of the problem that satisfies the inequality in the statement of the lemma. To see that the solution is unique we need to show that, if $u\in \mathrm{L}^{2}(\Omega)$ satisfy $(H^{A}-z)u=0$ and ${\hat\gamma_{\mathrm{D}}}u=0$, then $u=0$. Fix $f\in \mathrm{L}^{2}(\Omega)$ and let  $u_{f}=(H^{A}_{D}(\Omega)-\overline{z})^{-1}f\in H^{2}(\Omega)\cap H^{1}_{0}(\Omega)$, then,
\begin{eqnarray*}
(f,u)_{\mathrm{L}^{2}(\Omega)}&=&((H^{A}_{D}(\Omega)-\overline{z})u_{f},u)_{\mathrm{L}^{2}(\Omega)}\\&=&(u_{f},(H^{A}_{D}(\Omega)-z)u)_{\mathrm{L}^{2}(\Omega)}-\langle\gamma_{\mathrm{N}}^{A}u_{f},{\hat\gamma_{\mathrm{D}}}u\rangle_{N^{\frac{1}{2}}(\partial\Omega)}=0,
\end{eqnarray*}

where the second equality follow from \ref{intp2}, thus, it follows that $u=0$.
\end{proof}

The following corollary is an immediate consequence of the above two  lemmas.
 
\begin{corollary}\label{gammaisom}
Let~$\Omega$ be quasi-convex and $z\in \mathbb C\setminus\sigma(H^{A}_{\mathrm D})$. Then the operator
\begin{equation*}
{\hat\gamma_{\mathrm{D}}}:\left\{ u\in \mathrm{L}^{2}(\Omega)|\; (H^{A}-z)u=0\right\}\rightarrow (N^{\frac{1}{2}}(\partial\Omega))^{*}
\end{equation*}
is a continuous isomorphism with a continuous inverse.
\end{corollary}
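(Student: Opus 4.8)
The plan is to read off the corollary by assembling Lemmas~\ref{lemma61} and~\ref{solgamman1/2}, together with the basic mapping properties of $\hat\gamma_{\mathrm{D}}$ from Theorem~\ref{gammaN1/2}. Write $\mathcal{N}_z:=\{u\in \mathrm{L}^{2}(\Omega)\mid (H^{A}-z)u=0\}$. Since $H^{A}_{\mathrm{max}}=(H^{A}_{\mathrm{min}})^{*}$ is closed by Lemma~\ref{lemmaBerndtLaplac}, the subspace $\mathcal{N}_z=\ker(H^{A}_{\mathrm{max}}-z)$ is closed in $\mathrm{L}^{2}(\Omega)$, hence a Banach space for $\|\cdot\|_{\mathrm{L}^{2}(\Omega)}$; moreover $\mathcal{N}_z\subset\mathrm{Dom}\,H^{A}_{\mathrm{max}}$, and on $\mathcal{N}_z$ one has $\|H^{A}u\|_{\mathrm{L}^{2}(\Omega)}=|z|\,\|u\|_{\mathrm{L}^{2}(\Omega)}$, so the graph norm of $H^{A}_{\mathrm{max}}$ restricted to $\mathcal{N}_z$ is equivalent to $\|\cdot\|_{\mathrm{L}^{2}(\Omega)}$. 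Since $\hat\gamma_{\mathrm{D}}$ is continuous on $\mathrm{Dom}\,H^{A}_{\mathrm{max}}$ with the graph norm (Theorem~\ref{gammaN1/2}), it follows that its restriction $\hat\gamma_{\mathrm{D}}|_{\mathcal{N}_z}:\mathcal{N}_z\to (N^{\frac12}(\partial\Omega))^{*}$ is bounded with respect to $\|\cdot\|_{\mathrm{L}^{2}(\Omega)}$.

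Next I would record injectivity: if $u\in\mathcal{N}_z$ satisfies $\hat\gamma_{\mathrm{D}}u=0$, then $u$ solves the boundary value problem of Lemma~\ref{solgamman1/2} with $f=0$ and $g=0$, and the uniqueness statement there forces $u=0$. Thus $\hat\gamma_{\mathrm{D}}|_{\mathcal{N}_z}$ is a bounded injection.

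For surjectivity and the continuity of the inverse I would invoke Lemma~\ref{lemma61}, which (for the fixed $z\in\mathbb{C}\setminus\sigma(H^{A}_{\mathrm D})$) provides a linear bounded map $(N^{\frac12}(\partial\Omega))^{*}\ni\theta\mapsto u_{\theta}\in\mathrm{L}^{2}(\Omega)$ with $(H^{A}-z)u_{\theta}=0$, i.e.\ $u_{\theta}\in\mathcal{N}_z$, and $\hat\gamma_{\mathrm{D}}u_{\theta}=\theta$. Hence $\theta\mapsto u_{\theta}$ is a bounded right inverse of $\hat\gamma_{\mathrm{D}}|_{\mathcal{N}_z}$; combined with the injectivity just established, it is a two-sided inverse, and it is bounded by construction. (Alternatively, one may simply quote the bounded inverse theorem, both $\mathcal{N}_z$ and $(N^{\frac12}(\partial\Omega))^{*}$ being Banach.) Therefore $\hat\gamma_{\mathrm{D}}|_{\mathcal{N}_z}$ is a topological isomorphism with continuous inverse.

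I do not expect any genuine obstacle here: the corollary is immediate once the two lemmas are available, and the only point deserving a word of care is the passage from the graph norm on $\mathrm{Dom}\,H^{A}_{\mathrm{max}}$ to $\|\cdot\|_{\mathrm{L}^{2}(\Omega)}$ on the solution space $\mathcal{N}_z$, which is harmless because $H^{A}$ acts as multiplication by $z$ there.
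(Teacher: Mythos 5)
Your argument is correct and follows exactly the route the paper intends: the paper states the corollary as an immediate consequence of Lemmas~\ref{lemma61} and~\ref{solgamman1/2}, and you assemble precisely those two (surjectivity and a bounded right inverse from Lemma~\ref{lemma61}, injectivity from the uniqueness part of Lemma~\ref{solgamman1/2}, continuity of $\hat\gamma_{\mathrm{D}}$ from Theorem~\ref{gammaN1/2}). Your extra remark that the graph norm of $H^{A}_{\mathrm{max}}$ collapses to the $\mathrm{L}^{2}$ norm on the eigenspace is a correct and useful clarification of why "continuous" means continuous for $\|\cdot\|_{\mathrm{L}^{2}(\Omega)}$.
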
 

Let~$\Omega$ be a quasi-convex domain; due to the above results, we can introduce the operator 
\[
M^{A}_{\mathrm{D,N}}(z): (N^{\frac{1}{2}}(\partial\Omega))^{*}\rightarrow (N^{\frac{3}{2}}(\partial\Omega))^{*}
\] for $z\in \mathbb{C}\setminus\sigma(H^{A}_{\mathrm D})$, defined by $M^{A}_{\mathrm{D,N}}(z)f:=-{\hat\gamma}^A_{\mathrm{N}}u_{D}$, where $u_{D}$ is the unique solution of
\begin{equation*}
(H^{A}-z)u=0,\quad u\in \mathrm{L}^{2}(\Omega),\quad {\hat\gamma_{\mathrm{D}}}u=f.
\end{equation*}
This  operator is clearly bounded and 
\begin{equation*}
\tau^{A,z}_{\mathrm{N}}u={\hat\gamma}^A_{\mathrm{N}}u+M^{A}_{\mathrm{D,N}}(z)({\hat\gamma_{\mathrm{D}}}u),\quad \forall u\in \mathrm{Dom}H^{A}_{\mathrm{max}}.
\end{equation*}

\section{Boundary triples, self-adjoint extensions of $H^{A}_{\mathrm{min}}$ and gauge transformations}\label{cap7}
In this section, first we  briefly review the concept of boundary triples~\cite{SOSE}. Then we construct a boundary triple for the operator~$H^{A}_{\mathrm{max}}$ in the case of quasi-convex~$\Omega$, and  so we  obtain a parametrization of all self-adjoint extensions of~$H^{A}_{\mathrm{min}}$ through unitary operators on $N^{\frac{1}{2}}(\partial\Omega)$.  For the case $A=0$, this result gives us another parametrization to the family of all self-adjoint extension of the minimal Laplacian that is different from the one obtained in \cite{ADO}, where the parametrization is given in terms of self-adjoint operators defined in closed subspaces of~$(N^{1/2}(\partial\Omega))^{*}$, and is obtained via Theorem~II.2.1 of~\cite{ACON}. We think that our parametrization of the family of all self-adjoint extensions of $H^{A}_{\mathrm{min}}$ is  more direct and easier to apply.  

In this section the space $N^{\frac{1}{2}}(\partial\Omega)$ is considered a Hilbert space equipped with the inner product $(\cdot,\cdot)_{N^{\frac{1}{2}}(\partial\Omega)}$ and the associated norm will be denoted by $\|\cdot\|_{N^{\frac{1}{2}}(\partial\Omega)}$. Furthermore, the inner product $(\cdot,\cdot)_{N^{\frac{1}{2}}(\partial\Omega)}$ will be  selected in such way that it has the following property: let  $F$ be a measurable function such that $|F(x)|=1$ a.s., then $M_{F}$, the operator of multiplication by  $F$ in $N^{\frac{1}{2}}(\partial\Omega)$, is unitary and  $(M_{F})^*=M_{F^{-1}}$; an  inner  product of this kind can  always be constructed  in $N^{\frac{1}{2}}(\partial\Omega)$, see, for example, Remark~6.14 in~\cite{ADO}, where the authors assume that $\Omega$ is bounded, but (by the same proof) the result  holds true for  unbounded domains with compact boundary. We note that  Theorem~\ref{extaaU} does not depend of this particular choice of  inner product in $N^{\frac{1}{2}}(\partial\Omega)$, however, Theorem~\ref{gauge} does use this particular kind of inner product (as well as all results that follow from this theorem).

Denote by $I$ the surjective isometry $I:(N^{\frac{1}{2}}(\partial\Omega))^{*}\rightarrow N^{\frac{1}{2}}(\partial\Omega)$ defined by
\begin{equation*}
\langle f,g\rangle_{N^{\frac{1}{2}}(\partial\Omega)}=(f,Ig)_{N^{\frac{1}{2}}(\partial\Omega)},\qquad \forall f\in N^{\frac{1}{2}}(\partial\Omega)\quad\textrm{and}\quad \forall g\in(N^{\frac{1}{2}}(\partial\Omega))^{*}.
\end{equation*}

\subsection{Boundary triples}

\begin{definition}\label{detrif}
Let~$A$ be a linear, densely defined and closed operator in a Hilbert space~$H$. A boundary triple for~$A$ is a triple $(G,\Gamma_{1},\Gamma_{2})$, where~$G$ is a Hilbert space and $\Gamma_{1},\Gamma_{2}:\mathrm{Dom}\,A \rightarrow G$ are linear operators such that:

i)~$
\langle f,Ag\rangle-\langle Af,g\rangle=\langle\Gamma_{1}f,\Gamma_{2}g\rangle-\langle\Gamma_{2}f,\Gamma_{1}g\rangle$ for all $f,g\in \mathrm{Dom}\,A.
$

ii) The operator $(\Gamma_{1},\Gamma_{2}):\mathrm{Dom}\,A\rightarrow G \times G$ is onto.

iii) The set $\mathrm{Ker}(\Gamma_{1},\Gamma_{2})$ is dense in~$H$.
\end{definition}

By Theorems~1.2 and~1.12 of~\cite{SOSE}, we have

\begin{theorem}\label{extbtrip}
Let~$A$ be a linear, symmetric densely defined and closed operator in a Hilbert space~$H$ and  $(G,\Gamma_{1},\Gamma_{2})$ a boundary triple for $A^{*}$. Then, the application  that maps to each unitary operator~$U$ of~$G$ the operator~$A_{U}$ defined by
\begin{eqnarray*}
\mathrm{Dom}\,A_{U}&=&\left\{u\in \mathrm{Dom}\,A^{*}|\; i(\textbf{1}+U)\Gamma_{1}u=(\textbf{1}-U)\Gamma_{2}u\right\},\\
A_{U}u&:=&A^{*}u\,,\quad u\in \mathrm{Dom}\,A_{U},
\end{eqnarray*}
sets a bijection  between the set of unitary applications of~$G$ and the set of all self-adjoint extensions of~$A$.
\end{theorem}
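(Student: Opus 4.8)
\noindent The plan is to run the standard boundary-triple argument, reducing the statement to a parametrization of the "Lagrangian" subspaces of $G\oplus G$. Throughout write $\Gamma:=(\Gamma_{1},\Gamma_{2})$ and, for a subspace $D$ with $\mathrm{Dom}\,A\subseteq D\subseteq\mathrm{Dom}\,A^{*}$, put $A_{D}:=A^{*}|_{D}$. The first step is to fix the dictionary. I would show $\mathrm{Ker}\,\Gamma=\mathrm{Dom}\,A$: the inclusion $\mathrm{Dom}\,A\subseteq\mathrm{Ker}\,\Gamma$ follows from Green's identity (property~i of Definition~\ref{detrif} applied to $A^{*}$), since for $f\in\mathrm{Dom}\,A$ the left-hand side vanishes against every $g\in\mathrm{Dom}\,A^{*}$, and surjectivity of $\Gamma$ (property~ii) then forces $\Gamma_{1}f=\Gamma_{2}f=0$; conversely, if $\Gamma f=0$ then Green's identity gives $\langle f,A^{*}g\rangle=\langle A^{*}f,g\rangle$ for all $g\in\mathrm{Dom}\,A^{*}$, so $f$ lies in the domain of $(A^{*})^{*}=\overline{A}=A$. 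Hence $A=A^{*}|_{\mathrm{Ker}\,\Gamma}$, and since $A$ is symmetric every self-adjoint extension $\tilde A$ obeys $A\subseteq\tilde A=\tilde A^{*}\subseteq A^{*}$; thus all self-adjoint extensions are of the form $A_{D}$, and $D\mapsto\Gamma(D)$ is a bijection between the subspaces $D$ with $\mathrm{Dom}\,A\subseteq D\subseteq\mathrm{Dom}\,A^{*}$ and arbitrary subspaces of $G\oplus G$.

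Next I would introduce the Cayley-type boundary maps $\xi:=\Gamma_{2}+i\Gamma_{1}$ and $\eta:=\Gamma_{2}-i\Gamma_{1}$; the pair $(\xi,\eta)$ is obtained from $\Gamma$ by an invertible linear change of coordinates on $G\oplus G$, hence is again onto $G\oplus G$ with kernel $\mathrm{Dom}\,A$. A routine expansion of Green's identity produces an identity of the form
\[
\langle f,A^{*}g\rangle-\langle A^{*}f,g\rangle=c\,\big(\langle\xi f,\xi g\rangle-\langle\eta f,\eta g\rangle\big),\qquad f,g\in\mathrm{Dom}\,A^{*},
\]
for a fixed nonzero scalar $c$. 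Consequently $A_{D}$ is symmetric iff $\langle\xi f,\xi g\rangle=\langle\eta f,\eta g\rangle$ for all $f,g\in D$; taking $f=g$ shows $\xi f=0\Rightarrow\eta f=0$, so in the symmetric case $\Gamma(D)$ (equivalently $(\xi,\eta)(D)$) is the graph of an isometry $V:\{\xi u:u\in D\}\to\{\eta u:u\in D\}$, $V\xi u=\eta u$.

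The heart of the proof is then to show that $A_{D}$ is self-adjoint iff this isometry $V$ is the restriction of a unitary $U$ on all of $G$, i.e.\ iff $D=\{u\in\mathrm{Dom}\,A^{*}:\eta u=U\xi u\}$; and rewriting $\eta u=U\xi u$ in terms of $\Gamma_{1},\Gamma_{2}$ gives exactly $i(\textbf{1}+U)\Gamma_{1}u=(\textbf{1}-U)\Gamma_{2}u$. For "$\Leftarrow$", unitarity of $U$ makes $A_{D}$ symmetric, and if $g\in\mathrm{Dom}\,A^{*}$ satisfies $\langle\xi f,\xi g\rangle=\langle\eta f,\eta g\rangle$ for all $f\in D$ then, choosing for each $a\in G$ (by surjectivity of $(\xi,\eta)$) some $f\in D$ with $\xi f=a$ and $\eta f=Ua$, one gets $\langle a,\xi g\rangle=\langle Ua,\eta g\rangle$ for every $a$, hence $\xi g=U^{*}\eta g$ and $g\in D$; so $(A_{D})^{*}=A_{D}$. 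For "$\Rightarrow$", self-adjointness first forces $\{\xi u:u\in D\}$ and $\{\eta u:u\in D\}$ to be dense (a nonzero vector orthogonal to $\{\xi u:u\in D\}$ would, via surjectivity of $(\xi,\eta)$, yield an element of $\mathrm{Dom}(A_{D})^{*}\setminus D$), and then, expressing $\mathrm{Dom}(A_{D})^{*}$ through $(\xi,\eta)$ and comparing it with $D$, surjectivity of $(\xi,\eta)$ upgrades "dense" to "all of $G$", so the closure of $V$ is a unitary $U$ with $D=\{u:\eta u=U\xi u\}$. Finally, $U\mapsto A_{U}$ is injective (two unitaries agreeing on all pairs $(\xi u,\eta u)$, $u\in\mathrm{Dom}\,A_{U}$, coincide, again by surjectivity of $(\xi,\eta)$), each $A_{U}$ is manifestly a self-adjoint extension of $A$, and by the "$\Rightarrow$" half every self-adjoint extension arises this way; this is the asserted bijection.

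I expect the delicate point to be the "$\Rightarrow$" half of the last step, namely promoting $V$ from a densely defined isometry with dense range to a genuine unitary. This is exactly where one must use the full force of property~ii (that $\Gamma$, hence $(\xi,\eta)$, is \emph{onto} $G\oplus G$, not merely densely valued) together with the equality $\mathrm{Dom}(A_{D})^{*}=D$ in full; the inclusion $D\subseteq\mathrm{Dom}(A_{D})^{*}$, which is just symmetry, is not sufficient. Everything else is bookkeeping with the change of coordinates $\Gamma\leftrightarrow(\xi,\eta)$.
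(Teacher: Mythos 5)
Your proof is correct, but note that the paper does not prove Theorem~\ref{extbtrip} at all: it is quoted directly from Theorems~1.2 and~1.12 of the reference \cite{SOSE} (Br\"uning--Geyler--Pankrashkin). What you have written is essentially the standard argument from that literature: the Cayley-type boundary maps $\xi=\Gamma_{2}+i\Gamma_{1}$, $\eta=\Gamma_{2}-i\Gamma_{1}$ turn Green's identity into $\langle f,A^{*}g\rangle-\langle A^{*}f,g\rangle=\tfrac{i}{2}\bigl(\langle\xi f,\xi g\rangle-\langle\eta f,\eta g\rangle\bigr)$ (you may as well record that $c=i/2$), symmetric restrictions correspond to isometries $V\xi u=\eta u$, and self-adjointness forces $V$ to close up to a unitary $U$ with $\eta u=U\xi u$, which unpacks to exactly $i(\textbf{1}+U)\Gamma_{1}u=(\textbf{1}-U)\Gamma_{2}u$. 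Your preliminary identification $\mathrm{Ker}(\Gamma_{1},\Gamma_{2})=\mathrm{Dom}\,A$ is also needed and correctly derived from properties~i) and~ii) together with $(A^{*})^{*}=A$ ($A$ closed); this is worth keeping explicit, since the paper's Definition~\ref{detrif} only postulates density of the kernel. The one step you rightly flag as delicate --- upgrading the densely defined isometry to a unitary and showing $D$ is the full preimage --- is cleanest if you finish it by observing that $D'=\{u:\eta u=\overline{V}\xi u\}$ contains $D$ and yields a self-adjoint $A_{D'}$ by your ``$\Leftarrow$'' half, whence $A_{D}=A_{D'}$ because self-adjoint operators admit no proper self-adjoint extensions. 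With that small amount of bookkeeping filled in, the argument is complete and faithful to the cited source.
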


\subsection{Self-adjoint extensions of  $H^{A}_{\mathrm{min}}$ in quasi-convex domains}
The next theorem establishes that $(N^{\frac{1}{2}}(\partial\Omega),\tau_{\mathrm{N}}^{A,z},{\hat\Gamma_{D}})$, with $z\in\mathbb{R}\setminus\sigma(H^{A}_{\mathrm D})$ and ${\hat\Gamma_{D}}=I\circ {\hat\gamma_{\mathrm{D}}}$, is a boundary triple for the operator  $H^{A}_{\mathrm{max}}=(H^{A}_{0})^{*}$; this result will be a tool for our description of the family of all self-adjoint extensions of~$H^{A}_{\mathrm{min}}$.

\begin{theorem}\label{teorBTHmax}
Let~$\Omega$ be a quasi-convex domain with compact boundary,  a vector field ${A}$ with components in $\mathrm{W}^{1}_{\infty}(\overline\Omega)$ and $z\in\mathbb{R}\setminus\sigma(H^{A}_{\mathrm D})$, then $(N^{\frac{1}{2}}(\partial\Omega),\tau_{\mathrm{N}}^{A,z},{\hat\Gamma_{D}})$  is a boundary triple for~$H^{A}_{\mathrm{max}}$.
\end{theorem}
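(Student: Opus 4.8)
The plan is to check, in turn, the three defining conditions of a boundary triple (Definition~\ref{detrif}), where the ambient operator is $H^{A}_{\mathrm{max}}$, the boundary Hilbert space is $N^{\frac{1}{2}}(\partial\Omega)$, and the two boundary maps are $\Gamma_{1}:=\tau_{\mathrm{N}}^{A,z}$ and $\Gamma_{2}:={\hat\Gamma_{D}}=I\circ{\hat\gamma_{\mathrm{D}}}$, both regarded as linear maps from $\mathrm{Dom}\,H^{A}_{\mathrm{max}}$ into $N^{\frac{1}{2}}(\partial\Omega)$. First I would recall that $H^{A}_{\mathrm{max}}=(H^{A}_{\mathrm{min}})^{*}$ is closed and densely defined by Lemma~\ref{lemmaBerndtLaplac}, and that $\tau_{\mathrm{N}}^{A,z}$ and ${\hat\gamma_{\mathrm{D}}}$ have exactly the stated target spaces, so the framework of Definition~\ref{detrif} applies.

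For the abstract Green identity (condition~i)) I would start from item~iii) of Theorem~\ref{teotau}. Because $z$ is real, $\overline{z}=z$ and hence $\tau_{\mathrm{N}}^{A,\overline{z}}=\tau_{\mathrm{N}}^{A,z}$, so that identity reads $(H^{A}_{\mathrm{max}}u,v)_{\mathrm{L}^{2}(\Omega)}-(u,H^{A}_{\mathrm{max}}v)_{\mathrm{L}^{2}(\Omega)}=-\langle\tau_{\mathrm{N}}^{A,z}u,{\hat\gamma_{\mathrm{D}}}v\rangle_{N^{\frac{1}{2}}(\partial\Omega)}+\overline{\langle\tau_{\mathrm{N}}^{A,z}v,{\hat\gamma_{\mathrm{D}}}u\rangle}_{N^{\frac{1}{2}}(\partial\Omega)}$ for all $u,v\in\mathrm{Dom}\,H^{A}_{\mathrm{max}}$. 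Using the defining relation $\langle f,g\rangle_{N^{\frac{1}{2}}(\partial\Omega)}=(f,Ig)_{N^{\frac{1}{2}}(\partial\Omega)}$ of the isometry $I$ one rewrites $\langle\tau_{\mathrm{N}}^{A,z}u,{\hat\gamma_{\mathrm{D}}}v\rangle_{N^{\frac{1}{2}}(\partial\Omega)}=(\Gamma_{1}u,\Gamma_{2}v)_{N^{\frac{1}{2}}(\partial\Omega)}$ and, by conjugate symmetry of the inner product, $\overline{\langle\tau_{\mathrm{N}}^{A,z}v,{\hat\gamma_{\mathrm{D}}}u\rangle}_{N^{\frac{1}{2}}(\partial\Omega)}=(\Gamma_{2}u,\Gamma_{1}v)_{N^{\frac{1}{2}}(\partial\Omega)}$; after rearranging, this is precisely condition~i). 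This step is essentially bookkeeping once Theorem~\ref{teotau} is available, and it is the only place where the hypothesis $z\in\mathbb{R}$ is used.

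For the surjectivity of $(\Gamma_{1},\Gamma_{2})$ (condition~ii)), given $(\phi_{1},\phi_{2})\in N^{\frac{1}{2}}(\partial\Omega)\times N^{\frac{1}{2}}(\partial\Omega)$ I would build $u$ as a sum of two pieces that carry the two boundary data separately. For the Dirichlet datum, Lemma~\ref{lemma61} gives $v\in\mathrm{L}^{2}(\Omega)$ with $(H^{A}-z)v=0$ and ${\hat\gamma_{\mathrm{D}}}v=I^{-1}\phi_{2}$; since $(H^{A}_{\mathrm{max}}-z)v=0$, the very definition of $\tau_{\mathrm{N}}^{A,z}$ forces $\tau_{\mathrm{N}}^{A,z}v=0$. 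For the Neumann datum, Lemma~\ref{tnh2h10} provides $w\in H^{2}(\Omega)\cap H^{1}_{0}(\Omega)=\mathrm{Dom}\,H^{A}_{\mathrm D}$ with $\gamma_{\mathrm{N}}^{A}w=\phi_{1}$; since $w\in\mathrm{Dom}\,H^{A}_{\mathrm D}$ one has $(H^{A}_{\mathrm D}-z)^{-1}(H^{A}-z)w=w$, hence $\tau_{\mathrm{N}}^{A,z}w=\gamma_{\mathrm{N}}^{A}w=\phi_{1}$, while ${\hat\gamma_{\mathrm{D}}}w=\gamma_{\mathrm{D}}w=0$. Then $u:=v+w\in\mathrm{Dom}\,H^{A}_{\mathrm{max}}$ satisfies $\Gamma_{1}u=\phi_{1}$ and $\Gamma_{2}u=I({\hat\gamma_{\mathrm{D}}}v)=\phi_{2}$. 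I expect condition~ii) to be the main obstacle: it rests on the fact that ${\hat\gamma_{\mathrm{D}}}u$ and $\tau_{\mathrm{N}}^{A,z}u$ can be prescribed independently, which combines the solvability result of Lemma~\ref{lemma61} (equivalently, the isomorphism in Corollary~\ref{gammaisom}) with the surjectivity of $\gamma_{\mathrm{N}}^{A}$ on $H^{2}(\Omega)\cap H^{1}_{0}(\Omega)$ from Lemma~\ref{tnh2h10}.

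Finally, for the density of $\mathrm{Ker}(\Gamma_{1},\Gamma_{2})$ (condition~iii)) I would invoke item~ii) of Theorem~\ref{teotau}, which gives $\mathrm{Ker}\,\tau_{\mathrm{N}}^{A,z}=H^{2}_{0}(\Omega)\dotplus\{u\in\mathrm{L}^{2}(\Omega)\mid(H^{A}-z)u=0\}$. Intersecting with $\mathrm{Ker}\,{\hat\gamma_{\mathrm{D}}}$ and using that ${\hat\gamma_{\mathrm{D}}}$ vanishes on $H^{2}_{0}(\Omega)\subset H^{1}_{0}(\Omega)$ but is injective on $\{u\in\mathrm{L}^{2}(\Omega)\mid(H^{A}-z)u=0\}$ by Corollary~\ref{gammaisom}, one concludes $\mathrm{Ker}(\Gamma_{1},\Gamma_{2})=H^{2}_{0}(\Omega)$ (the reverse inclusion being immediate), which contains $\mathrm{C}^{\infty}_{0}(\Omega)$ and is therefore dense in $\mathrm{L}^{2}(\Omega)$. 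With conditions i)--iii) in place, $(N^{\frac{1}{2}}(\partial\Omega),\tau_{\mathrm{N}}^{A,z},{\hat\Gamma_{D}})$ is a boundary triple for $H^{A}_{\mathrm{max}}$.
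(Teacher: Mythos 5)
Your proof is correct and follows essentially the same route as the paper: condition~i) from item~iii) of Theorem~\ref{teotau} with $z=\overline{z}$, condition~ii) from the surjectivity of $\tau_{\mathrm{N}}^{A,z}$ on $H^{2}(\Omega)\cap H^{1}_{0}(\Omega)$ together with Lemma~\ref{lemma61}, and condition~iii) from $\mathrm{C}^{\infty}_{0}(\Omega)\subset\mathrm{Ker}(\tau_{\mathrm{N}}^{A,z},{\hat\Gamma_{D}})$. Your explicit decomposition $u=v+w$ for surjectivity and your identification $\mathrm{Ker}(\Gamma_{1},\Gamma_{2})=H^{2}_{0}(\Omega)$ merely spell out details the paper leaves implicit.
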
 
\begin{proof}
By  item iii) of Theorem~\ref{teotau}, for all $u,v\in \mathrm{Dom}\,H^{A}_{\mathrm{max}}$ we have
\begin{equation*}
(u,H^{A}_{\mathrm{max}}v)_{\mathrm{L}^{2}(\Omega)}-(H^{A}_{\mathrm{max}}u,v)_{\mathrm{L}^{2}(\Omega)}=(\tau_{\mathrm{N}}^{A,z}u,{\hat\Gamma_{D}}v)_{ N^{\frac{1}{2}}(\partial\Omega)}-({\hat\Gamma_{D}}u,\tau_{\mathrm{N}}^{A,z}v)_{ N^{\frac{1}{2}}(\partial\Omega)},
\end{equation*}
and so item~i) of  Definition~\ref{detrif} is satisfied.

Item ii) of Definition~\ref{detrif} follows since $\tau_{\mathrm{N}}^{A,z}(\mathrm{Ker}\,{\hat\gamma_{\mathrm{D}}})= N^{\frac{1}{2}}(\partial\Omega)$ and ${\hat\Gamma_{D}}(\mathrm{Ker}\, \tau_{\mathrm{N}}^{A,z})=N^{\frac{1}{2}}(\partial\Omega)$, where the first equation is a consequence of  item~i) of Theorem~\ref{teotau}, whereas the second one is a consequence of  item~ii) of  Theorem~\ref{teotau}, together with the fact that, by  Lemma~\ref{solgamman1/2}, ${\hat\gamma_{\mathrm{D}}}(\left\{ u\in \mathrm{L}^{2}(\Omega)|\; (H^{A}_{\mathrm{max}}-z)u=0\right\})=(N^{\frac{1}{2}}(\partial\Omega))^{*}$.

Item iii) of Definition~\ref{detrif} is also satisfied since $\mathrm{C}^{\infty}_{0}(\Omega)$ is contained in $\mathrm{Ker}(\tau_{\mathrm{N}}^{A,z},{\hat\Gamma_{D}})$.
\end{proof}

Next a direct consequence of Theorems~\ref{extbtrip} and~\ref{teorBTHmax} combined with Lemma~\ref{lemmaBerndtLaplac}.

\begin{theorem}\label{extaaU}
Let~$\Omega$ be a quasi-convex domain with compact boundary, a vector field ${A}$ with components in $\mathrm{W}^{1}_{\infty}(\overline\Omega)$  and $z\in\mathbb{R}\setminus\sigma(H^{A}_{\mathrm D})$. Then the application that associates the operator $H^{A,z}_{U}$, defined by
\begin{eqnarray*}
\mathrm{Dom}\,H^{A,z}_{U}&=&\big\{u\in\mathrm{Dom}\,H^{A}_{\mathrm{max}}\mid\; i(\textbf{1}+U)\tau_{\mathrm{N}}^{A,z}u=(\textbf{1}-U){\hat\Gamma_{D}}u\big\}\,,\\
H^{A,z}_{U}u&:=&H^{A}u\,,
\end{eqnarray*}
to the unitary transformation~$U$ on $N^{\frac{1}{2}}(\partial\Omega)$ establishes a bijection between the set of such unitary transformations  and the set of all self-adjoint extensions of~$H^{A}_{\mathrm{min}}$.
\end{theorem}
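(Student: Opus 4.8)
The plan is to deduce Theorem~\ref{extaaU} directly from the abstract parametrization in Theorem~\ref{extbtrip} by feeding in the boundary triple already constructed. First I would recall from Lemma~\ref{lemmaBerndtLaplac} that $H^{A}_{\mathrm{min}}=\overline{H^{A}_{0}}$ is a closed, densely defined, symmetric operator and that its adjoint is exactly $H^{A}_{\mathrm{max}}$. Thus the abstract hypothesis of Theorem~\ref{extbtrip} is met with $A$ replaced by $H^{A}_{\mathrm{min}}$ and $A^{*}=H^{A}_{\mathrm{max}}$.

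Next I would invoke Theorem~\ref{teorBTHmax}, which asserts that the triple $(N^{\frac{1}{2}}(\partial\Omega),\tau_{\mathrm{N}}^{A,z},{\hat\Gamma_{D}})$, with $z\in\mathbb R\setminus\sigma(H^{A}_{\mathrm D})$ and ${\hat\Gamma_{D}}=I\circ{\hat\gamma_{\mathrm{D}}}$, is a boundary triple for $H^{A}_{\mathrm{max}}$ in the sense of Definition~\ref{detrif}. Here $N^{\frac{1}{2}}(\partial\Omega)$ plays the role of the Hilbert space $G$, $\tau_{\mathrm{N}}^{A,z}$ the role of $\Gamma_{1}$, and ${\hat\Gamma_{D}}$ the role of $\Gamma_{2}$; the isometry $I$ serves exactly to move ${\hat\gamma_{\mathrm{D}}}$ from $(N^{\frac{1}{2}}(\partial\Omega))^{*}$ into $N^{\frac{1}{2}}(\partial\Omega)$ so that both boundary maps land in the same Hilbert space, as the abstract definition requires. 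With these identifications in place, Theorem~\ref{extbtrip} applies verbatim: to each unitary $U$ on $N^{\frac{1}{2}}(\partial\Omega)$ it assigns the restriction of $H^{A}_{\mathrm{max}}$ to
\[
\mathrm{Dom}\,H^{A,z}_{U}=\big\{u\in\mathrm{Dom}\,H^{A}_{\mathrm{max}}\mid i(\mathbf{1}+U)\tau_{\mathrm{N}}^{A,z}u=(\mathbf{1}-U){\hat\Gamma_{D}}u\big\},
\]
and this assignment is a bijection onto the set of self-adjoint extensions of $H^{A}_{\mathrm{min}}$. Finally I would note that all such extensions act as $u\mapsto H^{A}u$, since $H^{A}_{\mathrm{max}}u=H^{A}u$ by definition, which gives the stated formula $H^{A,z}_{U}u:=H^{A}u$.

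I do not expect a genuine obstacle here: the theorem is essentially a corollary, obtained by plugging the concrete boundary triple into the abstract machinery. The only point that deserves a sentence of care is the compatibility of conventions — one must make sure that the pairing identity in item~i) of Definition~\ref{detrif}, written there with the abstract pairing $\langle\cdot,\cdot\rangle$, matches the inner-product identity
\[
(u,H^{A}_{\mathrm{max}}v)_{\mathrm{L}^{2}(\Omega)}-(H^{A}_{\mathrm{max}}u,v)_{\mathrm{L}^{2}(\Omega)}=(\tau_{\mathrm{N}}^{A,z}u,{\hat\Gamma_{D}}v)_{N^{\frac{1}{2}}(\partial\Omega)}-({\hat\Gamma_{D}}u,\tau_{\mathrm{N}}^{A,z}v)_{N^{\frac{1}{2}}(\partial\Omega)}
\]
established in the proof of Theorem~\ref{teorBTHmax}; this is exactly how the isometry $I$ was chosen, so no further work is needed. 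Hence the proof reduces to: (1) identify $H^{A}_{\mathrm{min}}$ and $H^{A}_{\mathrm{max}}$ via Lemma~\ref{lemmaBerndtLaplac}; (2) cite Theorem~\ref{teorBTHmax} for the boundary triple; (3) apply Theorem~\ref{extbtrip}.
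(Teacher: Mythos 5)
Your proposal is correct and follows exactly the paper's route: the authors also obtain Theorem~\ref{extaaU} as a direct consequence of Theorem~\ref{extbtrip} applied to the boundary triple of Theorem~\ref{teorBTHmax}, using Lemma~\ref{lemmaBerndtLaplac} to identify $(H^{A}_{\mathrm{min}})^{*}=H^{A}_{\mathrm{max}}$. Your added remark on the role of the isometry $I$ in matching the abstract pairing is a sensible clarification but introduces nothing beyond what the paper already relies on.
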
 

\begin{remark}\label{remarkBijecU}  Fix, for instance, $z=-1$, since $H^{A}_{D}$ is nonnegative, $-1\in\mathbb{R}\setminus\sigma(H^{A}_{\mathrm D})$, for all admissible~$A$. It is interesting to note that Theorem~\ref{extaaU} gives a parametrization of all self-adjoint extensions of $H^A_{\mathrm{min}}$ in terms of~$U$, which is independent of the magnetic potential~$A$. This establishes a natural bijection between the set of self-adjoint extensions $H^{A,-1}_{U}$ of~$H^A_{\mathrm{min}}$ and those $H^{B,-1}_{U}$ of~$H^B_{\mathrm{min}}$, for each pair of admissible magnetic potentials~$A$ and~$B$ through $H^{A,-1}_{U}\longleftrightarrow H^{B,-1}_{U}$.
\end{remark}

\begin{remark}\label{remABallExt}
A particular situation is for  unbounded connected quasi-convex $\Omega\subset\mathbb R^2$,with $\partial\Omega$ a simple closed curve, and $A$ is such that there is no magnetic field in~$\Omega$, i.e., $\mathrm {rot}\,  A=0$ there. This is a typical Aharonov-Bohm setting and, to the best of our knowledge, Theorem~\ref{extaaU} gives the first description of all self-adjoint extensions for this case and, furthermore, also for irregular solenoids~$\partial\Omega$. An important question in this context is to know when and which  self-adjoint extensions~$H^{A,z}_U$ are unitarily equivalent to some realization with zero magnetic potential, that is, the presence of~$A$ would be physically immaterial. We have something to say in the comments after Theorem~\ref{twii}.
\end{remark}
\begin{remark}\label{remarkN3/2}
A boundary triple for $H^{A}_{\mathrm{max}}$ can be constructed in such way that the space $N^{3/2}_{A}(\partial\Omega)$ plays the role of $N^{1/2}(\partial\Omega)$ and a regularization $\tau_{\mathrm{D}}^{A,z}$ of~${\hat\gamma_{\mathrm{D}}}$, analogous to the operator $\tau_{\mathrm{N}}^{A,z}$, defined by 
 \begin{eqnarray*}
 \tau_{\mathrm{D}}^{A,z}&:&\mathrm{Dom}\,H^{A}_{\mathrm{max}}\rightarrow N^{\frac{3}{2}}_{A}(\partial\Omega)\,,\\
 \tau_{\mathrm{D}}^{A,z}u&:=&\hat{\gamma}_{\mathrm{D}}(H^{A}_{\mathrm N}-z)^{-1}(H^{A}_{\mathrm{max}}-z)u\,,\quad u\in \mathrm{Dom}\,H^{A}_{\mathrm{max}}\quad \mathrm{and}\quad z\in\mathbb{C}\setminus \sigma(H^{A}_{N}), 
 \end{eqnarray*}
plays the hole of $\tau_{\mathrm{N}}^{A,z}$ in the construction of the boundary triple (from Theorem~\ref{teorBTHmax}). The construction is done in a very similar way to the previous one. The advantage of the boundary triple using the space $N^{1/2}(\partial\Omega)$ is that this space does not depend explicitly on the magnetic potential~$A$ (Remark~\ref{remarkBijecU} makes use of this fact). 
\end{remark}

\subsection{Gauge equivalence}
 In what follows we  introduce the concept of  gauge equivalence  of vector field in $ (\mathrm{W}^{1}_{\infty}(\overline\Omega))^{n}$, with connected domain~$\Omega$, and  discuss how the self-adjoint extensions of $H^{A}_{\mathrm{mim}}$, given by  Theorem~\ref{extaaU}, behave under such gauge transformations. If~$A=(A_{1},...,A_n)$  is a such vector field on~$\Omega\subset \mathbb R^n$, $n\ge2$, let~$\omega_{A}$ denote the differential 1-form associated with~$A$, that is, in Cartesian coordinates  $\omega_{A}=\sum_{i=1}^{n} A_{i}\,\dx_{i}$. In this subsection we suppose that~$\Omega$ is open and connected.

\begin{definition}\label{defGaugeEq}
Let~$\Omega$ be a connected Lipschitz domain and $A,B$ vector fields with components in $\mathrm{W}^{1}_{\infty}(\overline\Omega)$.  We say that~$A$ is (quantum) gauge equivalent to~$B$ if the following conditions hold:

i) $\mathrm{d}(\omega_{B}-\omega_{A})=0$.

ii) For each (smooth by parts) closed path  $\gamma$ in~$\Omega$,  there exists an integer $n_{\gamma}$ such that 
\[
\int_{\gamma} (\omega_{A}-\omega_{B})= 2\pi n_{\gamma}.
\]
\end{definition} 
 
Let  $A,B$ be two  gauge equivalent vector fields in~$\Omega$, fix $x_{0}\in\Omega$ and consider the function $F^{\Omega}=F^{\Omega}_{A,B}:\Omega\rightarrow \mathbb{C}$ given by  
\begin{equation}\label{eqFOmega}
F^{\Omega}(x):=e^{i\int_{\gamma_x} (\omega_{B}-\omega_{A})},
\end{equation} where $\gamma_x$ is a path in~$\Omega$ connecting~$x_{0}$ to~$x\in \Omega$; note that this function is well defined by  item~ii) in the above definition and  $|F^{\Omega}|=1$.  We have:

\begin{lemma}
Let  $A,B\in (\mathrm{W}^{1}_{\infty}(\overline\Omega)))^{n}$ be gauge equivalent vector fields in $\Omega$. Then, $\nabla F^{\Omega}=i(B-A) F^{\Omega}$; moreover $F^{\Omega} \in\mathrm{W}_\infty^{2}(\overline\Omega)\cap H^{1}(\Omega)$.
 \end{lemma}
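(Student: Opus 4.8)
The plan is to compute $\nabla F^\Omega$ directly from the definition \eqref{eqFOmega} and then bootstrap the regularity. First I would fix $x \in \Omega$ and, since $\Omega$ is open and connected, choose a small ball $B(x,\rho) \subset \Omega$; for $y \in B(x,\rho)$ the path $\gamma_y$ from $x_0$ to $y$ can be taken to be $\gamma_x$ followed by the straight segment from $x$ to $y$, and by item~i) of Definition~\ref{defGaugeEq} (closedness of $\omega_B - \omega_A$) the integral $\int_{\gamma_y}(\omega_B - \omega_A)$ does not depend on the choice within this homotopy class. Writing $\phi(y) := \int_{\gamma_y}(\omega_B - \omega_A)$, the fundamental theorem of calculus along the segment gives $\nabla \phi(y) = (B-A)(y)$ in the classical sense wherever $B-A$ is continuous, and since $A,B \in (\mathrm{W}^1_\infty(\overline\Omega))^n$ their components are (represented by) Lipschitz functions, hence continuous; so $\phi \in \mathrm{C}^1$ near $x$ with $\nabla\phi = B - A$. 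Then $F^\Omega = e^{i\phi}$ and the chain rule yields $\nabla F^\Omega = i (\nabla\phi) F^\Omega = i(B-A)F^\Omega$ pointwise; since $x$ was arbitrary, this holds throughout $\Omega$, and in particular $\nabla F^\Omega$ is a bounded function because $|F^\Omega| = 1$ and $B - A$ is bounded.

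Next I would upgrade this to Sobolev regularity. From $|F^\Omega| = 1$ we get $F^\Omega \in \mathrm{L}^\infty(\Omega) \subset \mathrm{L}^2_{\mathrm{loc}}$, and from $\nabla F^\Omega = i(B-A)F^\Omega \in (\mathrm{L}^\infty(\Omega))^n$ together with the identity just proved (which in particular shows the distributional gradient agrees with this bounded function), we conclude $F^\Omega \in \mathrm{W}^1_\infty(\overline\Omega)$; note $F^\Omega$ extends to a bounded locally Lipschitz function since $\nabla F^\Omega$ is bounded (cf. the characterization of $\mathrm{W}^1_\infty$ recalled in Section~\ref{cap2}). For the second derivative, differentiate the relation $\nabla F^\Omega = i(B-A)F^\Omega$ once more: formally
\[
\partial_j \partial_k F^\Omega = i(\partial_j(B_k - A_k))F^\Omega + i(B_k - A_k)\partial_j F^\Omega = i(\partial_j(B_k-A_k))F^\Omega - (B_j-A_j)(B_k-A_k)F^\Omega.
\]
Since $A_k, B_k \in \mathrm{W}^1_\infty(\overline\Omega)$ their first derivatives are in $\mathrm{L}^\infty(\Omega)$, and all the remaining factors ($F^\Omega$, the components of $B-A$) are in $\mathrm{L}^\infty(\Omega)$; products of $\mathrm{L}^\infty$ functions are in $\mathrm{L}^\infty$, so every second-order distributional derivative of $F^\Omega$ lies in $\mathrm{L}^\infty(\Omega)$. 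Hence $F^\Omega \in \mathrm{W}^2_\infty(\overline\Omega)$, and since $\Omega$ has compact boundary, $\mathrm{W}^2_\infty(\overline\Omega) \hookrightarrow H^1(\Omega)$ (a bounded domain neighborhood argument plus the compact-boundary splitting as in the proof of Theorem~\ref{gamma2}; away from the boundary $F^\Omega$ is locally in $H^1$ and bounded), giving $F^\Omega \in \mathrm{W}^2_\infty(\overline\Omega) \cap H^1(\Omega)$.

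The one point requiring genuine care — the main obstacle — is justifying that the formal computation of $\partial_j\partial_k F^\Omega$ is valid at the level of distributions rather than merely pointwise, i.e., that one may apply the product rule when only $\mathrm{W}^1_\infty$ regularity of $B-A$ is available. This is handled by Remark~\ref{rem3}: multiplication by a $\mathrm{W}^1_\infty(\overline\Omega)$ function maps $H^1(\Omega)$ to $H^1(\Omega)$ boundedly, so since $F^\Omega \in H^1_{\mathrm{loc}}(\Omega)$ (indeed $\mathrm{W}^1_\infty$), the product $(B_k-A_k)F^\Omega$ is in $H^1_{\mathrm{loc}}$ and the Leibniz rule for the product of an $H^1$ function with a Lipschitz function applies, yielding exactly the displayed expression with all terms in $\mathrm{L}^\infty$. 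The remaining verifications — well-definedness of $F^\Omega$ (already noted in the text, from item~ii)), and the elementary chain-rule computation of $\nabla F^\Omega$ — are routine.
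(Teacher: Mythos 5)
Your proposal is correct and follows the same overall strategy as the paper --- represent $F^{\Omega}$ locally as $F^{\Omega}(x_0)e^{i\phi}$ with $\phi$ a line-integral potential of $\omega_B-\omega_A$, show $\nabla\phi=B-A$, and apply the chain rule --- but the two arguments diverge in how $\nabla\phi=B-A$ is established. The paper differentiates the segment integral $\phi(x)=\int_0^1\sum_i f_i(x_0+t(x-x_0))(x^i-x_0^i)\,dt$ under the integral sign (justified by dominated convergence, using the Lipschitz bound on $f=B-A$) and then uses the a.e.\ identity $\partial^i f_j=\partial^j f_i$ to recognize the integrand as $\frac{d}{dt}\bigl(tf_j(x_0+t(x-x_0))\bigr)$; this is entirely self-contained at the $\mathrm{W}^{1}_{\infty}$ level. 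You instead invoke local path-independence of $\int(\omega_B-\omega_A)$ from item~i) of Definition~\ref{defGaugeEq} and then apply the fundamental theorem of calculus along coordinate segments. That route is also valid, but the path-independence step is exactly a Poincar\'e-lemma/Stokes statement for a closed $1$-form whose coefficients are only Lipschitz (closedness holding a.e.\ or distributionally), so it deserves a one-line justification (e.g.\ by mollification); the paper's explicit computation is designed precisely to avoid citing this. On the regularity, you add real content: the paper dismisses $F^{\Omega}\in\mathrm{W}^{2}_{\infty}(\overline\Omega)\cap H^{1}(\Omega)$ as ``an easy consequence,'' whereas you carry out the bootstrap, and your use of the Leibniz rule for the product of the Lipschitz function $B_k-A_k$ with $F^{\Omega}\in\mathrm{W}^{1}_{\infty}$ is the right way to make the second-derivative computation rigorous. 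One caveat affecting both your sketch and the lemma as stated: since $|F^{\Omega}|\equiv 1$, the inclusion $F^{\Omega}\in H^{1}(\Omega)$ forces $\Omega$ to have finite measure, so this part of the conclusion (and your ``compact-boundary splitting'' argument for it) only makes sense for bounded $\Omega$; in that case it is immediate from the boundedness of $F^{\Omega}$ and of $\nabla F^{\Omega}=i(B-A)F^{\Omega}$, with no splitting needed.
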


\begin{proof}
It is enough  to prove the first statement, since the rest is an easy consequence of it. Fix $x_{0}\in \Omega$ and an open ball, $B_{x_{0}}\subset\Omega$, with center $x_{0}$  such that ${A}-{B}$ is $K$-Lipschitz in $B_{x_{0}}$, $K>0 $; the statement will be concluded if we show that it holds in $B_{x_{0}}$.  Note that for $x\in B_{x_{0}}$, we have  $F^{\Omega}(x)=F^{\Omega}(x_{0}) e^{i\phi(x)}$, where $\phi(x)=\int_{[x_{0},x]} (\omega_{A}-\omega_{B})$ and $[x_{0},x]$ is the line segment connecting $x_{0}$ to $x$; we have $\nabla \phi(x)=({B}-{A})(x)$ for $x\in B_{x_{0}}$, indeed, if $({B}-{A})(x)=(f_{1}(x),...,f_{n}(x))$
\begin{eqnarray*}
\partial^{j}\phi(x)&=&\partial^{j}\int_{0}^{1}[\sum^{n}_{i=1}f_{i}(x_{0}+t(x-x_{0}))(x^{i}-x_{0}^{i})]dt\\
&=&\int_{0}^{1}\partial^{j}[\sum^{n}_{i=1}f_{i}(x_{0}+t(x-x_{0}))(x^{i}-x_{0}^{i})]dt\\
&=&\int_{0}^{1}[f_{j}(x_{0}+t(x-x_{0}))+\sum^{n}_{i=1}\partial^{j}f_{i}(x_{0}+t(x-x_{0}))t(x^{i}-x_{0}^{i})]dt\\
&=&\int_{0}^{1}[f_{j}(x_{0}+t(x-x_{0}))+\sum^{n}_{i=1}\partial^{i}f_{j}(x_{0}+t(x-x_{0}))\,t(x^{i}-x_{0}^{i})]dt\\
&=&\int_{0}^{1}\frac{d}{dt}(tf_{j}(x_{0}+t(x-x_{0}))\\
&=&f_{j}(x),
\end{eqnarray*} 
where  the second equality is justified by an application of the dominated convergence theorem to the limit of the definition of the differential, which can be applied because the integrand is bounded since $({B}-{A})$ is $K$-Lipschitz in $B_{x_{0}}$; the fourth equality is a consequence of the fact that $ \partial^{i}f_{j}=\partial^{j}f_{i}$, for  $i,j=1,...,n$, and this proves the above statement.
\end{proof}
From this lemma the following function is also well defined, 
\begin{equation}\label{eqFU}
F^{\partial\Omega}:=\gamma_{\mathrm{D}}F^{\Omega},
\end{equation}
and for a unitary transformation~$U$ on $N^{\frac{1}{2}}(\partial\Omega)$, we define
\begin{equation}\label{eqFUpartial}
\mathcal F_U:=(F^{\partial\Omega})^{-1}U F^{\partial\Omega}.
\end{equation} 

\begin{theorem}\label{gauge}
Let~$\Omega$ be a quasi-convex domain with compact boundary. Let  $A,B$ be  gauge equivalent vector fields in $(\mathrm{W}^{1}_{\infty}(\overline\Omega))^{n}$. Then, with the same hypotheses and notations of Theorem~\ref{extaaU}, the self-adjoint extension  $H^{A,z}_{U}$ of~$H^{A}_{\mathrm{min}}$ is unitarily equivalent to  $H^{B,z}_{\mathcal F_U}$ (extension of~$H^{B}_{\mathrm{min}}$). More precisely,
\[
H^{A,z}_{U}(F^{\Omega}u)=F^{\Omega}H^{B,z}_{\mathcal F_U}u,
\]
for all $ u\in\mathrm{Dom}\,H^{B,z}_{\mathcal F_U}=(F^{\Omega})^{-1}\mathrm{Dom}\,H^{A,z}_{U}$ or, equivalently, $H^{A,z}_{U}F^{\Omega}= F^{\Omega}H^{B,z}_{\mathcal F_U}$.
\end{theorem}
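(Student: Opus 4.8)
The plan is to verify the claimed intertwining relation $H^{A,z}_{U}F^{\Omega}=F^{\Omega}H^{B,z}_{\mathcal F_U}$ by checking three things: that multiplication by $F^{\Omega}$ maps $\mathrm{L}^{2}(\Omega)$ unitarily onto itself (clear, since $|F^{\Omega}|=1$), that it intertwines the two maximal operators $H^{A}_{\mathrm{max}}$ and $H^{B}_{\mathrm{max}}$, and that it carries the boundary condition defining $\mathrm{Dom}\,H^{B,z}_{\mathcal F_U}$ onto the one defining $\mathrm{Dom}\,H^{A,z}_{U}$. The first step beyond the trivial one is the pointwise identity $\nabla_{A}(F^{\Omega}u)=F^{\Omega}\nabla_{B}u$, which follows directly from the previous lemma's formula $\nabla F^{\Omega}=i(B-A)F^{\Omega}$: indeed $\nabla(F^{\Omega}u)+iA F^{\Omega}u = F^{\Omega}\nabla u + i(B-A)F^{\Omega}u + iAF^{\Omega}u = F^{\Omega}(\nabla u + iBu)$. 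Iterating gives $H^{A}(F^{\Omega}u)=F^{\Omega}H^{B}u$ in the sense of distributions, so that $F^{\Omega}$ maps $\mathrm{Dom}\,H^{B}_{\mathrm{max}}$ onto $\mathrm{Dom}\,H^{A}_{\mathrm{max}}$ and conjugates $H^{B}_{\mathrm{max}}$ into $H^{A}_{\mathrm{max}}$; here the regularity $F^{\Omega}\in\mathrm{W}^{2}_{\infty}(\overline\Omega)$ is what guarantees that multiplication by $F^{\Omega}$ and by $(F^{\Omega})^{-1}$ preserves the relevant Sobolev classes (Remark~\ref{rem3}).

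Next I would track how the two boundary maps transform. One needs $\hat\gamma_{\mathrm{D}}(F^{\Omega}u)=F^{\partial\Omega}\hat\gamma_{\mathrm{D}}u$ and the analogous statement for the regularized Neumann trace $\tau_{\mathrm{N}}^{A,z}(F^{\Omega}u)=M_{F^{\partial\Omega}}\tau_{\mathrm{N}}^{B,z}u$. For $\hat\gamma_{\mathrm{D}}$ this should follow from the defining relation in Theorem~\ref{gammaN1/2} together with the intertwining of the maximal operators and of $\gamma_{\mathrm{N}}^{A}$ vs.\ $\gamma_{\mathrm{N}}^{B}$ on $H^{2}(\Omega)\cap H^{1}_{0}(\Omega)$; note $\gamma_{\mathrm{N}}^{A}(F^{\Omega}w)=\nu\cdot\gamma_{\mathrm{D}}\nabla_{A}(F^{\Omega}w)=\nu\cdot\gamma_{\mathrm{D}}(F^{\Omega}\nabla_{B}w)=F^{\partial\Omega}\gamma_{\mathrm{N}}^{B}w$, using $\gamma_{\mathrm{D}}(F^{\Omega}w)=0$ when $\gamma_{\mathrm{D}}w=0$. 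Feeding these identities into the definition $\tau_{\mathrm{N}}^{A,z}u=\gamma_{\mathrm{N}}^{A}(H^{A}_{\mathrm D}-z)^{-1}(H^{A}_{\mathrm{max}}-z)u$ and using that $F^{\Omega}$ intertwines $H^{A}_{\mathrm D}$ and $H^{B}_{\mathrm D}$ (which itself needs that $F^{\Omega}$ preserves the Dirichlet condition, already noted) yields the transformation law for $\tau_{\mathrm{N}}^{A,z}$. The only subtlety is bookkeeping with the identification $I:(N^{1/2})^{*}\to N^{1/2}$ and the fact that $M_{F^{\partial\Omega}}$ is unitary on $N^{1/2}(\partial\Omega)$ with adjoint $M_{(F^{\partial\Omega})^{-1}}$ — precisely the special inner product selected at the start of Section~\ref{cap7}; this is what makes $\mathcal F_U=(F^{\partial\Omega})^{-1}UF^{\partial\Omega}=M_{F^{\partial\Omega}}^{*}UM_{F^{\partial\Omega}}$ again unitary, so that $H^{B,z}_{\mathcal F_U}$ is a genuine self-adjoint extension in the list of Theorem~\ref{extaaU}.

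Finally I would close the argument by substituting: $u\in\mathrm{Dom}\,H^{B,z}_{\mathcal F_U}$ means $i(\mathbf 1+\mathcal F_U)\tau_{\mathrm{N}}^{B,z}u=(\mathbf 1-\mathcal F_U)\hat\Gamma_{D}u$; applying $M_{F^{\partial\Omega}}$ on the left and using $M_{F^{\partial\Omega}}\mathcal F_U=UM_{F^{\partial\Omega}}$ together with the two transformation laws ($M_{F^{\partial\Omega}}\tau_{\mathrm{N}}^{B,z}u=\tau_{\mathrm{N}}^{A,z}(F^{\Omega}u)$ and $M_{F^{\partial\Omega}}\hat\Gamma_{D}u=\hat\Gamma_{D}(F^{\Omega}u)$) turns this into $i(\mathbf 1+U)\tau_{\mathrm{N}}^{A,z}(F^{\Omega}u)=(\mathbf 1-U)\hat\Gamma_{D}(F^{\Omega}u)$, i.e.\ $F^{\Omega}u\in\mathrm{Dom}\,H^{A,z}_{U}$; the reverse inclusion is symmetric (swap $A\leftrightarrow B$, $U\leftrightarrow\mathcal F_U$, $F^{\Omega}\leftrightarrow(F^{\Omega})^{-1}$). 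On that common domain $H^{A,z}_{U}(F^{\Omega}u)=H^{A}_{\mathrm{max}}(F^{\Omega}u)=F^{\Omega}H^{B}_{\mathrm{max}}u=F^{\Omega}H^{B,z}_{\mathcal F_U}u$, which is the assertion. I expect the main obstacle to be the careful verification of the transformation law for $\tau_{\mathrm{N}}^{A,z}$ — in particular confirming that $F^{\Omega}$ really does intertwine the Dirichlet realizations $H^{A}_{\mathrm D}$ and $H^{B}_{\mathrm D}$ and hence their resolvents — together with keeping the dualities $N^{1/2}$ versus $(N^{1/2})^{*}$ and the multiplication operators $M_{F^{\partial\Omega}}$, $M_{(F^{\partial\Omega})^{-1}}$ consistently aligned; everything else is a direct consequence of the pointwise gauge identity $\nabla_{A}(F^{\Omega}\,\cdot\,)=F^{\Omega}\nabla_{B}(\,\cdot\,)$.
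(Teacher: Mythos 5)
Your proposal is correct and follows essentially the same route as the paper's proof: establish the pointwise gauge identity $\nabla_{A}(F^{\Omega}\cdot)=F^{\Omega}\nabla_{B}(\cdot)$, use it to intertwine the maximal and Dirichlet realizations, derive the transformation laws $\tau_{\mathrm{N}}^{A,z}(F^{\Omega}u)=F^{\partial\Omega}\tau_{\mathrm{N}}^{B,z}u$ and $\hat\Gamma_{D}(F^{\Omega}u)=F^{\partial\Omega}\hat\Gamma_{D}u$ (first on $\mathrm{C}^{\infty}_{0}(\overline\Omega)$, then by density), and substitute into the boundary condition with the symmetry argument for the reverse inclusion. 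The points you flag as subtleties (the special inner product making $M_{F^{\partial\Omega}}$ unitary with $(M_{F})^{*}=M_{F^{-1}}$, and the intertwining of the Dirichlet realizations, which the paper obtains via the form identity $\Phi_{B,D}(u,v)=\Phi_{A,D}(F^{\Omega}u,F^{\Omega}v)$) are exactly the ones the paper relies on.
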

\begin{proof}
 First note that $H^{B}_{D}=(F^{\Omega})^{-1}H^{A}_{\mathrm D}F^{\Omega}$ (by an  abuse of notation,  the multiplication operator by $F^{\Omega}$ is also denoted by $F^{\Omega}$). Indeed, in the proof of Proposition~\ref{opdirneu} we have verified that~$H^{A}_{\mathrm D}$ is the self-adjoint operator associated with the form $\Phi_{A,\mathrm{D}}$, the statement then follows  by $\Phi_{B,D}(u,v)=\Phi_{A,\mathrm{D}}(F^{\Omega}u,F^{\Omega}v)$, for all $u,v\in H^{1}_{0}(\Omega)$. 

 Note  now that $\mathrm{Dom}\,H^{A}_{\mathrm{max}}=F^{\Omega}\mathrm{Dom}\,H^{B}_{\mathrm{max}}$ and $H^{A}_{\mathrm{max}}(F^{\Omega}u)= F^{\Omega}H^{B}_{\mathrm{max}}u$. In fact, it is clear that if $u\in \mathrm{C}^{\infty}_{0}(\overline{\Omega})$, then $H^{A}_{\mathrm{max}}(F^{\Omega}u)= F^{\Omega}H^{B}_{\mathrm{max}}u$. Fix  $u\in \mathrm{Dom}\,H^{B}_{\mathrm{max}}$, and take a sequence $\{u_{j}\}_{j\in N}$ in $\mathrm{C}^{\infty}_{0}(\overline{\Omega})$   converging to~$u$ in $\mathrm{Dom}\,H^{B}_{\mathrm{max}}$, with the graph norm $\|\cdot\|_{B}=\|\cdot\|_{\mathrm{L}^{2}(\Omega)}+\|H^{B}(\cdot)\|_{\mathrm{L}^{2}(\Omega)}$, then, from what we said above, 
\begin{equation*}
 \|H^{A}(F^{\Omega}u_{j})-F^{\Omega}H^{B}u\|_{\mathrm{L}^{2}(\Omega)}=\|H^{B}(u_{j})-H^{B}(u)\|\rightarrow 0
\end{equation*}
 as $j\rightarrow \infty$; in particular, $\{F^{\Omega}u_{j}\}_{i\in N}$ is a Cauchy sequence in the graph norm of~$H^{A}_{\mathrm{max}}$ and, therefore, convergent in this space. On the other hand, since $(F^{\Omega}u_{i},H^{A}(F^{\Omega}u_{i}))$ converges in $\mathrm{L}^{2}(\Omega)\times \mathrm{L}^{2}(\Omega)$ to $(F^{\Omega}u,F^{\Omega}H^{B}(u))$, from the fact that~$H^{A}_{\mathrm{max}}$ is closed, it follows that $H^{A}(F^{\Omega}u)=F^{\Omega}H^{B}(u)\in \mathrm{L}^{2}(\Omega)$. Thus $F^{\Omega}\mathrm{Dom}\,H^{B}_{\mathrm{max}}\subset \mathrm{Dom}\,H^{A}_{\mathrm{max}}$ and $H^{A}(F^{\Omega}u)=F^{\Omega}H^{B}(u)\in \mathrm{L}^{2}(\Omega)$ for all $u\in \mathrm{Dom}\,H^{B}_{\mathrm{max}}$. Exchanging the roles of~$A$ and~$B$ in the above arguments, we see that the converse inclusion holds, and this proves the statement.

Note now that, for  $u\in \mathrm{C}^{\infty}_{0}(\overline{\Omega})$,
\begin{eqnarray*}
\tau_{\mathrm{N}}^{A,z}(F^{\Omega}u)&=&\gamma_{\mathrm{N}}^{A}(H^{A}_{\mathrm D}-z)^{-1}(H^{A}-z)(F^{\Omega}u)\\
&=&\gamma_{\mathrm{N}}^{A}(H^{A}_{\mathrm D}-z)^{-1}(F^{\Omega}(H^{B}(u)-z))\\
&=&\gamma_{\mathrm{N}}^{A}F^{\Omega}((H^{B}_{D}-z)^{-1}(H^{B}(u)-z))\\ &=&\nu\cdot\gamma_{\mathrm{D}}\nabla_{A}(F^{\Omega}((H^{B}_{D}-z)^{-1}(H^{B}(u)-z))\\
&=&\nu\cdot\gamma_{\mathrm{D}}F^{\Omega}\nabla_{B}((H^{B}_{D}-z)^{-1}(H^{B}(u)-z))\\ &=&\nu\cdot F^{\partial\Omega}\gamma_{\mathrm{D}}\nabla_{B}((H^{B}_{D}-z)^{-1}(H^{B}(u)-z))\\ &=&F^{\partial\Omega}\tau_{\mathrm{N}}^{B,z}(u),
\end{eqnarray*}
then, using the last observation and the fact that $\mathrm{C}^{\infty}_{0}(\overline{\Omega})$ is dense in $\mathrm{Dom}\,H^{B}_{\mathrm{max}}$, we conclude that $\tau_{\mathrm{N}}^{A,z}(F^{\Omega}u)=F^{\partial\Omega}\tau_{\mathrm{N}}^{B,z}(u)$ for all $u\in \mathrm{Dom}\,H^{B}_{\mathrm{max}}$. Note, also, that for all $u\in \textrm{Dom }H^{B}_{\mathrm{max}}$ one has ${\hat\Gamma_{D}}(F^{\Omega}u)=F^{\partial\Omega}{\hat\Gamma_{D}}u$. Indeed, it is easy to see that this holds for $u\in \mathrm{C}^{\infty}_{0}(\overline{\Omega})$, and the general case is a consequence of the fact that $\mathrm{C}^{\infty}_{0}(\overline{\Omega})$ is dense in $\textrm{Dom }H^{B}_{\mathrm{max}}$.

With  these facts, to conclude the proof it is enough to show that 
\[
\mathrm{Dom}\,H^{B,z}_{\mathcal F_U}=(F^{\Omega})^{-1}\mathrm{Dom}\,H^{A,z}_{U},
\] or, equivalently,  $F^{\Omega}\mathrm{Dom}\,H^{B,z}_{\mathcal F_U}=\mathrm{Dom}\,H^{A,z}_{U}$. Pick then $u\in \mathrm{Dom}\,H^{B,z}_{\mathcal F_U}$;  it follows that  $F^{\Omega}u\in \mathrm{Dom}\,H^{A}_{\mathrm{max}}$ and  $\tau_{\mathrm{N}}^{A,z}(F^{\Omega}u)=F^{\partial\Omega}\tau_{\mathrm{N}}^{B,z}(u)$, so that
\begin{eqnarray*}
i(\textbf{1}+U)\tau_{\mathrm{N}}^{A,z}(F^{\Omega})&=&i(\textbf{1}+U)F^{\partial\Omega}\tau_{\mathrm{N}}^{B,z}u\\
&=&F^{\partial\Omega}i\Big[(\textbf{1}+\mathcal F_U)\Big]\tau_{\mathrm{N}}^{B,z}u\\
&=&F^{\partial\Omega}(\textbf{1}-\mathcal F_U){\hat\Gamma_{D}}u\\
&=&(\textbf{1}-U)F^{\partial\Omega}{\hat\Gamma_{D}}u\\
&=&(\textbf{1}-U){\hat\Gamma_{D}}(F^{\Omega}u),
\end{eqnarray*}
 and $F^{\Omega}u\in \mathrm{Dom}\,H^{A}_{U}$; then,  $F^{\Omega}\mathrm{Dom}\,H^{B}_{\mathcal F_U}\subset\mathrm{Dom}\,H^{A}_{U}$. Exchanging  the roles of~$A$ and~$B$ in the arguments, we obtain the converse inclusion, and this finishes the proof.
\end{proof}

Next, a direct consequence of Theorem~\ref{gauge}.

\begin{corollary}
Let  $A,B$ be vector fields,  ${A},{B}\in (\mathrm{W}^{1}_{\infty}(\overline\Omega))^{n}$   and satisfying $B=A+\nabla\Lambda$ with $\Lambda\in \mathrm{W}^{2}_{\infty}(\overline\Omega)$. Then, under the same hypotheses of  Theorem~\ref{extaaU}, the self-adjoint extension $H^{A,z}_{U}$ of~$H^{A}_{\mathrm{min}}$ is unitarily equivalent to the extension $H^{B,z}_{(e^{-i\lambda}Ue^{i\lambda})}$ of~$H^{B}_{\mathrm{min}}$ , with $\lambda=\gamma_{\mathrm{D}}\Lambda$. In fact, 
\[
H^{A,z}_{U}(e^{i\Lambda}u)=e^{i\Lambda}H^{B,z}_{(e^{-i\lambda}Ue^{i\lambda})}u,
\]
for all $ u\in\mathrm{Dom}\,H^{B,z}_{(e^{-i\lambda}Ue^{i\lambda})}=e^{-i\Lambda}\mathrm{Dom}\,H^{A,z}_{U}$.
\end{corollary}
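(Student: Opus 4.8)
The plan is to obtain the statement directly from Theorem~\ref{gauge}, by checking that $A$ and $B$ are gauge equivalent in the sense of Definition~\ref{defGaugeEq} and that the associated transition function $F^{\Omega}$ of~\eqref{eqFOmega} coincides, up to a unimodular constant, with $e^{i\Lambda}$. First I would verify gauge equivalence: writing $\omega_{B}-\omega_{A}=\sum_{i=1}^{n}(\partial_{i}\Lambda)\,\dx_{i}=\mathrm{d}\Lambda$, this $1$-form is exact, hence closed, which is condition~i) of Definition~\ref{defGaugeEq}; and for every piecewise smooth closed path $\gamma$ in the connected set $\Omega$ one has $\int_{\gamma}(\omega_{A}-\omega_{B})=-\int_{\gamma}\mathrm{d}\Lambda=0$, since $\Lambda\in\mathrm{W}^{2}_{\infty}(\overline\Omega)\subset\mathrm{C}(\overline\Omega)$ is single-valued, so condition~ii) holds with $n_{\gamma}=0$. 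Hence Theorem~\ref{gauge} applies to $(A,B)$.

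Next I would identify $F^{\Omega}$. Fixing a base point $x_{0}\in\Omega$, \eqref{eqFOmega} gives $F^{\Omega}(x)=e^{i\int_{\gamma_{x}}(\omega_{B}-\omega_{A})}=e^{i(\Lambda(x)-\Lambda(x_{0}))}=c\,e^{i\Lambda(x)}$, where $c:=e^{-i\Lambda(x_{0})}$ has modulus one (note $\Lambda$ is real-valued, as $\nabla\Lambda=B-A$ is real). The constant $c$ commutes with every operator in Theorem~\ref{gauge} and, being a scalar, leaves $\mathcal F_{U}=(F^{\partial\Omega})^{-1}UF^{\partial\Omega}$ unchanged, because $(cF^{\partial\Omega})^{-1}U(cF^{\partial\Omega})=(F^{\partial\Omega})^{-1}UF^{\partial\Omega}$; equivalently, the proof of Theorem~\ref{gauge} uses the transition function only through the properties $\nabla F^{\Omega}=i(B-A)F^{\Omega}$, $|F^{\Omega}|=1$ and $F^{\Omega}\in\mathrm{W}^{2}_{\infty}(\overline\Omega)$, and $e^{i\Lambda}$ has all three. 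So we may and do take $F^{\Omega}=e^{i\Lambda}$.

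Finally, since $\Lambda$ is continuous on $\overline\Omega$, the Dirichlet trace of $e^{i\Lambda}$ is its restriction to $\partial\Omega$, so $F^{\partial\Omega}=\gamma_{\mathrm D}e^{i\Lambda}=e^{i\gamma_{\mathrm D}\Lambda}=e^{i\lambda}$ with $\lambda=\gamma_{\mathrm D}\Lambda$ real; hence $\mathcal F_{U}=(e^{i\lambda})^{-1}Ue^{i\lambda}=e^{-i\lambda}Ue^{i\lambda}$ as a unitary multiplication operator on $N^{\frac{1}{2}}(\partial\Omega)$ (for the special inner product fixed in Section~\ref{cap7}, $M_{e^{i\lambda}}$ is unitary). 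Substituting $F^{\Omega}=e^{i\Lambda}$ and $\mathcal F_{U}=e^{-i\lambda}Ue^{i\lambda}$ into the conclusion of Theorem~\ref{gauge} gives exactly $H^{A,z}_{U}(e^{i\Lambda}u)=e^{i\Lambda}H^{B,z}_{(e^{-i\lambda}Ue^{i\lambda})}u$ for all $u\in\mathrm{Dom}\,H^{B,z}_{(e^{-i\lambda}Ue^{i\lambda})}=e^{-i\Lambda}\mathrm{Dom}\,H^{A,z}_{U}$, which is the claim. I do not expect any real obstacle here: the only steps needing (minor) care are the independence of the conclusion from the base point $x_{0}$ and the elementary fact that the Dirichlet trace commutes with the exponential for a continuous $\Lambda$.
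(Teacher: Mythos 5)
Your proposal is correct and follows exactly the route the paper intends: the paper offers no written proof, declaring the corollary ``a direct consequence of Theorem~\ref{gauge}'', and your argument supplies precisely the missing verifications (that $B=A+\nabla\Lambda$ implies gauge equivalence with all periods zero, that $F^{\Omega}=e^{-i\Lambda(x_0)}e^{i\Lambda}$ so the unimodular constant drops out of the conjugations, and that $F^{\partial\Omega}=e^{i\lambda}$). No gaps.
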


\begin{remark}\label{remarkGaugeClassic}
One says that two magnetic potentials~$A$ and~$B$ are {\it classical gauge equivalent} if for each $x\in\Omega$ there is a smooth function~$\Lambda_x$, defined in a neighborhood of~$x$, so that $B=A+\nabla\Lambda_x$ holds in that neighborhood. This implies that~$A$ and~$B$ generate the same magnetic field, but it does not guarantee that such~$A$ and~$B$ are gauge equivalent in the sense of Definition~\ref{defGaugeEq}. This distinction is at the heart of the Aharonov-Bohm effect.
\end{remark}

\begin{theorem}\label{Helffer}
Let~$\Omega$ be a bounded connected Lipschitz domain and a vector field~$A$ with components in $\mathrm{W}^{1}_{\infty}(\overline\Omega)$. Recall that  $H^{0}_{D}=-\triangle_{D}$  (the Dirichlet Laplacian), and let  $\lambda_{0}$ and $\lambda_{A,0}$ be the first (lowest) eigenvalues of $-\triangle_{D}$ and~$H^{A}_{\mathrm D}$, respectively. Then, the following statements are equivalent:

i) $\lambda_{0}=\lambda_{A,0}$.

ii) $A$ is  gauge equivalent to $0$.

iii) $H^{A}_{\mathrm D}$ is unitarily equivalent to $-\triangle_{D}$.
\end{theorem}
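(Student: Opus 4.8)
The plan is to prove the equivalences by running the cycle \mbox{(ii)$\Rightarrow$(iii)$\Rightarrow$(i)$\Rightarrow$(ii)}, the first two implications being short and the third carrying the real content. For (ii)$\Rightarrow$(iii), suppose $A$ is gauge equivalent to $0$. Then the function $F^{\Omega}=F^{\Omega}_{A,0}$ of~\eqref{eqFOmega} is well defined, satisfies $|F^{\Omega}|=1$ and, by the Lemma preceding Theorem~\ref{gauge} (with $B=0$), $\nabla F^{\Omega}=-iAF^{\Omega}$; in particular $F^{\Omega}\in\mathrm{W}^{1}_{\infty}(\overline\Omega)$, so multiplication by $F^{\Omega}$ is a unitary operator on $\mathrm{L}^{2}(\Omega)$ that maps $H^{1}_{0}(\Omega)$ onto itself. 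Repeating the computation in the first paragraph of the proof of Theorem~\ref{gauge} one gets $\nabla_{A}(F^{\Omega}u)=F^{\Omega}\nabla u$ and hence $\Phi_{0,\mathrm D}(u,v)=\Phi_{A,\mathrm D}(F^{\Omega}u,F^{\Omega}v)$ for all $u,v\in H^{1}_{0}(\Omega)$; since $-\triangle_{D}$ and $H^{A}_{\mathrm D}$ are the self-adjoint operators associated with $\Phi_{0,\mathrm D}$ and $\Phi_{A,\mathrm D}$ (see the proof of Proposition~\ref{opdirneu}), this yields $-\triangle_{D}=(F^{\Omega})^{-1}H^{A}_{\mathrm D}F^{\Omega}$, so the two operators are unitarily equivalent.

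For (iii)$\Rightarrow$(i): unitarily equivalent self-adjoint operators have the same spectrum; since $\Omega$ is bounded, both operators have purely discrete spectrum (Corollary following Proposition~\ref{opdirneu}), so their lowest eigenvalues coincide. The work is (i)$\Rightarrow$(ii), and the tool is the diamagnetic (Kato) inequality: for $u\in H^{1}_{0}(\Omega)$ one has $|u|\in H^{1}_{0}(\Omega)$ with $\bigl|\nabla|u|\bigr|\le|\nabla_{A}u|$ a.e., hence $\Phi_{0,\mathrm D}(|u|,|u|)\le\Phi_{A,\mathrm D}(u,u)$; by the Rayleigh quotient characterization this already gives $\lambda_{0}\le\lambda_{A,0}$ unconditionally. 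Assume now $\lambda_{0}=\lambda_{A,0}$ and let $\psi$ be an $\mathrm{L}^{2}$-normalized eigenfunction of $H^{A}_{\mathrm D}$ for $\lambda_{A,0}$ (it exists since the spectrum is discrete). Then
\[
\lambda_{A,0}=\Phi_{A,\mathrm D}(\psi,\psi)\ge\Phi_{0,\mathrm D}(|\psi|,|\psi|)\ge\lambda_{0}\,\|\,|\psi|\,\|_{\mathrm{L}^{2}(\Omega)}^{2}=\lambda_{0}=\lambda_{A,0},
\]
so equality holds throughout: $|\psi|$ is a ground state of $-\triangle_{D}$ and $\Phi_{A,\mathrm D}(\psi,\psi)=\Phi_{0,\mathrm D}(|\psi|,|\psi|)$, i.e. the diamagnetic inequality is saturated by $\psi$.

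Since $\Omega$ is bounded, open and connected, the ground state of $-\triangle_{D}$ is simple and may be chosen strictly positive in $\Omega$ (positivity-improving property of $e^{t\triangle_{D}}$), so $|\psi|>0$ in $\Omega$; by interior elliptic regularity, $\psi$ and $|\psi|$ are $C^{1}$ in the interior, so $g:=\psi/|\psi|$ is a well-defined function on $\Omega$ with $|g|\equiv1$. Writing locally $g=e^{i\theta}$ with $\theta$ real and $C^{1}$, a direct computation gives $\nabla_{A}\psi=\bigl[\nabla|\psi|+i|\psi|(\nabla\theta+A)\bigr]g$, hence $|\nabla_{A}\psi|^{2}=\bigl|\nabla|\psi|\bigr|^{2}+|\psi|^{2}|\nabla\theta+A|^{2}$ pointwise; integrating and using the saturation $\Phi_{A,\mathrm D}(\psi,\psi)=\Phi_{0,\mathrm D}(|\psi|,|\psi|)$ forces $|\psi|^{2}|\nabla\theta+A|^{2}=0$ a.e., i.e. $A=-\nabla\theta$ locally in $\Omega$. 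Therefore $\omega_{A}=-\mathrm d\theta$ locally, so $\mathrm d\omega_{A}=0$, which is condition~i) of Definition~\ref{defGaugeEq} with $B=0$; and for any closed path $\gamma$ in $\Omega$ the single-valued $S^{1}$-valued function $g$ has an integer winding number $m_{\gamma}$ along $\gamma$, so $\int_{\gamma}\omega_{A}=-\int_{\gamma}\mathrm d\theta=-2\pi m_{\gamma}$, which is condition~ii) with $n_{\gamma}=-m_{\gamma}$. Hence $A$ is gauge equivalent to $0$, closing the cycle.

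I expect the delicate points to be concentrated in (i)$\Rightarrow$(ii): establishing the diamagnetic inequality at the form level on a Lipschitz domain (the facts that $u\in H^{1}_{0}(\Omega)$ implies $|u|\in H^{1}_{0}(\Omega)$ together with the pointwise gradient bound and the characterization of its equality case), and securing enough interior regularity for $\psi$ to justify that $g=\psi/|\psi|$ admits a local $C^{1}$ lift $g=e^{i\theta}$ and that the identity $|\nabla_{A}\psi|^{2}=\bigl|\nabla|\psi|\bigr|^{2}+|\psi|^{2}|\nabla\theta+A|^{2}$ holds a.e. Once these are in place, the extraction of the integer periods of $\omega_{A}$ from the single-valuedness of $g$, and hence the verification of Definition~\ref{defGaugeEq}, is routine.
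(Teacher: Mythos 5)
Your proof is correct in outline, but it takes a genuinely different route from the paper's. The paper does not give a self-contained argument at all: it defers entirely to Helffer's Proposition~1.1 in~\cite{EFAB}, recording only the ground-state-representation identity
\[
\Big\|\Big(\nabla-i{A}-\frac{\nabla u_{0}}{u_{0}}\Big)\varphi\Big\|_{\mathrm{L}^{2}(\Omega)}^{2}=((H^{A}-\lambda_{0})\varphi,\varphi)_{\mathrm{L}^{2}(\Omega)},\qquad \varphi\in \mathrm{C}^{\infty}_{0}(\Omega),
\]
where $u_{0}>0$ is the Dirichlet ground state. From this identity one reads off $\lambda_{A,0}\ge\lambda_{0}$ immediately, and in the equality case the minimizer $\psi$ satisfies the first-order equation $\nabla(\psi/u_{0})=-iA\,(\psi/u_{0})$, so the unimodular function $\sigma=\psi/u_{0}$ (after normalization) is locally Lipschitz by construction and yields $\omega_{A}=i\bar\sigma\,\mathrm d\sigma$ with periods in $2\pi\mathbb Z$ --- no interior elliptic regularity for $\psi$ is needed. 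You instead prove $\lambda_{0}\le\lambda_{A,0}$ via the diamagnetic inequality and then analyze its saturation; this is a perfectly valid and more self-contained alternative, at the price of the two technical inputs you correctly flag (the $H^{1}_{0}$ lattice property plus the pointwise bound $|\nabla|u||\le|\nabla_{A}u|$, and enough regularity of $\psi$ to define and lift $g=\psi/|\psi|$). Your claim that $\psi\in C^{1}$ in the interior does hold (an $L^{p}$ bootstrap using $A\in(\mathrm{W}^{1}_{\infty})^{n}$ gives $\psi\in W^{2,p}_{\mathrm{loc}}$ for all $p<\infty$, hence $C^{1,\alpha}_{\mathrm{loc}}$), but it deserves a sentence of justification since $H^{2}_{\mathrm{loc}}$ alone does not embed in $C^{1}$ for $n\ge2$; alternatively you can avoid it entirely by exploiting the a.e.\ equality case of $|\mathrm{Re}(\bar\sigma\nabla_{A}\psi)|\le|\nabla_{A}\psi|$, which gives $\nabla_{A}\psi=\sigma\nabla|\psi|$ and hence $\nabla\sigma=-iA\sigma$ distributionally, recovering exactly the endpoint of Helffer's argument. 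Your treatments of (ii)$\Rightarrow$(iii) and (iii)$\Rightarrow$(i) coincide with the mechanism the paper uses in Theorem~\ref{gauge} and are fine.
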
 

The proof of Theorem~\ref{Helffer} is identical to the proof of Proposition~1.1 in~\cite{EFAB}, although  our statement  is  slightly different. The differences are: 1)~Here $\Omega$ is supposed to be a bounded Lipschitz domain, not necessarily with a smooth boundary. 2)~Since we suppose that $\Omega$ is bounded, we do not need to add a scalar potential to~$H^{A}$,  that  diverges at  infinity,  to ensure that the resulting Dirichlet extension has discrete spectrum. 3)~The potential $A$ is  in $(\mathrm{W}^{1}_{\infty}(\overline\Omega))^{n}$ and is not smooth, as (in principle) assumed in~\cite{EFAB}. The main ingredient in this proof is the following identity
\[
\Big\|\Big(\nabla-i{A}-\frac{\nabla u_{0}}{u_{0}}\Big)\varphi\Big\|_{\mathrm{L}^{2}(\Omega)}^{2}=((H^{A}-\lambda_{0})\varphi,\varphi)_{\mathrm{L}^{2}(\Omega)},\quad \forall \varphi\in \textrm{C}^{\infty}_{0}(\Omega),
\]
where $u_{0}(x)>0$, for all $x\in \Omega$, is  an eigenvector associated with the first eigenvalue $\lambda_{0}$ of $-\Delta_{D}$, which is not degenerate. This identity remains valid under the hypothesis of Theorem~\ref{Helffer}. By applying Theorems~\ref{gauge} and~\ref{Helffer}, we will conclude:

\begin{theorem}\label{twii}
Let~$\Omega$ be a bounded connected and quasi-convex domain,    ${A}\in(\mathrm{W}^{1}_{\infty}(\overline\Omega))^{n}$. Let $\lambda_{0}$ and $\lambda_{A,0}$ be the first eigenvalue of  $-\triangle_{D}$ and~$H^{A}_{\mathrm D}$, respectively. Fix $z\in\mathbb{R}\setminus\sigma(H^{A}_{\mathrm D})$. Then the following statements are equivalent:

i) $\lambda_{0}=\lambda_{A,0}$.

ii) $A$ is  gauge equivalent to $0$.

iii) Let $F^{\Omega}=F^{\Omega}_{A,0}$ be given by~\eqref{eqFOmega}. Then for all unitary applications~$U$ on $N^{\frac{1}{2}}(\partial\Omega)$ one has, by using~\eqref{eqFUpartial},
\begin{equation}\label{eqGaugeEffect}
(F^{\Omega})^{-1}H^{A,z}_{U}F^{\Omega}\;= \;H^{0,z}_{\mathcal F_U}\,.
\end{equation}

iv) There exist unitary applications~$U$ and~$V$ on $N^{\frac{1}{2}}(\partial\Omega)$ such that 
\[
\mathcal J^{-1}H^{A,z}_{U}\mathcal J\;=\;H^{0,z}_{V},
\]  for some~$\mathcal J:\Omega\rightarrow \mathbb{C}$,  $\mathcal J \in\mathrm{W}^{2}_{\infty}(\overline\Omega)$, and with $|\mathcal{J}(x)|=1$ for all~$x\in\Omega$.
\end{theorem}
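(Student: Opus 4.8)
The plan is to run the cycle $(ii)\Rightarrow(iii)\Rightarrow(iv)\Rightarrow(ii)$ and to adjoin the equivalence $(i)\Leftrightarrow(ii)$, which is already Theorem~\ref{Helffer} (its first two statements are, verbatim, $(i)$ and $(ii)$ here, and a bounded connected quasi-convex domain is in particular a bounded connected Lipschitz domain). Along the way I note that since $H^A_{\mathrm D}$ and $-\triangle_{\mathrm D}$ are nonnegative one may always take $z=-1$, so the extensions $H^{0,z}_{\,\cdot}$ occurring in $(iii)$ and $(iv)$ are meaningful; moreover, as soon as $(ii)$ holds, Theorem~\ref{Helffer}$(iii)$ gives $\sigma(H^A_{\mathrm D})=\sigma(-\triangle_{\mathrm D})$, so the original admissible $z$ is fine as well. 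The implication $(ii)\Rightarrow(iii)$ is then immediate: apply Theorem~\ref{gauge} with the magnetic potential ``$B$'' equal to $0$, so that $F^\Omega=F^\Omega_{A,0}$ and Theorem~\ref{gauge} reads $H^{A,z}_{U}F^\Omega=F^\Omega H^{0,z}_{\mathcal F_U}$, i.e.\ $(F^\Omega)^{-1}H^{A,z}_{U}F^\Omega=H^{0,z}_{\mathcal F_U}$, which is precisely~\eqref{eqGaugeEffect}.

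Next, $(iii)\Rightarrow(iv)$ is trivial: choose $\mathcal J:=F^\Omega$, which belongs to $\mathrm W^2_\infty(\overline\Omega)$ and satisfies $|\mathcal J|\equiv1$ (as recorded before~\eqref{eqFU}), fix any unitary $U$ on $N^{\frac12}(\partial\Omega)$, and set $V:=\mathcal F_U$ from~\eqref{eqFUpartial}; since $|F^{\partial\Omega}|=1$ a.e.\ and the inner product on $N^{\frac12}(\partial\Omega)$ was chosen so that multiplication by a unimodular function is unitary, $V$ is unitary, and $(iii)$ is exactly $\mathcal J^{-1}H^{A,z}_{U}\mathcal J=H^{0,z}_{V}$. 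For the substantial step $(iv)\Rightarrow(ii)$, assume $\mathcal J^{-1}H^{A,z}_{U}\mathcal J=H^{0,z}_{V}$ with $\mathcal J\in\mathrm W^2_\infty(\overline\Omega)$, $|\mathcal J|\equiv1$, and put $C:=-i\,\overline{\mathcal J}\,\nabla\mathcal J$. Differentiating $|\mathcal J|^2\equiv1$ shows $\overline{\mathcal J}\nabla\mathcal J$ is purely imaginary, so $C$ is a \emph{real} vector field, and $C\in(\mathrm W^1_\infty(\overline\Omega))^n$ because $\mathcal J\in\mathrm W^2_\infty(\overline\Omega)$. A direct computation gives $(-i\nabla+A)(\mathcal J v)=\mathcal J\,(-i\nabla+A+C)v$, hence $H^A(\mathcal J v)=\mathcal J\,H^{A+C}v$ — first for smooth $v$, then, by density and distributional continuity, for all $v\in\mathrm L^2(\Omega)$. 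Consequently $\mathcal J^{-1}H^A_{\mathrm{max}}\mathcal J=H^{A+C}_{\mathrm{max}}$, and, using Remark~\ref{rem3} to see that $M_{\mathcal J}H^2_0(\Omega)=H^2_0(\Omega)$ together with the bounded case of Lemma~\ref{lemmaBerndtLaplac}, also $\mathcal J^{-1}H^A_{\mathrm{min}}\mathcal J=H^{A+C}_{\mathrm{min}}$. Therefore $T:=\mathcal J^{-1}H^{A,z}_{U}\mathcal J=H^{0,z}_{V}$ is a self-adjoint operator extending \emph{both} $H^{A+C}_{\mathrm{min}}$ and $H^0_{\mathrm{min}}$.

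Now restrict $T$ to $\mathrm C^\infty_0(\Omega)$, which lies in the common domain $\mathrm{Dom}\,H^{A+C}_{\mathrm{min}}=\mathrm{Dom}\,H^0_{\mathrm{min}}=H^2_0(\Omega)$: this yields $H^{A+C}\varphi=T\varphi=-\triangle\varphi$ for every $\varphi\in\mathrm C^\infty_0(\Omega)$, i.e.
\[
-2i(A+C)\cdot\nabla\varphi+\big(|A+C|^2-i\,\mathrm{div}(A+C)\big)\varphi=0\quad\text{in }\mathrm L^2(\Omega),\qquad\forall\,\varphi\in\mathrm C^\infty_0(\Omega).
\]
Taking $\varphi$ real-valued and reading off the real part pointwise a.e.\ (recall $A+C$ is real) leaves $|A+C|^2\varphi=0$ a.e., and localizing $\varphi$ at an arbitrary point of $\Omega$ forces $A+C\equiv0$, that is $\nabla\mathcal J=-iA\,\mathcal J$ on $\Omega$. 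Finally, from $\nabla\mathcal J=-iA\mathcal J$ with $\mathcal J\in C^1(\overline\Omega)$ single-valued: differentiating once more and using that the mixed second derivatives of $\mathcal J$ agree gives $\partial_kA_j=\partial_jA_k$, i.e.\ $\mathrm d\omega_A=0$; and for every piecewise-$\mathrm C^1$ closed path $\gamma$ in $\Omega$, integrating the linear ODE $\frac{d}{dt}\mathcal J(\gamma(t))=-i\big(A(\gamma(t))\cdot\gamma'(t)\big)\mathcal J(\gamma(t))$ and using $\mathcal J(\gamma(0))=\mathcal J(\gamma(1))$ forces $\int_\gamma\omega_A\in2\pi\mathbb Z$. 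By Definition~\ref{defGaugeEq} this is exactly the statement that $A$ is gauge equivalent to $0$, namely $(ii)$, closing the cycle.

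The only genuinely delicate point is the step $(iv)\Rightarrow(ii)$, and inside it the claim that $\mathcal J^{-1}H^{A,z}_U\mathcal J$ really contains $H^{A+C}_{\mathrm{min}}$: this rests on the distributional conjugation identity above together with $M_{\mathcal J}H^2_0(\Omega)=H^2_0(\Omega)$ (Remark~\ref{rem3}) and the bounded case of Lemma~\ref{lemmaBerndtLaplac}; once that is in place, the real-part test against $\mathrm C^\infty_0(\Omega)$ functions immediately gives $A+C\equiv0$, and the flux computation is routine. All the other implications are either cited directly from Theorems~\ref{gauge} and~\ref{Helffer} or verified in a line.
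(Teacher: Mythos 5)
Your proposal is correct and follows essentially the same route as the paper: $(i)\Leftrightarrow(ii)$ via Theorem~\ref{Helffer}, $(ii)\Rightarrow(iii)$ via Theorem~\ref{gauge}, $(iii)\Rightarrow(iv)$ trivially, and $(iv)\Rightarrow(ii)$ by restricting the conjugated operator to $H^{2}_{0}(\Omega)$ (using Remark~\ref{rem3} and Lemma~\ref{lemmaBerndtLaplac}) and comparing the differential expressions to obtain $\nabla\mathcal J=-iA\,\mathcal J$, whence closedness of $\omega_{A}$ and $2\pi\mathbb Z$-valued periods along closed paths. Your extraction of $A=i\overline{\mathcal J}\nabla\mathcal J$ via the real vector field $C$ and real-valued test functions is just a slightly more careful packaging of the paper's coefficient comparison, so the two arguments coincide in substance.
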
 
\begin{proof}
The equivalence of~i) and~ii) is a consequence of Theorem~\ref{Helffer}. By Theorem~\ref{gauge}, ii) implies~iii), and the fact that iii) implies~iv) is obvious.
To see that~iv) implies~ii), note that  by Remark~\ref{rem3}, the operator of multiplication by $J$ maps $H^{2}_{0}(\Omega)$ into itself, since  $J\in \textrm{W}^{2}_{\infty}(\overline\Omega)$. By item~iv), $\mathcal{J}^{-1}H^{A,z}_{U}\mathcal{J}=H^{0,z}_{V}$ and we have
\begin{equation}\label{vimpii}
H^{0}_{\mathrm{mim}}u=H^{0,z}_{V}u=\mathcal{J}^{-1}H^{A,z}_{U}\mathcal{J}u=\mathcal{J}^{-1}H^{A,z}_{\mathrm{mim}}\mathcal{J}u,\qquad \forall u\in H^{2}_{0}(\Omega);
\end{equation}
therefore, $\mathcal{J}^{-1}H^{A}_{\mathrm{mim}}\mathcal{J}=H^{0}_{\mathrm{mim}}$, since, by  Lemma~\ref{lemmaBerndtLaplac}, $\mathrm{Dom}\, H^{A}_{\mathrm{mim}}=\mathrm{Dom}\, H^{0}_{\mathrm{mim}}= H^{2}_{0}(\Omega)$.  Denote $G=\mathcal{J}^{-1}$; so $G:\Omega\rightarrow S^{1}\subset \mathbb{R}^{2}$, and $G$ maps $H^{2}_{0}(\Omega)$ into itself  and  relation~\eqref{vimpii} is equivalent to
\[
-\triangle(Gu)=GH^{A}u,\quad \forall u\in H^{2}_{0}(\Omega);
\]
 in particular,
\[
-(G\triangle u+u\triangle G+2 \nabla G\cdot\nabla u)=G\{-\triangle u-2 i A \cdot \nabla u+(|A|^{2}-i\,\mathrm{div}A)u\},\quad \forall u\in \textrm{C}^{\infty}_{0}(\Omega),
\]
thus, 
\[
u\{\triangle G-(|A|^{2}-i\,\mathrm{div}A)G\}=\nabla u \cdot(-2\nabla G-2iG A),\quad \forall u\in \mathrm{C}^{\infty}_{0}(\Omega),
\]
 in particular, it follows that $\frac{\nabla G}{iG}=-A$, or, equivalently, $\omega_{A}=-G^{*}(\frac{dz}{iz})$ (here~$G^*$ is the pullback operation). In particular, $\omega_{A}$ is closed, and for each closed path $\gamma$ in~$\Omega$, we have
\[
\int_{\gamma}\omega_{A}=\int_{\gamma}-G^{*}\big(\frac{dz}{iz}\big)=\int_{G\gamma}-\frac{dz}{iz}=2 \pi n
\]
for some $n\in \mathbb{Z}$, and~ii) follows.
\end{proof}

By Theorem~\ref{twii}~iv), if just one extension of~$H^A_{\mathrm{min}}$ is unitarily equivalent (through the multiplication by a function~$\mathcal J$) to a realization with zero magnetic potential, then  the same occurs for all self-adjoint realizations of~$H^A_{\mathrm{min}}$, that is, the unitarily equivalences are always implemented by gauge transformations. And this is independent of the spectral type of each realization. Although these remarks are physically expected, it seems there was no mathematical proof in the literature yet; and here for (bounded) quasi-convex domains.  

For the Aharonov-Bohm effect we need a multiply-connected domain~$\Omega\subset \mathbb R^2$  and $\mathrm {rot}\,  A=0$ in~$\Omega$ (for simplicity we restrict the discussion to the plane);  for instance, a quasi-convex annulus (the inner and outer border given by simple closed curves; the inner border represents the solenoid and the outer one could be a circle); then Theorem~\ref{twii} says that if~$A$ gives no contribution in case of some boundary condition, or simply if the first Dirichlet eigenvalue coincides with that of the Laplacian (item~i) in the theorem), then~$A$ gives no contribution for all self-adjoint extensions of~$H^A_{\mathrm{min}}$.

\begin{remark}
Assume that~$A$ is such that the conclusions of Theorem~\ref{twii} apply, and consider the extensions parametrized by the unitary operators $U_1=-\textbf 1$ (that is the Dirichlet extension, in fact) and $U_2=\textbf 1$  on $N^{\frac{1}{2}}(\partial\Omega)$. In such cases, by~\eqref{eqGaugeEffect},  $H^{0,z}_{\mathcal F_{U_1}}=H^{0,z}_{{U_1}}$ and $H^{0,z}_{\mathcal F_{U_2}}=H^{0,z}_{U_2}$ also carries the same respective boundary conditions. But this direct correspondence (i.e., the same boundary conditions for unitarily equivalent operators in items~iii) and~iv) of the theorem) is not expected for all extensions; it is enough to pick a~$U\ne \mathcal F_U$.
\end{remark}

\end{document}